\newcommand{\cT}{\mathcal{T}}
\newcommand{\N}{\mathcal{N}}
\newcommand{\cO}{\mathcal{O}}
\newcommand{\R}{\mathbb{R}}
\newcommand{\rand}{\textsf{Rand}}
\newcommand{\lowpair}{\textsf{LowPair}}
\newcommand{\trivialrand}{\textsf{TrivialRand}}
\newcommand{\ML}{\textsf{ML}}
\newcommand{\trivML}{\textsf{TrivialML}}
\newcommand{\featimp}{\textsf{FeatImp}}
\newcommand{\OPT}{\textsf{OPT}}
\newcommand{\cherries}{\textsf{cherryfeatures}}
\newcommand{\Treechild}{\textsf{TreeChild}}
\newcommand{\Hybro}{\textsf{Hybroscale}}
\newcommand{\cherrylist}{\textsf{cherrylist}}
\newtheorem{definition}{Definition}
\newtheorem{theorem}{Theorem}
\newtheorem{lemma}{Lemma}
\newtheorem{example}{Example}
\newtheorem{corollary}{Corollary}
\title{Constructing Phylogenetic Networks via Cherry Picking and Machine Learning\thanks{The work in this paper is supported by: the Netherlands Organisation for Scientific Research (NWO) through project OCENW.GROOT.2019.015 ``Optimization for and with Machine Learning (OPTIMAL)'' and Gravitation-grant NETWORKS-024.002.003; the PANGAIA and ALPACA projects that have received funding from the European Union’s Horizon 2020 research and innovation programme under the Marie Skłodowska-Curie grant agreements No 872539 and 956229, respectively; and the MUR - FSE REACT EU - PON R\&I 2014-2020.}}
\author[1]{Giulia Bernardini}
\author[2]{Leo van Iersel}
\author[2]{Esther Julien}
\author[3,4,5]{Leen Stougie}
\affil[1]{University of Trieste, Trieste, Italy}
\affil[2]{Delft Institute of Applied Mathematics, Delft, The Netherlands}
\affil[3]{CWI, Amsterdam, The Netherlands}
\affil[4]{Vrije Universiteit, Amsterdam, The Netherlands}
\affil[5]{INRIA-Erable, France}
\date{}
\begin{document}
\maketitle

\begin{abstract} 
Combining a set of phylogenetic trees into a single phylogenetic network that explains all of them is a fundamental challenge in evolutionary studies. Existing methods are computationally expensive and can either handle only small numbers of phylogenetic trees or are limited to severely restricted classes of networks.
In this paper, we apply the recently-introduced theoretical framework of cherry picking to design a class of efficient heuristics that are guaranteed to produce a network containing each of the input trees, for datasets consisting of binary trees. Some of the heuristics in this framework are based on the design and training of a machine learning model that captures essential information on the structure of the input trees and guides the algorithms towards better solutions.
We also propose simple and fast randomised heuristics that prove to be very effective when run multiple times.

Unlike the existing exact methods, our heuristics are applicable to datasets of practical size, and the experimental study we conducted on both simulated and real data shows that these solutions are qualitatively good, always within some small constant factor from the optimum.
Moreover, our machine-learned heuristics are one of the first applications of machine learning to phylogenetics and show its promise.
\end{abstract}

\section{Introduction}
\label{sec:intro}
Phylogenetic networks describe the evolutionary relationships between different objects: for example, genes, genomes, or species. 
One of the first and most natural approaches to constructing phylogenetic networks is to build a network from a set of gene trees. In the absence of incomplete lineage sorting, the constructed network is naturally required to ``display'', or embed, each of the gene trees. In addition, following the parsimony principle, a network assuming a minimum number of reticulate evolutionary events (like hybridization or lateral gene transfer) is often sought. Unfortunately, the associated computational problem, called \textsc{Hybridization}, is NP-hard even for two 
binary input trees~\cite{bordewich2007computing}, and indeed existing solution methods do not scale well with problem size.

For a long time, research on this topic was mostly restricted to inputs consisting of two trees. Proposed algorithms for multiple trees were either completely impractical or ran in reasonable time only for very small numbers of input trees. This situation changed drastically with the introduction of so-called cherry-picking sequences~\cite{linzcherrypicking}.
This theoretical setup opened the door to solving instances consisting of many input trees like most practical datasets have. Indeed, a recent paper showed that this technique can be used to solve instances with up to~$100$ input trees to optimality~\cite{van2019practical}, although it was restricted to binary trees all having the same leaf set and to so-called ``tree-child'' networks. Moreover, its running time has a (strong) exponential dependence on the number of reticulate events.

In this paper, we show significant progress towards a fully practical method by developing 
a heuristic framework based on cherry picking comprising very fast randomised heuristics and other slower but more accurate heuristics guided by machine learning.
Admittedly, our methods are not yet widely applicable since they are still restricted to binary trees. 
However, our set-up is made in such a way that it  may be extendable to general trees.  

Despite their limitations, we see our current methods already as a breakthrough as they  are not restricted to tree-child networks and scale well with the number of trees, the number of taxa and the number of reticulations. In fact, we experimentally show that our heuristics can easily handle sets of 100 trees in a reasonable time: the slowest machine-learned method takes 4 minutes on average for sets consisting of 100 trees with 100 leaves each, while the faster, randomised heuristics already find feasible solutions in 2 seconds for the same instances. As the running time of the fastest heuristic depends at most quadratically on the number of input trees, linearly on the number of taxa, and linearly on the output number of reticulations, we expect it to be able to solve much larger instances still in a reasonable amount of time.

In addition, in contrast with the existing algorithms, our methods can be applied to trees with different leaf sets, although they have not been specifically optimized for this kind of input. Indeed, we experimentally assessed that our methods give qualitatively good results only when the leaf sets of the input trees have small differences in percentage (up to 5-15\%); when the differences are larger, they return feasible solutions that are far from the optimum.

Some of the heuristics we present are among the first applications of machine learning in phylogenetics and show its promise. In particular, we show that crucial features of the networks generated in our simulation study can be identified with very high test accuracy ($99.8\%$) purely based on the trees displayed by the networks. 

It is important to note at this point that no method is able to reconstruct any specific network from displayed trees as networks are, in general, not uniquely determined by the trees they display~\cite{pardi}.  In addition, in some applications, a phenomenon called ``incomplete lineage sorting'' can cause gene trees that are not displayed by the species network~\cite{yu2011coalescent}, and hence our methods, and other methods based on the Hybridization problem, are not (directly) applicable to such data.

We focus on \emph{orchard} networks (also called \emph{cherry picking} networks), which are precisely those networks that can be drawn as a tree with additional horizontal arcs~\cite{van2021orchard}. Such horizontal arcs can for example correspond to lateral gene transfer (LGT), hybridization and recombination events. Orchard networks are broadly applicable:
in particular, the orchard network class is much bigger than the class of tree-child networks, to which the most efficient existing methods are limited~\cite{hybroscale}.

\paragraph*{Related work.}
Previous practical algorithms for \textsc{Hybridization} include PIRN~\cite{PIRN}, PIRNs~\cite{PIRNs} and \Hybro{}~\cite{hybroscale}, exact methods that are only applicable to (very) small numbers of trees and/or to trees that can be combined into a network with a (very) small reticulation number. Other methods such as \textsc{PhyloNet}~\cite{phylonet} and \textsc{PhyloNetworks}~\cite{solis2017phylonetworks} also construct networks from trees but have different premises and use completely different models.

The theoretical framework of cherry picking was introduced in~\cite{temporalcherrypicking} (for the restricted class of temporal networks) and~\cite{linzcherrypicking} (for the class of tree-child networks) and was later turned into algorithms for reconstructing tree-child~\cite{van2019practical} and temporal~\cite{borst2022new} networks. These methods can handle instances containing many trees but do not scale well with the number of reticulations, due to an exponential dependence. 
The class of orchard networks, which is based on cherry picking, was introduced in~\cite{semple2021trinets} and independently (as cherry-picking networks) in~\cite{janssen2021cherry}, although their practical relevance as trees with added horizontal edges was only discovered later~\cite{van2021orchard}.

The applicability of machine-learning techniques to phylogenetic problems has not yet been fully explored, and to the best of our knowledge existing work is mainly limited to phylogenetic tree inference~\cite{azouri2021harnessing,DBLP:conf/icmlc2/ZhuC21} and to testing evolutionary hypotheses~\cite{kumar2021evolutionary}.

\subparagraph{Our contributions.}
We introduce  \textsc{Cherry Picking Heuristics (CPH)}, a class of heuristics to combine a set of binary phylogenetic trees into a single binary phylogenetic network based on cherry picking. 
We define and analyse several heuristics in the CPH class, all of which are guaranteed to produce feasible solutions to \textsc{Hybridization} and all of which can handle instances of practical size (we run experiments on tree sets of up to 100 trees with up to 100 leaves which were processed in on average 4 minutes by our slowest heuristic).

Two of the methods we propose are simple but effective randomised heuristics that proved to be extremely fast and to produce good solutions when run multiple times.
The main contribution of this paper consists in a
machine-learning model that potentially captures essential information about the structure of the input set of trees. 
We trained the model on different extensive sets of synthetically generated data and applied it to guide our algorithms towards better solutions.
Experimentally, we show that the two machine-learned heuristics we design yield good results when applied to both synthetically generated and real data.

We also analyse our machine-learning model to identify the most relevant features and design a non-learned heuristic that is guided by those features only. Our experiments show that this heuristic leads to reasonably good results without the need to train a model. This result is interesting per se as it is an example of how machine learning can be used to guide the design of classical algorithms,  which are not biased towards certain training data. 

A preliminary version of this work appeared in~\cite{DBLP:conf/wabi/BernardiniIJS22}. Compared to the preliminary version, we have added the following material: (i), we defined a new non-learned heuristic based on important features and experimentally tested it (Section~\ref{sec:new_heuristic}); (ii), we extended the experimental study to data generated from non-orchard networks (Section~\ref{exp:ZODS}), data generated from a class of networks for which the optimum number of reticulations is known (Section~\ref{exp:normal}) and to input trees with different leaf sets (Section~\ref{exp:missing_leaves}); and (iii), we provided a formal analysis of the time complexity of all our methods (Section~\ref{sub:time_complexity}) and conducted experiments on their scalability (Section~\ref{exp:scalability}).

\section{Preliminaries}
\label{sec:preliminaries}

A \emph{phylogenetic network} $N=(V,E,X)$ on a set of taxa $X$ is a directed acyclic graph $(V,E)$ with a single \emph{root} with in-degree 0 and out-degree 1, and the other nodes with either (i) in-degree 1 and out-degree $k>1$ (\emph{tree nodes}); (ii) in-degree $k>1$ and out-degree 1  (\emph{reticulations}); or (iii) in-degree 1 and out-degree 0 (\emph{leaves}).
The leaves of $N$ are biunivocally labelled by $X$. A surjective map $\ell:E\rightarrow \mathbb{R}^{\ge 0}$ may assign a nonnegative \emph{branch length} to each edge of $N$.
We will denote by $[1,n]$ the set of integers $\{1,2,...,n\}$.
Throughout this paper, we will only consider binary networks (with $k=2$), and we will identify the leaves with their labels.
We will also often drop the term ``phylogenetic'', as all the networks considered in this paper are phylogenetic networks.
The \emph{reticulation number} $r(N)$ of a network $N$ is 
\(\sum_{v\in V}\max\left(0,d^{-}(v)-1\right),\)
where $d^-(v)$ is the in-degree of $v$. A network $T$ with $r(T)=0$ is a \emph{phylogenetic tree}.
It is easy to verify that binary networks with $r(N)$ reticulations have $|X|+r(N)-1$ tree nodes.

\subparagraph{Cherry-picking.}
We denote by $\N$ a set of networks and by $\cT$ a set of trees. An {\it ordered} pair of leaves $(x,y),~x\neq y$, is a \emph{cherry} in a network if $x$ and $y$ have the same parent; $(x,y)$ is a \emph{reticulated cherry} if the parent $p(x)$ of $x$ is a reticulation, and $p(y)$ is a tree node and a parent of $p(x)$ (see Figure~\ref{fig:cherry_picking_example}). A pair is \emph{reducible} if it is either a cherry or a reticulated cherry.
Notice that trees have cherries but no reticulated cherries.

\emph{Reducing} (or \emph{picking}) a cherry $(x,y)$ in a network $N$ (or in a tree) is the action of deleting $x$ and replacing the two edges $(p(p(x)),p(x))$ and $(p(x),y)$ with a single edge $(p(p(x)),y)$ (see Figure \ref{fig:cherry_picking}). If $N$ has branch lengths, the length of the new edge is $\ell(p(p(x)),y)=\ell(p(p(x)),p(x))+\ell(p(x),y)$. A reticulated cherry $(x,y)$ is reduced (picked) by deleting the edge $(p(y),p(x))$ and replacing the other edge $(z,p(x))$ incoming to $p(x)$, and the consecutive edge $(p(x),x)$, with a single edge $(z,x)$. The length of the new edge is $\ell(z,x)=\ell(z,p(x))+\ell(p(x),x)$ (if $N$ has branch lengths). Reducing a non-reducible pair has no effect on $N$. In all cases, the resulting network is denoted by~$N_{(x,y)}$: we
say that $(x, y)$ affects $N$ if $N\neq N_{(x, y)}$. 

\begin{figure}[h]
    \centering
        \subfloat[]{
        \includegraphics[height=2.7cm]{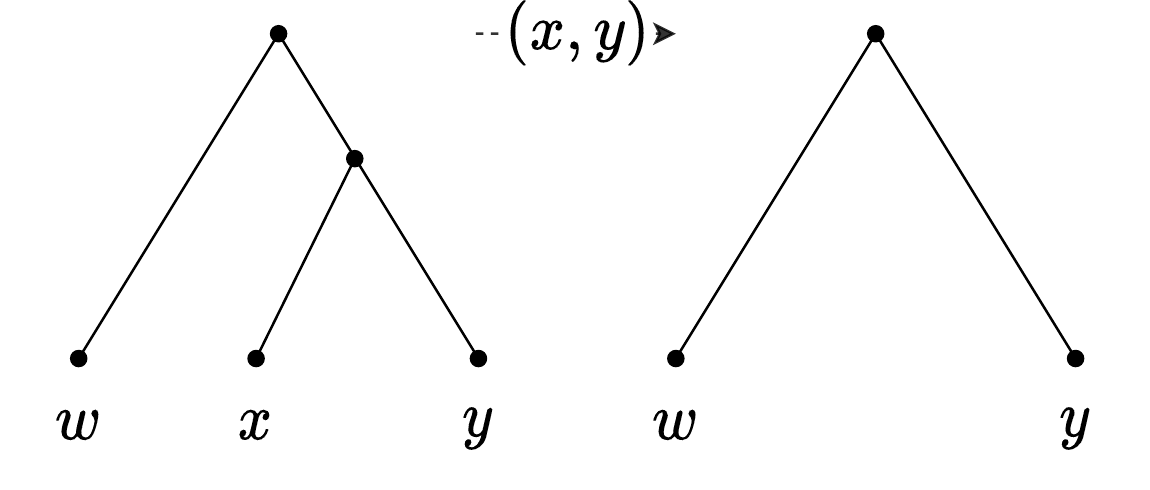} 
       \label{fig:cherry_picking}
       }
       \subfloat[]{
       \includegraphics[height=2.7cm]{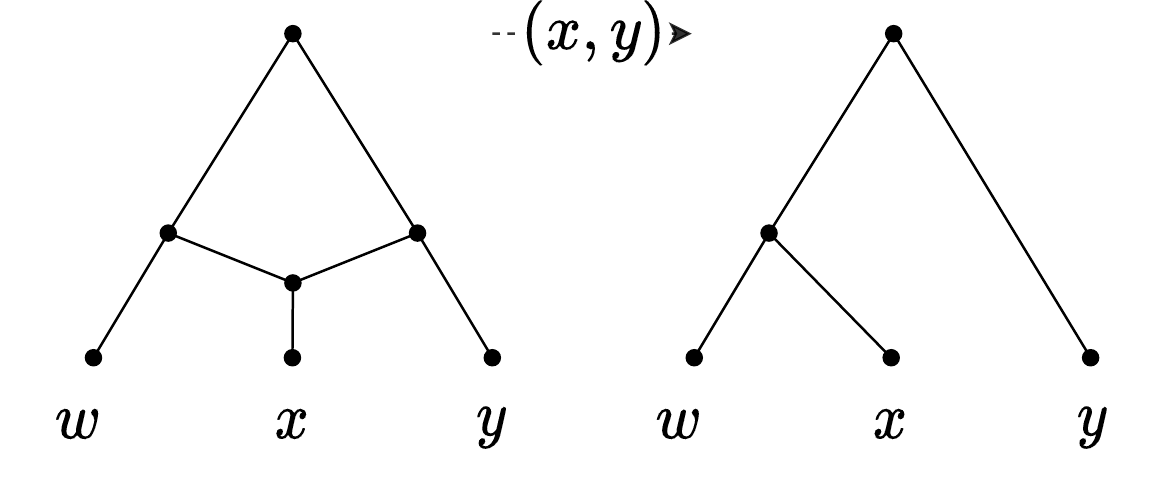} 
       \label{fig:retic_picking}}
    \caption{$(x, y)$ is picked in two different networks. In \textbf{(a)} $(x, y)$ is a cherry, and in \textbf{(b)} $(x, y)$ is a reticulated cherry. After picking, degree-two nodes are replaced by a single edge.} 
    \label{fig:cherry_picking_example}
\end{figure}

Any sequence $S=(x_1,y_1),\ldots,(x_n,y_n)$ of ordered leaf pairs, with $x_i\neq y_i$ for all $i$, is a \emph{partial cherry-picking sequence}; $S$ is a cherry-picking sequence (CPS) if, for each $i<n$, $y_i\in\{x_{i+1},\ldots,x_n,y_n\}$. 
Given a network~$N$ and a (partial) CPS~$S$, we denote by~$N_S$ the network obtained by reducing in $N$ each element of~$S$, in order. We denote $S\circ (x,y)$ the sequence obtained by appending pair $(x,y)$ at the end of $S$.
We say that~$S$ fully reduces $N$ if~$N_S$ consists of the root with a single leaf. $N$ is an \emph{orchard network} (ON) if there exists a CPS that fully reduces it, and it is \emph{tree-child} if every non-leaf node has at least one child that is a tree node or a leaf. A \emph{normal} network is a tree-child network such that, in addition, the two parents of a reticulation are always incomparable, i.e., one is not a descendant of the other. If $S$ fully reduces all $N\in\N$, we say that $S$ fully reduces $\N$. In particular, in this paper we will be interested in CPS which fully reduce a set of trees $\cT$ consisting of $|\cT|$ trees of total size $||\cT||$.
\subparagraph{Hybridization.}
The Hybridization problem can be thought of as the computational problem of combining a set of phylogenetic trees into a network with the smallest possible reticulation number, that is, to find a network that displays each of the input trees in the sense specified by Definition~\ref{def:subnetwork}, below. See Figure \ref{fig:hybridization} for an example. The definition describes not only what it means to display a tree but also to display another network, which will be useful later. 

\begin{figure}[h]
    \centering
       \subfloat[
       ]{ \includegraphics[height=2.7cm]{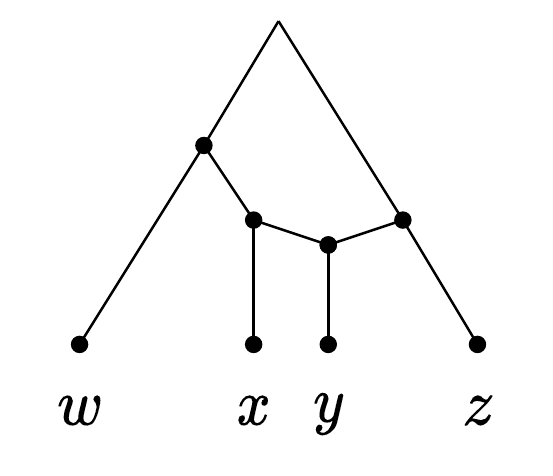} 
       \label{fig:network}}
        \subfloat[
        ]{
        \includegraphics[height=2.7cm]{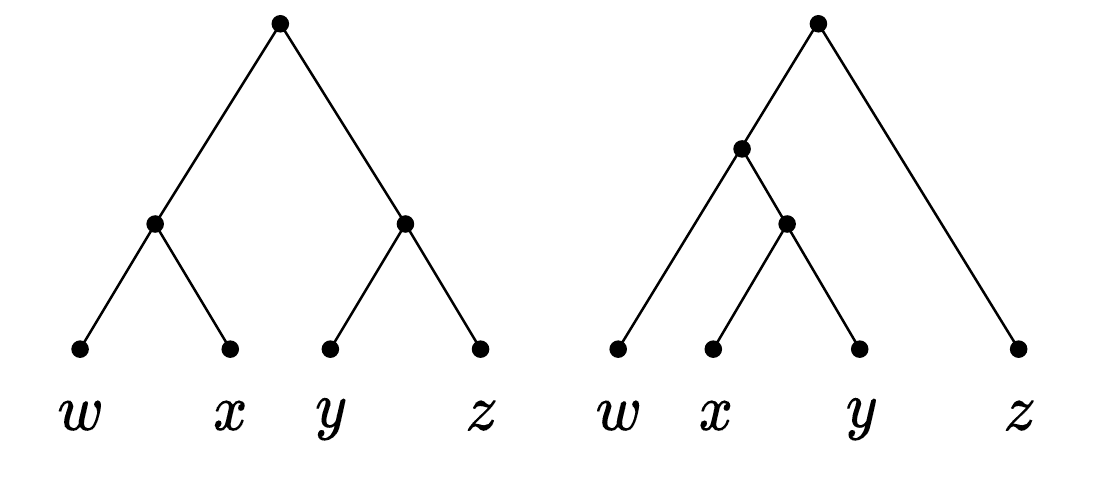} 
        \label{fig:forest}}
    \caption{The two trees in \textbf{(b)} are displayed in the network \textbf{(a)}.} 
    \label{fig:hybridization}
\end{figure}

\begin{definition}\label{def:subnetwork}
Let~$N=(V,E,X)$ and $N'=(V',E',X')$ be networks on the sets of taxa $X$ and ~$X'\subseteq X$, respectively.
The network $N'$ is \emph{displayed} in $N$ if there is an \emph{embedding} of $N'$ in $N$: an injective map of the nodes of~$N'$ to 
the nodes of~$N$, and of the edges of~$N'$ to edge-disjoint paths of~$N$, such that the mapping of the edges respects the mapping of the nodes, and the mapping of the nodes respects the labelling of the leaves.
\end{definition}
We call \emph{exhaustive} a tree displayed in $N=(V,E,X)$ with the whole $X$ as a leaf set.
Note that Definition~\ref{def:subnetwork} only involves the topologies of the networks, disregarding possible branch lengths. In the following problem definition, the input trees may or may not have branch lengths, and the output is a network without branch lengths. We allow branch lengths for the input because they will be useful for the machine-learned heuristics of Section~\ref{sec:ML}. 

\medskip
\fbox{
\parbox{.93\linewidth}{
{\sc Hybridization}\\
{\bf Input:} A set of phylogenetic trees $\mathcal{T}$ on a set of taxa $X$.\\
{\bf Output:} A network displaying $\mathcal{T}$ with minimum possible reticulation number.}
}

\section{Solving the Hybridization Problem via Cherry-Picking Sequences}\label{sec:Hybrid_via_cps}

We will develop heuristics for the Hybridization problem using cherry-picking sequences that fully reduce the input trees, leveraging the following result by Janssen and Murakami.

\begin{theorem}[\cite{janssen2021cherry}, Theorem~3]\label{the:BNBRedimpliesCon}
Let $N$ be a binary orchard network, and $N'$ a (not necessarily binary) orchard network on sets of taxa $X$ and $X'\subseteq X$, respectively. 
If a minimum-length CPS~$S$ that fully reduces $N$ also fully reduces $N'$, then $N'$ is displayed in $N$. 
\end{theorem}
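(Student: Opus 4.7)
My plan is to prove this by strong induction on the length $|S|$ of the minimum CPS. The key invariant I want to maintain is: if $S$ is a minimum-length CPS fully reducing $N$ and $S$ also fully reduces $N'$, then there is an embedding of $N'$ into $N$ witnessing that $N'$ is displayed in $N$.

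For the base case $|S|=0$, both $N$ and $N'$ are already fully reduced, hence each is a root with a single leaf carrying the same label, and the identity map is the required embedding. For the inductive step, write $S = (x,y) \circ S'$. The minimality of $S$ forces $(x,y)$ to affect $N$: otherwise $S'$ alone would fully reduce $N$, contradicting the assumption that $|S|$ is minimum. Moreover $S'$ is itself a minimum-length CPS fully reducing $N_{(x,y)}$, since any strictly shorter CPS $S''$ reducing $N_{(x,y)}$ would give $(x,y)\circ S''$ as a CPS shorter than $S$ for $N$. Since $S$ fully reduces $N'$, the remainder $S'$ fully reduces $N'_{(x,y)}$ as well. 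The inductive hypothesis then supplies an embedding $\varphi'$ of $N'_{(x,y)}$ into $N_{(x,y)}$.

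The main work is to lift $\varphi'$ to an embedding $\varphi$ of $N'$ into $N$. I would split into cases according to the type of $(x,y)$ in $N$. If $(x,y)$ is a cherry in $N$, then $N$ is recovered from $N_{(x,y)}$ by subdividing the edge incident to $y$ created by the reduction and attaching $x$ as a new leaf at the subdivision point. I then sub-case on $(x,y)$ in $N'$: if $(x,y)$ does not affect $N'$, then $N' = N'_{(x,y)}$ and $\varphi'$ extends after re-routing edges through the new subdivision point; if $(x,y)$ is a cherry or reticulated cherry in $N'$, I extend $\varphi'$ by mapping the leaf $x$ of $N'$ to the leaf $x$ of $N$ and the additional edges of $N'$ (including, for a reticulated cherry, the parent of $x$ and both its incoming edges) to suitable edge-disjoint paths in $N$. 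If $(x,y)$ is a reticulated cherry in $N$, then $N$ is obtained from $N_{(x,y)}$ by re-inserting a reticulation at $x$ together with the horizontal arc from the $y$-branch; again I split on the type of $(x,y)$ in $N'$ and extend $\varphi'$ using the newly available edges.

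The main obstacle, and the step requiring the most care, is the reticulated-cherry case when $(x,y)$ is also reticulated in $N'$: one must check that the two incoming edges at the reticulation of $N'$ can be routed to edge-disjoint paths in $N$ that terminate at the reticulation introduced by lifting, without violating either edge-disjointness of $\varphi'$ or the topology-preserving conditions of Definition~\ref{def:subnetwork}. The orchard property of both networks is crucial here, since it guarantees that the local neighbourhoods around $(x,y)$ have the predictable structure that makes the lifting possible; non-orchard structure could in principle obstruct the extension. Once this local surgery is verified in each of the finitely many configurations, the induction closes and the theorem follows.
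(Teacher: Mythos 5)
You should first note that the paper never proves this statement: it is imported verbatim as Theorem~3 of~\cite{janssen2021cherry}, so the only proof to compare against is the one in that source, which proceeds by induction along the sequence, rebuilding both networks by re-adding pairs in reverse order and maintaining an embedding via an ``adding a pair preserves display'' lemma (the same Lemma~10 of~\cite{janssen2021cherry} that this paper later invokes when proving Lemma~\ref{lem:ext_embed_tree}). Your outline---induct on $|S|$, embed the reduced networks, lift the embedding through one reduction step---is the mirror image of that plan, so the strategy is right; the genuine gaps are in exactly the two places where the content of the theorem lives.

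First, the minimality-transfer step is wrong as argued. You claim that a strictly shorter CPS $S''$ fully reducing $N_{(x,y)}$ would yield the shorter CPS $(x,y)\circ S''$ for $N$, but $(x,y)\circ S''$ need not be a CPS: the definition requires the second coordinate $y$ of the first pair to reappear later as a first coordinate or as the final second coordinate, and an arbitrary $S''$ provides no such guarantee. The standard repair is the characterization that a CPS fully reducing a network is of minimum length if and only if every pair affects the currently reduced network---but that characterization itself requires proof (deleting a non-affecting pair can destroy the CPS property, which is the same worry in reverse), and you tacitly need it elsewhere too: for instance, to preserve the hypothesis $X'\subseteq X$ after one step you must rule out the configuration where $(x,y)$ affects $N$ but not $N'$ while $x$ is a leaf of $N'$, and the only way to exclude it is to argue that no later pair of the minimal $S$ can mention $x$ once $x$ has left $N$, which again rests on the every-pair-affects property.

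Second, you explicitly defer the reticulated-cherry lifting case, but that case \emph{is} the theorem; ``the orchard property guarantees predictable local structure'' is a hope, not an argument. Concretely, your case analysis must also prove that certain type combinations are vacuous: if $(x,y)$ were a reticulated cherry of $N'$ but a plain cherry of $N$, no lift can exist at all, since the two in-edges of the reticulation $p(x)$ in $N'$ would have to map to edge-disjoint paths ending at a common node of $N$, forcing that node to have in-degree at least two. So you must show this situation cannot arise under the hypotheses---a nontrivial claim about how a minimum-length sequence for $N$ interacts with $N'$, and precisely what the lemmas of~\cite{janssen2021cherry} establish. Moreover, the statement allows $N'$ to be non-binary, while your surgery (suppressing $p(x)$, ``both its incoming edges'', unique re-insertion in the spirit of Lemma~\ref{lem:binary_reconstr}) presumes binary local pictures; in the non-binary setting picking a cherry need not suppress the parent and reticulations may have in-degree above two, so the configurations are more numerous than the ones you enumerate, and the binary reconstruction lemma you implicitly lean on is, as the paper itself warns, unavailable for $N'$. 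As it stands, the proposal is a sound plan whose decisive steps are asserted rather than proven.
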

Notice that {\sc Hybridization} remains NP-hard for binary orchard networks.
For binary networks we have the following lemma, a special case of \cite[Lemma 1]{janssen2021cherry}.
\begin{lemma}\label{lem:binary_reconstr}
Let $N$ be a binary network, and let $(x,y)$ be a reducible pair of $N$. Then reducing $(x,y)$ and then adding it back to $N_{(x,y)}$ results in $N$. 
\end{lemma}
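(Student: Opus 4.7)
The plan is to proceed by case analysis on whether the reducible pair $(x,y)$ is a cherry or a reticulated cherry in $N$, and in each case to exhibit an isomorphism between the network obtained by adding $(x,y)$ back to $N_{(x,y)}$ and the original network $N$. The inverse ``add back'' operation is the natural reversal of the reduction described in the preliminaries: if $x$ is no longer a leaf of $N_{(x,y)}$, we reintroduce it as part of a new cherry with $y$; if $x$ is still a leaf, we reintroduce a reticulated cherry above it. Binarity is what makes this inverse unambiguous, since each reduction subtracts exactly one node of each relevant type.

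First I would handle the cherry case. Let $p$ denote the common parent of $x$ and $y$ in $N$ and let $q$ be the parent of $p$. Reducing $(x,y)$ deletes $x$ together with the edge $(p,x)$, and suppresses the resulting degree-two node $p$, replacing $(q,p)$ and $(p,y)$ by a single edge $(q,y)$. Since $x$ is absent from $N_{(x,y)}$, adding $(x,y)$ back subdivides the edge entering $y$ with a new internal node $p'$, and attaches $x$ as a new leaf child of $p'$. The map that sends $p' \mapsto p$ and is the identity on all other nodes is clearly a graph isomorphism preserving leaf labels, so the result equals $N$.

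Next I would treat the reticulated cherry case. Let $r=p(x)$ be the reticulation parent of $x$, with parents $p(y)$ (a tree node, also a parent of $r$) and some other node $z$. Reducing deletes the edge $(p(y),r)$, which turns $r$ into an in-degree-one, out-degree-one node; suppressing $r$ replaces $(z,r)$ and $(r,x)$ by a single edge $(z,x)$, and if $p(y)$ also becomes degree-two as a result, it is suppressed analogously. Since $x$ is still a leaf of $N_{(x,y)}$, adding $(x,y)$ back subdivides the edge entering $x$ with a new node $r'$, subdivides the edge entering $y$ with a new node $p'$, and inserts the arc $(p',r')$. The natural bijection sending $r' \mapsto r$ and $p' \mapsto p(y)$ (and identity elsewhere) is again an isomorphism of labelled networks, so the result equals $N$.

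The main obstacle is not combinatorial depth but book-keeping: one needs to be careful that in the reticulated-cherry case the subdivisions of the two edges, plus the newly added arc, really reconstruct the correct local topology in every sub-case (in particular when $p(y)$ had to be suppressed during reduction, or when $z$ coincides with some otherwise-relevant ancestor). Binarity is used here, since it guarantees that each ``added back'' node has the exact degree it had in $N$ and that no additional choices are needed to determine where to place the new arcs.
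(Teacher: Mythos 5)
Your argument is correct, but it takes a genuinely different route from the paper, because the paper does not prove this lemma at all: it is stated as a special case of Lemma~1 of Janssen and Murakami~\cite{janssen2021cherry}, which establishes the reduce/add-back correspondence in the general (non-binary) setting once a specific rule for adding pairs is fixed, and the paper's subsequent remark only explains why binarity is needed for the statement to hold without specifying such a rule. You instead give a self-contained, elementary case analysis that exhibits an explicit labelled isomorphism in each of the two cases, which is a perfectly sound way to prove the binary special case directly. Two small points would tighten your write-up. First, the hedge ``if $p(y)$ also becomes degree-two as a result'' can be removed: in a binary network $p(y)$ is a tree node with in-degree $1$ and out-degree $2$ (it cannot be the root, which has out-degree $1$ and could not have both $y$ and $p(x)$ as children), so after deleting $(p(y),p(x))$ it \emph{always} becomes a degree-two node and is suppressed. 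Second, part of the book-keeping you flag resolves itself under binarity: one cannot have $z=p(y)$ (that would force out-degree at least $3$ at $p(y)$), so the two edges you subdivide in the add-back are always distinct, and the dichotomy driving the add-back rule --- $x$ is absent from $N_{(x,y)}$ exactly when $(x,y)$ was a cherry, and present exactly when it was a reticulated cherry --- is clean, which is what makes the inverse operation deterministic. As for what each approach buys: your direct proof requires no external machinery and makes transparent exactly where binarity is used (and hence why the lemma fails for non-binary networks, where the add-back involves choices), while the paper's citation buys brevity and consistency, since it reuses the same general framework of~\cite{janssen2021cherry} that the paper already relies on for Theorem~\ref{the:BNBRedimpliesCon} and, later, in the proof of Lemma~\ref{lem:ext_embed_tree}.
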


Note that Lemma~\ref{lem:binary_reconstr} only holds for binary networks: in fact, there are different ways to add a pair to a non-binary network, thus the lemma does not hold unless a specific rule for adding pairs is specified (inspect \cite{janssen2021cherry} for details).
Theorem~\ref{the:BNBRedimpliesCon} and Lemma~\ref{lem:binary_reconstr} provide the following approach
for finding a feasible solution to \textsc{Hybridization}: 
find a CPS $S$ that fully reduces all the input trees, and then uniquely reconstruct the binary orchard network $N$ for which $S$ is a minimum-length CPS, by processing $S$ in the reverse order.
$N$ can be reconstructed from $S$ using one of the methods underlying Lemma~\ref{lem:binary_reconstr} proposed in the literature, e.g., in~\cite{janssen2021cherry} (illustrated in Figure \ref{fig:network_cps}) or in~\cite{van2019practical}.  The following lemma relates the length of a CPS $S$ and the number of reticulations of the network constructed from $S$.

\begin{lemma}[\cite{van2021unifying}]\label{lem:CPS-retic}
 Let $S$ be a CPS on a set of taxa $X$. The number of reticulations of the network $N$ reconstructed from $S$ is $r(N) = |S| - |X| + 1$.    
\end{lemma}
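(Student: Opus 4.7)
The plan is to analyze the reconstruction process of $N$ from $S$ in reverse order and classify each step according to whether it creates a new leaf or a new reticulation. By Lemma~\ref{lem:binary_reconstr}, adding back a reducible pair to a binary network is a well-defined inverse to reduction, so we may unambiguously build $N$ by starting from a single root-leaf network and, going from $i=n$ down to $i=1$, reinserting the pair $(x_i,y_i)$.

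First I would inspect the sequence condition: since $y_i \in \{x_{i+1},\ldots,x_n,y_n\}$ for every $i<n$, the label $y_i$ is always present in the partially reconstructed network when we process step $i$. For $x_i$ there are exactly two possibilities: either $x_i$ also appears in some later pair (so it is already present) or it does not (so it is new). In the latter case, reinserting $(x_i,y_i)$ attaches a fresh leaf $x_i$ as a sibling of $y_i$, which is a cherry-addition: this increases the number of leaves by one and does not change the reticulation number. In the former case, $x_i$ is already in the network, so the inverse operation creates a reticulated cherry at $(x_i,y_i)$, which subdivides the edge entering $x_i$ and the edge entering $y_i$ and adds an arc between the two new nodes; this increases the reticulation number by one and leaves the number of leaves unchanged. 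One needs to verify (from the reconstruction rule of~\cite{janssen2021cherry} or~\cite{van2019practical}) that these are indeed the two cases and that each step contributes exactly the stated change; this bookkeeping is the main place where care is needed.

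Finally I would count. Let $k$ denote the number of cherry-additions and $r$ the number of reticulation-additions performed during the reverse reconstruction, so that $|S|=k+r$. The process starts from the trivial network with a single leaf $y_n$ and ends with $N$, which has $|X|$ leaves; since only cherry-additions change the leaf count (each by $+1$), we obtain $|X| = 1 + k$, hence $k = |X|-1$. Substituting gives $r(N) = r = |S| - k = |S| - |X| + 1$, as claimed.

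The main obstacle is the case analysis for the inverse operation: one must argue that every step of the backward construction falls into exactly one of the two described cases, and that the "$x_i$ already present" case really introduces a reticulated cherry (and therefore exactly one reticulation) rather than some other modification. Once this is pinned down via Lemma~\ref{lem:binary_reconstr} and the explicit rule from the cited reconstruction algorithm, the counting argument is immediate.
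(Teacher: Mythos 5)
Your proposal is correct. Note that the paper itself does not prove this lemma at all: it imports it by citation from~\cite{van2021unifying}, so there is no in-paper proof to compare against. Your argument is the natural one and matches the reconstruction rule the paper describes in the caption of Figure~\ref{fig:network_cps}: processing $S$ in reverse, each pair either attaches a fresh leaf as a cherry (when $x_i$ is absent, $+1$ leaf, $+0$ reticulations) or adds a reticulated cherry (when $x_i$ is present, $+0$ leaves, $+1$ reticulation), and the CPS condition $y_i\in\{x_{i+1},\ldots,x_n,y_n\}$ guarantees $y_i$ is already in the partial network, so the two cases are exhaustive; the count $|S|=k+r$ with $k=|X|-1$ then gives $r(N)=|S|-|X|+1$. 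The only point you leave implicit is that the final leaf set is exactly $X$: every first element $x_i$ is present after step $i$, every second element is, via the CPS condition, either $y_n$ or some later first element, and neither operation ever removes a leaf --- worth one sentence, but it is immediate. Your bookkeeping that the reticulated-cherry insertion creates exactly one node of in-degree two (and keeps the network binary) is likewise consistent with Lemma~\ref{lem:binary_reconstr} and the rule from~\cite{janssen2021cherry}, so the proof stands.
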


\begin{figure}[t]
    \centering
   \includegraphics[height=2.7cm]{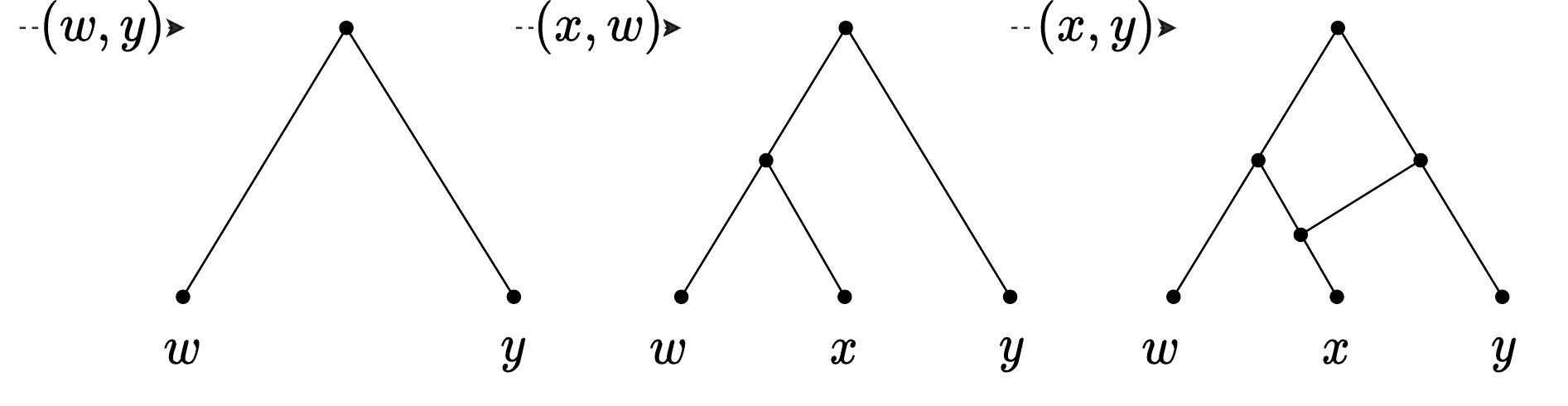}
    \caption{The ON reconstructed from the sequence $S = (x, y), (x, w), (w, y)$. The pairs are added to the network in reverse order: if the first element of a pair is not yet in the network, it is added as a cherry with the second element (see the pair~$(x,w)$). Otherwise, a reticulation is added above the first element with an incoming edge from a new parent of the second element (see the pair~$(x,y)$).} 
    \label{fig:network_cps}
\end{figure}
In the next section we focus on the first part of the heuristic: producing a CPS that fully reduces a given set of phylogenetic trees. 

\subsection{Randomised Heuristics}\label{sub:random_heu}
We define a class of randomised heuristics that construct a CPS by picking one reducible pair of the input set $\cT$ at a time and by appending this pair to a growing partial sequence, as described in Algorithm~\ref{alg:CPH} (the two subroutines \textsf{PickNext} and \textsf{CompleteSeq} will be later described in details). We call this class CPH (for Cherry-Picking Heuristics).
Recall that $\cT_S$ denotes the set of trees $\cT$ after reducing all trees with a (partial) CPS $S$.

\begin{algorithm}[h]
\small
   \caption{\textsc{CPH}}  \label{alg:CPH}
\begin{algorithmic}[1]
\Statex \textbf{INPUT:} A set $\cT$ of phylogenetic trees
 \Statex \textbf{OUTPUT:} A CPS reducing $\cT$.
    \State $S\gets \emptyset$;
    \While{there is a reducible pair in $\cT_S$}\label{line:startwhile}
        \State $(x,y)\gets\textsf{PickNext}(\cT_S)$;\label{line:picknext}
        \State $S\gets S\circ(x,y)$;
        \State Reduce $(x,y)$ in all trees of $\cT_S$;\label{line:endwhile}
    \EndWhile
    \State $S\gets\textsf{CompleteSeq}(S)$;\label{line:completeCPS}
    \State \Return $S$;
  \end{algorithmic}
\end{algorithm}
The while loop at lines~\ref{line:startwhile}-\ref{line:endwhile} produces, in general, a partial CPS $S$, as shown in Example~\ref{ex:partialCPS}. To make it into a CPS, the subroutine \textsf{CompleteSeq} at line~\ref{line:completeCPS} appends at the end of  $S$ a sequence $S'$ of pairs such that each second element in a pair of $S\circ S'$ is a first element in a later pair (except for the last one), as required by the definition of CPS. 
These additional pairs do not affect the trees in $\cT$, which are already fully reduced by $S$. 
Algorithm~\ref{alg:CompletePartialSequence} describes a procedure \textsf{CompleteSeq} that runs in time linear in the length of $S$.
\begin{example}\label{ex:partialCPS}
 Let $\cT$ consist of the 2-leaf trees $(x,y)$ and $(w,z)$. A partial CPS at the end of the while loop in Algorithm~\ref{alg:CPH} could be, e.g., $S=(x,y),(w,z)$. The  trees are both reduced to one leaf, so there are no more reducible pairs, but $S$ is not a CPS. To make it into a CPS either pair $(y,z)$ or pair $(z,y)$ can be appended: e.g., $S\circ (y,z)=(x,y),(w,z),(y,z)$ is a CPS, and it still fully reduces the two input trees.
\end{example}

\begin{algorithm}[t]
 \caption{\textsf{CompleteSeq}}
  \label{alg:CompletePartialSequence}
  \begin{algorithmic}
  \Statex\textbf{INPUT:} A partial CPS $S=(x_1,y_1),\ldots,(x_n,y_n)$ that reduces $\cT$ 
  \Statex\textbf{OUTPUT:} A CPS $S'$ for $\cT$. 
  \State $C \gets \emptyset$; $P \gets \emptyset$;
  \For{$i= n,\ldots, 1$}
      \If{$y_i\not\in C$}
         \State $P \gets P\cup\{y_i\}$;
      \EndIf
      \State $C \gets C\cup\{x_i,y_i\}$;
  \EndFor
  \State $S' \gets  S$;
  \While{$|P|>1$}
     \State Let $r_1$ and $r_2$ be two arbitrary elements of $P$;
      \State $S' \gets  S'\circ(r_1,r_2)$;
      \State $P \gets  P\setminus\{r_1\}$;
  \EndWhile
  \State\Return $S'$
    \end{algorithmic}
\end{algorithm}
The class of heuristics given by Algorithm~\ref{alg:CPH} is concretised in different heuristics depending on the function $\textsf{PickNext}$ at line~\ref{line:picknext} used to choose a reducible pair at each iteration.
To formulate them we need to introduce the following notions of height pair and trivial pair.
Let $N$ be a network with branch lengths and let $(x,y)$ be a reducible pair in $N$. The \emph{height pair} of $(x,y)$ in $N$ is a pair $(h_x^N,h_y^N)\in\R_{\geq0}^2$, where $h_x^N=\ell(p(x),x)$ and $h_y^N=\ell(p(y),y)$ if $(x,y)$ is a cherry (indeed, in this case, $p(x)=p(y)$); $h_x^N=\ell(p(y),p(x))+\ell(p(x),x)$ and $h_y^N=\ell(p(y),y)$ if $(x,y)$ is a reticulated cherry. The \emph{height} $h^N_{(x,y)}$ of $(x,y)$ is the average $(h_x^N+h_y^N)/2$ of $h_x^N$ and $h_y^N$.
Let $\cT$ be a set of trees whose leaf sets are subsets of a set of taxa $X$. An ordered leaf pair $(x,y)$ is a \emph{trivial pair} of $\cT$ if it is reducible in all $T\in\cT$ that contain both $x$ and $y$, and there is at least one tree in which it is reducible.
We define the following three heuristics in the \textsc{CPH} class, resulting from as many possible implementations of $\textsf{PickNext}$.
\begin{description}[font=\normalfont]
\item[\rand.] Function $\textsf{PickNext}$ picks uniformly at random a reducible pair of $\cT_S$. 
\item[\lowpair.] Function $\textsf{PickNext}$ picks a reducible pair $(x,y)$ with the lowest average of values $h^T_{(x,y)}$ over all $T\in\cT_S$ in which $(x,y)$ is reducible (ties are broken randomly).
\item[\trivialrand.] Function $\textsf{PickNext}$ picks a trivial pair if there exists one and otherwise picks a reducible pair of $\cT_S$ uniformly at random.
\end{description}

\begin{theorem}
Algorithm~\ref{alg:CPH} computes a CPS that fully reduces $\cT$, for any function $\textsf{PickNext}$ that picks, in each iteration, a reducible pair of $\cT_S$.
\end{theorem}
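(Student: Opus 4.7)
The proof naturally splits along the two phases of Algorithm~\ref{alg:CPH}: the while loop (lines~\ref{line:startwhile}--\ref{line:endwhile}) and the subsequent call to \textsf{CompleteSeq} (line~\ref{line:completeCPS}). For the while loop, the plan is first to establish termination via a strictly monotone measure. Each iteration picks a pair $(x,y)$ that is reducible in $\cT_S$, hence reducible in at least one $T\in\cT_S$; reducing $(x,y)$ in such a $T$ deletes the leaf $x$, so the nonnegative integer $\sum_{T\in\cT_S}|\leaves(T)|$ strictly decreases, forcing the loop to halt after finitely many iterations. Next, I would show that upon exit every tree in $\cT_S$ consists of a single leaf. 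Suppose some $T\in\cT_S$ still has at least two leaves. Because $T$ is binary (the paper only considers binary inputs), the two children of any deepest internal node of $T$ are both leaves and form a cherry; this is a reducible pair of $\cT_S$, contradicting the loop's exit condition. Hence the partial sequence $S$ produced by the while loop already fully reduces $\cT$.

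It remains to verify that \textsf{CompleteSeq} converts $S$ into a sequence $S'$ that (i) satisfies the CPS property ``$y_i\in\{x_{i+1},\ldots,x_n,y_n\}$ for all $i<n$'' and (ii) still fully reduces $\cT$. For (i), I would track the set $P$ produced by the backward scan in Algorithm~\ref{alg:CompletePartialSequence}: after the scan, $P$ contains exactly those leaves that occur as a second element of some pair of $S$ without reappearing at any later position of $S$. Each pair $(r_1,r_2)$ appended in the while loop equips $r_1$ with a fresh later first-element occurrence and deletes it from $P$, so it repairs the CPS property for every earlier pair whose second element equals $r_1$; the only new defect it could introduce is for $r_2$, which remains in $P$ and is either paired off in a later iteration or ends up as the very last second element of $S'$, for which the CPS condition is vacuous. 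Thus $S'$ is a valid CPS. For (ii), observe that when \textsf{CompleteSeq} begins, every $T\in\cT_S$ is a single leaf, and no singleton contains a reducible pair; hence the appended pairs do not alter $\cT_S$, and $S'$ fully reduces $\cT$ as well.

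The only mildly delicate step is the bookkeeping for \textsf{CompleteSeq}, which amounts to verifying that the backward scan identifies precisely the set of problematic second elements and that pairing them up one at a time preserves the invariant ``$P$ contains exactly the current orphan second elements of $S'$.'' Everything else reduces to two elementary observations: a binary tree with at least two leaves always contains a cherry, and reducing any pair in a single-leaf tree is a no-op. I therefore expect no genuine obstacle beyond a careful statement and verification of the $P$-invariant.
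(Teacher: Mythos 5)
Your proof is correct and follows essentially the same route as the paper's: termination via a strictly decreasing size measure, the observation that the loop's exit condition forces every tree to be a single leaf, and the fact that \textsf{CompleteSeq} only appends pairs that cannot affect the already-reduced trees. The only difference is one of detail, not of approach: you explicitly verify the two points the paper leaves implicit, namely that a binary tree with at least two leaves always contains a cherry, and that the backward scan in \textsf{CompleteSeq} collects exactly the orphan second elements so that the appended pairs restore the CPS condition.
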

\begin{proof}
The sequence $S$ is initiated as an empty sequence. Then, each iteration of the while loop (lines 2-5) of Algorithm~\ref{alg:CPH} appends one pair to $S$ that is reducible in at least one of the trees in $\cT$, and reduces it in all trees. Hence, in each iteration, the total size of $\cT_S$ is reduced, so the algorithm finishes in finite time. Moreover, at the end of the while loop, each tree in $\cT_S$ is reduced, thus the partial CPS $S$ reduces $\cT_S$.
As \textsf{CompleteSeq} only appends pairs at the end of $S$, the result of this subroutine still reduces all trees in $\cT_S$.
\end{proof}
In Section~\ref{sec:experiments} we experimentally show that \trivialrand{} produces the best results among the proposed randomised heuristics.  In the next section, we introduce a further heuristic step for \trivialrand{} which improves the output quality. 

\subsection{Improving Heuristic \trivialrand{} via Tree Expansion}\label{sub:relabelling}
Let $\cT$ be a set of trees whose leaf sets are subsets of a set of taxa $X$, let $S$ be a partial CPS for $\cT$ and let $\cT_{S}$ be the tree set obtained by reducing in order the pairs of $S$ in $\cT$.
With respect to a trivial pair $(x,y)$, each tree $T\in\cT_S$ is of one of the following types: (i) $(x,y)$ is reducible in $T$; or (ii) neither $x$ nor $y$ are leaves of $T$;  or (iii) $y$ is a leaf of $T$ but $x$ is not; or (iv) $x$ is a leaf of $T$ but $y$ is not. 

Suppose that at some iteration of  \textsf{TrivialRand}, the subroutine \textsf{PickNext} returns the trivial pair $(x,y)$. Then, before reducing $(x,y)$ in all trees, we do the following extra step: for each tree of type (iv), replace leaf $x$ with cherry $(x,y)$. We call this operation the \emph{tree expansion}: see Figure \ref{fig:relabelling_cherry}(c).
The effect of this step is that, after reducing $(x,y)$, leaf $x$ disappears from the set of trees, which would have not necessarily been the case before, because of trees of type (iv). Tree expansion followed by the reduction of $(x,y)$ can, alternatively, be seen as relabelling leaf $x$ in any tree of type (iv) by $y$. The choice of describing this relabelling as tree expansion is just for the purpose of proving Lemma~\ref{lem:ext_embed_tree}.

\begin{figure}[h]
    \centering
        \subfloat[
        ]{ \includegraphics[height=2.7cm]{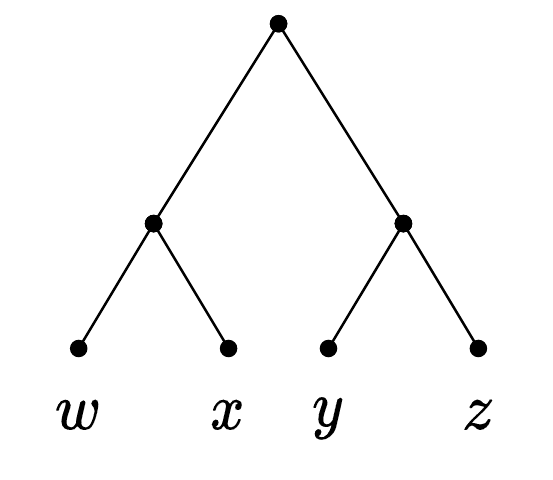} 
        \label{fig:input_tree}}
        \subfloat[
        ]{\includegraphics[height=2.7cm]{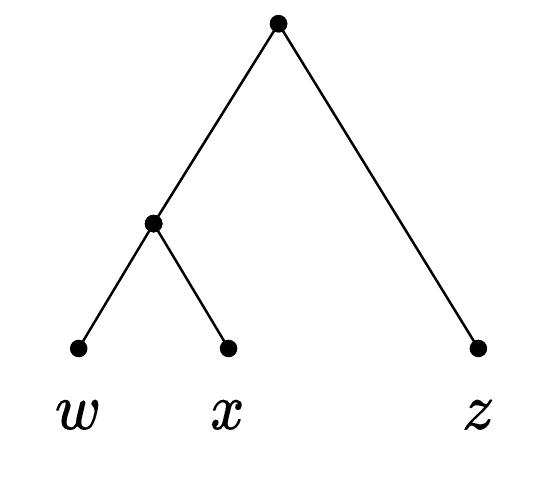} 
        \label{fig:picked_tree}}
        \subfloat[
        ]{\includegraphics[height=2.7cm]{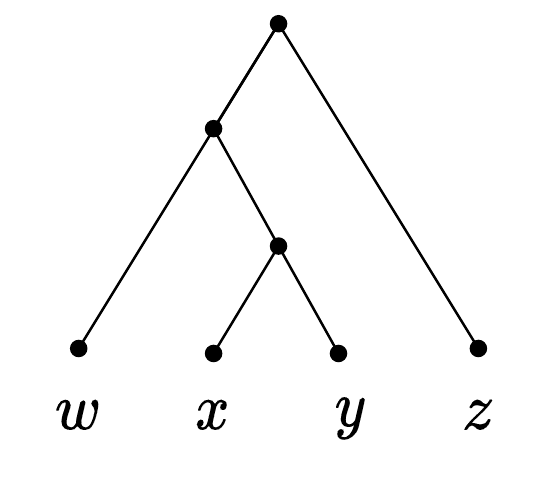} 
        \label{fig:exp_tree}}
        \subfloat[
        ]{\includegraphics[height=2.7cm]{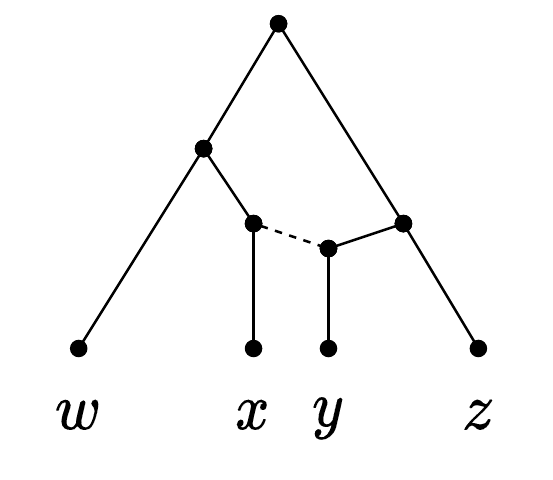}
        }
    \caption{Tree expansion of $T$  \textbf{(a)} with the trivial cherry $(x, y)$ of $\cT_{(y,z)}$. \textbf{(b)} After picking cherry $(y, z)$, leaf $y$ is missing in $T^{(1)}$. \textbf{(c)} Leaf $x$ is replaced by the cherry $(x, y)$. 
    After completion of the heuristic, we have $S_T = (y, z), (x, y), (y, w), (w, z)$. \textbf{(d)} The network $N_T$ reconstructed from $S^1\cdot(x,y)$. Note that the input tree $T$ is displayed in $N_T$ (solid edges).}
    \label{fig:relabelling_cherry}
\end{figure}

To guarantee that a CPS $S$ produced with tree expansion implies a feasible solution for \textsc{Hybridization}, we must show that the network $N$ reconstructed from $S$ displays all the trees in the input set $\cT$.
We prove that indeed this is the case with the following steps: (1), we consider the networks $N_T$ obtained by ``reverting'' a partial CPS $S$ obtained right after
applying tree expansion to a tree $T_S$: in other words, to obtain $N_T$ we add to the partially reduced tree $T_S$ the trivial pair $(x,y)$ and then all the pairs previously reduced by $S$ in the sense of Lemma~\ref{lem:binary_reconstr}. We show that $N_T$ always displays $T$, the original tree; 
(2), we prove that this holds for an arbitrary sequence of tree expansion operations; and (3), since the CPS obtained using tree expansions fully reduces the networks of point (2), and since these networks display the trees in the original set $\cT$, we have the desired property by Theorem~\ref{the:BNBRedimpliesCon}. We prove this more formally with the following lemma.

\begin{lemma}\label{lem:ext_embed_tree}
Let $S$ be the CPS produced by \textsf{TrivialRand} using tree expansion with input $\cT$. Then the network reconstructed from $S$ displays all the trees in $\cT$.
\end{lemma}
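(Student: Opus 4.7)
The plan is to follow the three-step outline given right before the lemma. For each $T \in \cT$, I will construct an auxiliary binary orchard network $N_T$ with two properties: (a) $N_T$ displays $T$, and (b) $S$ fully reduces $N_T$. Since $N$ is reconstructed from $S$, Lemma~\ref{lem:CPS-retic} implies that $S$ is a minimum-length CPS for $N$; Theorem~\ref{the:BNBRedimpliesCon} then gives that $N_T$ is displayed in $N$, and composing the embeddings of $T$ in $N_T$ and of $N_T$ in $N$---a routine check that displaying is transitive---yields that $T$ is displayed in $N$.

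To construct $N_T$, let $E_T = (p_1, \ldots, p_m)$ denote the subsequence of $S$ consisting of the pairs that actually modified the running copy of $T$ during the algorithm: those pairs for which $T$ was of type (i) (standard cherry reduction) or of type (iv) (tree expansion followed by the reduction of the newly-created cherry) in the classification of Section~\ref{sub:relabelling}. A short check---each event of type (i) or (iv) turns the second coordinate $y$ into a leaf of the running $T$, which is subsequently removed or relabelled by another $E_T$-event (or survives as the final leaf)---shows that $E_T$ satisfies the CPS condition. Take $N_T$ to be the binary orchard network obtained by applying the reverse-reconstruction of Figure~\ref{fig:network_cps} to $E_T$; then Lemma~\ref{lem:binary_reconstr} together with Lemma~\ref{lem:CPS-retic} ensures that $E_T$ is a minimum-length CPS that fully reduces $N_T$.

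For (a), I would use reverse induction on $\ell = m, m-1, \ldots, 0$. Let $T^{[\ell]}$ be the state of $T$ after the first $\ell$ events of $E_T$ (so $T^{[0]} = T$ and $T^{[m]}$ is a single leaf), and let $N_T^{[\ell]}$ be the intermediate network obtained from $T^{[m]}$ by reverse-adding the last $m - \ell$ pairs of $E_T$. The invariant to maintain is that $N_T^{[\ell]}$ displays $T^{[\ell]}$; the base case $\ell = m$ is immediate. In the inductive step the pair $p_{\ell+1} = (x, y)$ either forms a cherry of $T^{[\ell]}$ (type (i)) or triggers an expansion of $T^{[\ell]}$ that effectively relabels the leaf $x$ to $y$ (type (iv)); in both cases the leaf $y$ of $T^{[\ell+1]}$ sits at exactly the structural position at which either the cherry $(x, y)$ or the leaf $x$ appears in $T^{[\ell]}$. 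Reverse-adding $(x, y)$ creates a new internal node $v$ with children $x$ and $y$ in place of the former leaf $y$ (cherry branch of the reconstruction), and the embedding of $T^{[\ell+1]}$ into $N_T^{[\ell+1]}$ extends to an embedding of $T^{[\ell]}$ into $N_T^{[\ell]}$ by mapping the cherry's parent in $T^{[\ell]}$ to $v$ and routing the new edges into $x$ and $y$ through $v$; edge-disjointness is preserved because only the newly inserted edges touch $v$ and the new leaf $x$.

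For (b), $E_T$ already reduces $N_T$ to a single leaf, so it suffices to show that each pair of $S \setminus E_T$---corresponding to an algorithm step where $T$ was of type (ii) or (iii)---leaves the running copy of $N_T$ unchanged. A parallel induction on the steps of $S$, tracking that the leaves of the intermediate $N_T$ form a tightly controlled superset of those in the current $T$, shows that at such a step at least one element of the pair is missing from $N_T$ (or, if both are present, they are not sibling leaves in $N_T$). The main obstacle I foresee is the sub-case in the induction for (a) where $x$ is already a leaf of $N_T^{[\ell+1]}$---because some later $E_T$-pair reintroduced it via an expansion---so that the reverse reconstruction adds a reticulation rather than a cherry; one then has to verify that the newly-created reticulated cherry still supplies edge-disjoint paths allowing the embedding of $T^{[\ell]}$ to extend through the reticulation, which calls for careful bookkeeping of the leaves that prior reverse-expansions have brought back into $N_T^{[\ell+1]}$.
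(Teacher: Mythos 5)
Your overall strategy coincides with the paper's own proof: it too builds, for each input tree $T$, an ``expanded reconstruction'' $N_T$ by reverting the subsequence of $S$ that affected the running copy of $T$, expansion pairs included; it observes that $S$ fully reduces these networks, embeds them in $N$ via Theorem~\ref{the:BNBRedimpliesCon}, and concludes that $N$ displays $T$ by transitivity of displaying. The genuine gap is in your part (a), and it is exactly the case you yourself flag as ``the main obstacle'': when the first coordinate of a reverse-added pair is already present in $N_T^{[\ell+1]}$, so that the reconstruction inserts a reticulation rather than a cherry. This is not a corner case but the heart of the lemma --- it is what happens for every expansion pair $(x,y)$ whose second leaf $y$ had been removed earlier by some pair $(y,v)$, i.e., precisely whenever tree expansion actually does something --- and your proposal stops at ``calls for careful bookkeeping'' without supplying the argument. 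The paper does not prove this case by hand either: it inducts on the number of trivial (expansion) pairs and invokes \cite[Lemma~10]{janssen2021cherry}, which guarantees that adding a pair to a network in the sense of Lemma~\ref{lem:binary_reconstr} always yields a network displaying the original. Citing that result would close your hole; as written, your display claim is established only in the cherry-insertion case, which is the easy half.

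Your part (b) also has a soft spot: the invariant you propose (``at least one element of the pair is missing from $N_T$, or the two are not sibling leaves'') is too weak and, as stated, not even true. The intermediate states of $N_T$ under $S$-reduction are networks, not trees, so you must additionally rule out that a pair of $S\setminus E_T$ forms a \emph{reticulated} cherry; worse, in the window between the step at which $(y,v)$ removed $y$ from the running tree and the later step at which $(x,y)$ is picked as a trivial pair, the current state of $N_T$ contains $(x,y)$ as a genuine cherry while the tree copy contains only $x$ --- so if $(x,y)$ happens to be selected at an intermediate step as a non-trivial reducible pair of some \emph{other} tree, it does affect the state of $N_T$. Full reduction still goes through, because that early reduction produces exactly the relabelled state that the expansion event would later have produced, but your claimed invariant does not survive; it would have to be weakened to something like ``each interleaved pair either leaves the state of $N_T$ unchanged or advances it to the state that the corresponding $E_T$-event would produce.'' The paper sidesteps this bookkeeping by rebuilding the network $N_T^{(j)}$ afresh from the current reduced tree at each expansion step, so that $S$ fully reduces each $N_T^{(j)}$ by construction rather than by a separate non-interference induction.
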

\begin{proof}
Let us start with the case where only 1 tree expansion occurs. 
Let $S^{(i-1)}$ be the partial CPS constructed in the first $i-1$ steps of \trivialrand{}, and let $i$ be the step in which we pick a trivial pair $(x,y)$. 
For each $T\in\cT_{S^{(i-1)}}$ that is reduced by $S^{(i-1)}$ to a tree $T^{(i-1)}$ of type (iv) for $(x,y)$, let $S^{(i-1)}_T$ be the subsequence of $S^{(i-1)}$ consisting only of the pairs that subsequently affect $T$. We use the partial CPS  $S^i_T=S^{(i-1)}_T\circ (x,y)$ to reconstruct a network $N_T$ with a method underlying Lemma~\ref{lem:binary_reconstr}, starting from $T^{(i-1)}$: see Figure~\ref{fig:relabelling_cherry}(d). 

For trees of type (i)-(iii), $N_T=T$.
We call the set $\mathcal{N}_\cT$, consisting of the networks $N_T$ for all $T\in\cT$, the \emph{expanded reconstruction} of $\cT$. Note that, by construction and Lemma~\ref{lem:binary_reconstr}, all the elements of $\mathcal{N}_\cT$ after reducing, in order, the pairs of $S^{(i-1)}\circ (x,y)$, are trees: in particular, they are equal to the trees of $\cT_{S^{(i-1)}\circ (x,y)}$ in which all the labels $y$ have been replaced by $x$. We denote this set of trees $(\mathcal{N}_\cT)_{S^{(i-1)}\circ (x,y)}$.

We can generalise this notion to multiple trivial pairs: we denote by $\mathcal{N}_\cT^{(j)}$ the expanded reconstruction of $\cT$ with the first $j$ trivial pairs, and suppose we added the $j$-th pair $(w,z)$ to the partial CPS $S$ at the $k$-th step. Consider a tree $T'\in(\mathcal{N}_\cT^{(j-1)})_{S^{(k-1)}}$ of type (iv) for $(w,z)$, and let $N_T^{(j-1)}\in\mathcal{N}_\cT^{(j-1)}$ be the network it originated from. 
Let $S^{(k-1)}_T$ be the subsequence of $S^{(k-1)}$ consisting only of the pairs that subsequently affected $N_T^{(j-1)}$.
Then $N_T^{(j)}$ is the network reconstructed from $S^{(k-1)}_T\circ (w,z)$, starting from $T'$.
For trees of $(\mathcal{N}_\cT^{(j-1)})_{S^{(k-1)}}$ that are of type (i)-(iii) for $(w,z)$, we have $N_T^{(j)}=N_T^{(j-1)}$.
The elements of $\mathcal{N}_\cT^{(j)}$ are all networks $N_T^{(j)}$. For completeness, we define $\mathcal{N}_\cT^{(0)}=\cT$ and $\mathcal{N}_\cT^{(1)}=\mathcal{N}_\cT$.

By construction, $S$ fully reduces all the networks in $\mathcal{N}_\cT^{(j)}$, 
thus the network $N$ reconstructed from $S$ displays all of them by Theorem~\ref{the:BNBRedimpliesCon}. 
We prove that $N_T^{(j)}$ displays $T$ for all $T\in\cT$, and thus $N$ displays the original tree set $\cT$ too, by induction on $j$.

In the base case, we pick $j=0$ trivial pairs, so the statement is true by Theorem~\ref{the:BNBRedimpliesCon}. 
Now let $j>0$. The induction hypothesis is that each network $N_T^{(j-1)}\in\mathcal{N}_\cT^{(j-1)}$ displays the tree $T\in\cT$ it originated from. Let $(w,z)$ be the $j$-th trivial pair, added to the sequence at position $k$. Let $T'\in (\mathcal{N}_\cT^{(j-1)})_{S^{(k-1)}}$ be a tree of type (iv) for $(w,z)$, and let $N_T^{(j-1)}$ be the network it originates from.
Then there are two possibilities: either $z$ is a leaf of $N_T^{(j-1)}$ or it is not. In case it is not, then adding $(w,z)$ to $N_T^{(j-1)}$ does not create any new reticulation, and clearly $N_T^{(j)}$ keeps displaying $T$. If $z$ does appear in $N_T^{(j-1)}$, then it must have been reduced by a pair $(z,v)$ of $S^{(k-1)}$ (otherwise $T'$ would not be of type (iv)). 
Then the network $N_T^{(j)}$ has an extra reticulation, created with the insertion of $(z,v)$ at some point after $(w,z)$ during the backwards reconstruction. 
In both cases, by~\cite[Lemma 10]{janssen2021cherry} $N_T^{(j-1)}$ is displayed in $N_T^{(j)}$, and thus by the induction hypothesis $T$ is displayed too.
\end{proof}

\subsection{Good Cherries in Theory}
\label{sub:good_bad_cherry}

By Lemma~\ref{lem:binary_reconstr} the binary network $N$ reconstructed from a CPS $S$ is such that $S$ is of minimum length for $N$, that is, there exists no shorter CPS that fully reduces $N$. 
By Theorem~\ref{the:BNBRedimpliesCon} if $S$, in turn, fully reduces $\cT$, then $N$ displays all the trees in $\cT$. Depending on $S$, though, $N$ is not necessarily an optimal network (i.e., with minimum reticulation number) among the ones displaying $\cT$: see Example~\ref{ex:bad_cherries}.

Let $\OPT(\cT)$ denote the set of networks that display $\cT$ with the minimum possible number of reticulations (in general, this set contains more than one network). Ideally, we would like to produce a CPS fully reducing $\cT$ that is also a minimum-length CPS fully reducing some network of $\OPT(\cT)$.
In other words, we aim to find a CPS $\tilde{S}=(x_1,y_1),\ldots,(x_n,y_n)$ such that, for any $i\in[1,n]$, $(x_i,y_i)$ is a reducible pair of $\tilde{N}_{\tilde{S}^{(i-1)}}$, where $\tilde{S}^{(0)}=\emptyset$, $\tilde{S}^{(k)}=(x_1,y_1),\ldots,(x_k,y_k)$ for all $k\in[1,n]$, and $\tilde{N}\in \OPT(\cT)$.
Let $S=(x_1,y_1),\ldots,(x_n,y_n)$ be a CPS fully reducing $\cT$ and let $\OPT^{(k)}(\cT)$ 
consist of all networks $N\in \OPT(\cT)$ such that each pair $(x_i,y_i)$, $i\in[1,k]$, is reducible in $N_{S^{(i-1)}}$. 

\begin{lemma}\label{lem:bad_cherries}
A CPS $S$ reducing $\cT$ reconstructs an optimal network $\tilde{N}$ if and only if each pair $(x_i,y_i)$ of $S$ is reducible in $\tilde{N}_{S^{i-1}}$, for all $i \in [1,n]$.
\end{lemma}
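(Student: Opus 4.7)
The plan is to prove the two implications separately, with both relying on the length accounting of Lemma~\ref{lem:CPS-retic} together with the uniqueness of the network reconstructed from a CPS implicit in Lemma~\ref{lem:binary_reconstr}.

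First I would handle the forward direction. Suppose $S$ reconstructs an optimal $\tilde{N}$. Then by Lemma~\ref{lem:binary_reconstr} $S$ is a minimum-length CPS for $\tilde{N}$, so in particular $S$ fully reduces $\tilde{N}$ and, by Lemma~\ref{lem:CPS-retic}, $|S|=|X|+r(\tilde{N})-1$. A single reduction step either decreases the number of leaves by one (when applied to a cherry), or decreases the number of reticulations by one (when applied to a reticulated cherry), or does nothing (when the pair is not reducible, since reducing a non-reducible pair has no effect). Starting from $\tilde{N}$ with $|X|$ leaves and $r(\tilde{N})$ reticulations and ending at a root-plus-leaf, exactly $|X|+r(\tilde{N})-1$ effective reductions are needed; since this matches $|S|$, every pair $(x_i,y_i)$ must be reducible in $\tilde{N}_{S^{(i-1)}}$.

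For the converse, suppose every pair of $S$ is reducible in $\tilde{N}_{S^{(i-1)}}$ for some $\tilde{N}\in\OPT(\cT)$, and let $N$ be the network reconstructed from $S$. Since $S$ is a minimum-length CPS for $N$ and fully reduces $\cT$, Theorem~\ref{the:BNBRedimpliesCon} implies $N$ displays $\cT$, hence $r(N)\geq r(\tilde{N})$ by optimality of $\tilde{N}$. On the other hand, each pair being reducible in the corresponding partial reduction of $\tilde{N}$ means $S$ strips $|S|$ leaves-or-reticulations from $\tilde{N}$; since only $|X|+r(\tilde{N})-1$ such ``tokens'' are available before $\tilde{N}_S$ becomes a root-plus-leaf, we get $|S|\leq|X|+r(\tilde{N})-1$, i.e.\ $r(N)\leq r(\tilde{N})$ via Lemma~\ref{lem:CPS-retic}. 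Combining, $r(N)=r(\tilde{N})$, $|S|=|X|+r(\tilde{N})-1$, and $\tilde{N}_S$ is trivial, so $S$ is a minimum-length CPS for $\tilde{N}$ as well; the uniqueness of the reconstruction in Lemma~\ref{lem:binary_reconstr} then forces $N=\tilde{N}$.

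The main obstacle I anticipate is the converse direction, where I need the ``token-counting'' upper bound on $|S|$ to combine cleanly with the lower bound $r(N)\geq r(\tilde{N})$ from Theorem~\ref{the:BNBRedimpliesCon} to pin $|S|$ down exactly; without optimality of $\tilde{N}$ one could have $|S|<|X|+r(\tilde{N})-1$ and $S$ would fail to fully reduce $\tilde{N}$. A subtlety worth double-checking is that $\OPT(\cT)$ is implicitly restricted to binary orchard networks, so that the uniqueness statement of Lemma~\ref{lem:binary_reconstr} actually applies to $\tilde{N}$; otherwise one would need an extra argument to rule out non-orchard optima.
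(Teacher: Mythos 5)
Your proof is correct, and it follows the same skeleton as the paper's (forward direction: $S$ is a minimum-length CPS for the network it reconstructs, by Lemma~\ref{lem:binary_reconstr}; converse: all pairs affecting $\tilde{N}$ makes $S$ a minimum-length CPS for $\tilde{N}$, which is therefore the reconstructed network), but the mechanism you use differs in one worthwhile respect. Where the paper characterises minimum-length CPSs by a removability argument (``a non-affecting pair could be deleted and the sequence would still reduce $N$''), you substitute a quantitative token count via Lemma~\ref{lem:CPS-retic}: exactly $|X|+r-1$ effective reductions separate a network from its fully reduced state, so matching $|S|$ against this budget forces every pair to be reducible. More importantly, in the converse the paper asserts in one line that $S$ is a minimum-length CPS for $\tilde{N}$, which tacitly presumes that $S$ \emph{fully reduces} $\tilde{N}$ --- a fact not immediate from the hypothesis that each pair merely affects $\tilde{N}_{S^{(i-1)}}$. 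You close this implicit step explicitly: the token bound gives $|S|\leq|X|+r(\tilde{N})-1$, hence $r(N)\leq r(\tilde{N})$ by Lemma~\ref{lem:CPS-retic}, while Theorem~\ref{the:BNBRedimpliesCon} and the optimality of $\tilde{N}$ give $r(N)\geq r(\tilde{N})$; equality then forces the entire budget to be spent, i.e., $\tilde{N}_S$ is trivial, exactly as needed before invoking the reconstruction of Lemma~\ref{lem:binary_reconstr}. This makes your converse more robust than the paper's terse version, at the cost of leaning on Lemma~\ref{lem:CPS-retic}, which the paper's proof does not need. Your closing caveat is also well placed: $\OPT(\cT)$ consists of binary networks by the paper's standing assumption, and note that orchardness of $\tilde{N}$ need not be assumed --- it follows a posteriori once you have shown $S$ fully reduces $\tilde{N}$, while Lemma~\ref{lem:binary_reconstr} itself only requires binarity.
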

\begin{proof}
($\Rightarrow$) By Lemma~\ref{lem:binary_reconstr}, $S$ is a minimum-length CPS for the network $\tilde{N}$ that is reconstructed from it; and 
a CPS $C=(w_1,z_1),\ldots,(w_n,z_n)$ reducing a network $N$ is of minimum length precisely if,
for all $j\in[1,n]$, $(w_j,z_j)$ is a reducible pair of $N_{C^{(j-1)}}$ (otherwise the pair $(w_j,z_j)$ could be removed from $C$ and the new sequence would still reduce $N$).\\
($\Leftarrow$) If all pairs of $S$ affect some optimal network $\tilde{N}$, then $S$ is a minimum-length CPS for $\tilde{N}$, thus $\tilde{N}$ is reconstructed from $S$ (and it displays $\cT$ by Theorem~\ref{the:BNBRedimpliesCon}).
\end{proof}
Lemma~\ref{lem:bad_cherries} implies that if some pair $(x_i,y_i)$ of $S$ does not reduce any network in $\OPT^{(i-1)}(\cT)$, then the network reconstructed from $S$ is not optimal: see Example~\ref{ex:bad_cherries}.
 
\begin{example}\label{ex:bad_cherries}
Consider the set $\cT$ of Figure~\ref{fig:forest}:  $S=(y,x),(y,z),(w,x),(x,z)$ is a CPS that fully reduces $\cT$ and consists only of pairs successively reducible in the network $N$ of Fig.~\ref{fig:network}, thus it reconstructs it by Lemma~\ref{lem:binary_reconstr}. Now consider $(w, x)$, which is reducible in $\cT$ but not in $N$, and pick it as first pair, to obtain e.g. $S'=(w, x), (y, z), (y, x), (w, x), (x, z)$. The network $N'$ reconstructed from $S'$, depicted in Figure~\ref{fig:network_bad_cps}, has $r(N')=2$, whereas $r(N)=1$. 
\end{example}

\begin{figure}[h]
    \centering
    \includegraphics[height=2.7cm]{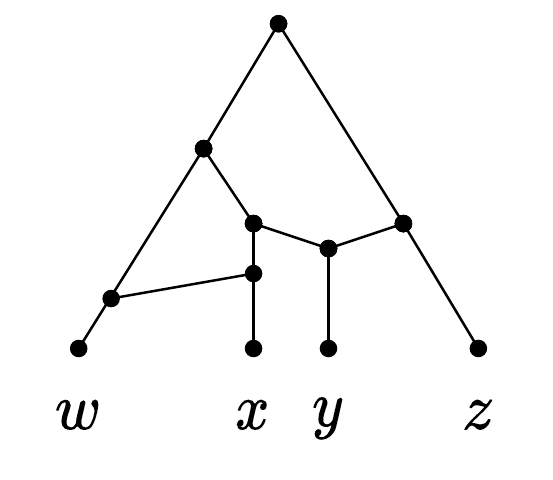}
    \caption{Network $N'$ of Example~\ref{ex:bad_cherries}.} 
    \label{fig:network_bad_cps}
\end{figure}

Suppose we are incrementally constructing a CPS $S=(x_1,y_1),\ldots,(x_n,y_n)$ for $\cT$ with some heuristic in the CPH class. 
If we had an oracle that at each iteration $i$ told us if a reducible pair $(x,y)$ of $\cT^{(i-1)}$ were a reducible pair in some $N\in\OPT^{(i-1)}(\cT)$, then, by Lemma~\ref{lem:bad_cherries}, we could solve \textsc{Hybridization} optimally.
Unfortunately no such exact oracle can exist (unless $P=NP$).
However, in the next section we exploit this idea to design machine-learned  heuristics in the \textsc{CPH} framework.

\section{Predicting Good Cherries via Machine Learning}\label{sec:ML}
In this section, we present a supervised machine-learning classifier that (imperfectly) simulates the ideal oracle described at the end of Section~\ref{sub:good_bad_cherry}. 
The goal is to predict, based on~$\cT$, whether a given cherry of~$\cT$ is a cherry or a reticulated cherry in a network~$N$ displaying $\cT$ with a close-to-optimal number of reticulations, without knowing~$N$.
Based on Lemma~\ref{lem:bad_cherries}, we then exploit the output of the classifier to define new functions \textsf{PickNext}, that in turn define new machine-learned heuristics in the class of \textsc{CPH} (Algorithm~\ref{alg:CPH}).

Specifically, we train a random forest classifier on data that encapsulates information on the cherries in the tree set.
Given a partial CPS, each reducible pair in $\cT_S$ is represented by one data point. Each data point is a pair $(\mathbf{F},\mathbf{c})$, where $\mathbf{F}$ is an array containing the features of a cherry $(x, y)$ and $\mathbf{c}$ is an array containing the probability that the cherry belongs to each of the possible classes described below. Recall that cherries are ordered pairs, so $(x,y)$ and $(y,x)$ give rise to two distinct data points.
The classification model learns the association between $\mathbf{F}$ and $\mathbf{c}$. 

The true class of a cherry $(x,y)$ of $\cT$ depends on whether, for the (unknown) network 
$N$ that we aim to reconstruct: (class 1) $(x,y)$ is a cherry of $N$; (class 2) $(x,y)$ is a reticulated cherry of $N$; (class 3) $(x,y)$ is not reducible in $N$, but $(y,x)$ is a reticulated cherry; or (class 4) neither $(x,y)$ nor $(y,x)$ are reducible in $N$.
Thus, for the data point of a cherry $(x,y)$, $\mathbf{c}[i]$ contains the probability that $(x,y)$ is in class $i$, and $\mathbf{c}[1]+\mathbf{c}[2]$ gives the predicted probability that $(x,y)$ is reducible in $N$.
We define the following two heuristics in the \textsc{CPH} framework.
\begin{description}[font=\normalfont]
\item[\ML.] Given a threshold $\tau\in[0,1)$, function $\textsf{PickNext}$ picks the cherry with the highest predicted probability of being reducible in $N$ if this probability is at least $\tau$; 
or a random cherry if none has a probability of being reducible above $\tau$.
\item[\trivML.] Function $\textsf{PickNext}$ picks a random trivial pair, if there exists one; otherwise it uses the same rules as \ML. 
\end{description}

In both cases, whenever a trivial pair is picked, we do tree expansion, as described in Section~\ref{sub:relabelling}. Note that if $\tau=0$, since the predicted probabilities are never exactly 0, \ML{} is fully deterministic. In Section~\ref{exp:threshold} we show how the performance of \ML{} is impacted by the choice of different thresholds.
\begin{table}[H]
	\caption{Features of a cherry $(x, y)$. Features 6-12 can be computed for both branch lengths and unweighted branches.
	We refer to these two options as \emph{distance} and \emph{topological distance}, respectively. }
	\centering
	\footnotesize
    \begin{tabular}{lll}
		\toprule
		{\scriptsize Num} & {\scriptsize Feature name} & {\scriptsize Description}	\\
		\midrule
		{\scriptsize 1}   & {\scriptsize Cherry in tree}           &  {\scriptsize Ratio of trees that contain cherry $(x,y)$}\\
	    {\scriptsize 2 }  & {\scriptsize New cherries}       & {\scriptsize  Number of new cherries of $\cT$ after picking cherry $(x, y)$ }                 \\
		{\scriptsize 3}   & {\scriptsize Before/after}      &  {\scriptsize Ratio of the number of cherries of $\cT$ before/after picking cherry $(x, y)$ }  \\ 
		{\scriptsize 4}   & {\scriptsize Trivial}                   &  {\scriptsize Ratio of trees with both leaves $x$ and $y$ that contain cherry $(x,y)$}   \\
		{\scriptsize 5}   & {\scriptsize Leaves in tree}         &  {\scriptsize Ratio of trees that contain both leaves $x$ and $y$}                              \\
        \midrule
        \multicolumn{3}{l}{\textit{\scriptsize{Features measured by distance (d) and topology (t)}}} \\
        \midrule
        {\scriptsize $6_{d,t}$}   & {\scriptsize Tree depth}               &  {\scriptsize Avg over trees with $(x,y)$ of ratios ``depth of the tree/max depth over all trees'' }                                   \\
		{\scriptsize$7_{d,t}$}   & {\scriptsize Cherry depth }            & {\scriptsize Avg over trees with $(x,y)$ of ratios ``depth of $(x,y)$ in the tree/depth of the tree'' }       \\
        {\scriptsize $8_{d,t}$}   & {\scriptsize Leaf distance }            &   {\scriptsize Avg over trees with $x$ and $y$ of ratios ``$x$-$y$ leaf distance/depth of the tree''}\\ 
        {\scriptsize $9_{d,t}$}  & {\scriptsize Leaf depth $x$}           & {\scriptsize Avg over trees with $x$ and $y$ of ratios ``root-$x$ distance/depth of the tree''}                                                   \\
       {\scriptsize $10_{d,t}$}  & {\scriptsize Leaf depth $y$  }         & {\scriptsize Avg over trees with $x$ and $y$ of ratios ``root-$y$ distance/depth of the tree''}     \\
       {\scriptsize $11_{d,t}$}   & {\scriptsize LCA distance }      & {\scriptsize Avg over trees with $x$ and $y$ of ratios ``$x$-LCA$(x,y)$ distance/$y$-LCA$(x,y)$ distance''}\\
       {\scriptsize $12_{d,t}$ } & {\scriptsize Depth $x$/$y$ }   & {\scriptsize Avg over trees with $x$ and $y$ of ratios ``root-$x$ distance/root-$y$ distance''}\\
		\bottomrule
	\end{tabular}
	\label{tab:cherry_features}
\end{table}
To assign a class to each cherry, we define 19 features, summarised in Table~\ref{tab:cherry_features}, that may capture essential information about the structure of the set of trees, and that can be efficiently computed and updated at every iteration of the heuristics. 

The \emph{depth} (resp. \emph{topological} depth) of a node $u$ in a tree $T$ is the total branch length (resp. the total number of edges) on the root-to-$u$ path; the depth of a cherry $(x,y)$ is the depth of the common parent of $x$ and $y$; the depth of $T$ is the maximum depth of any cherry of $T$.
The (topological) leaf distance between $x$ and $y$ is the total branch length of the path from the parent of $x$ to the lowest common ancestor of $x$ and $y$, denoted by LCA$(x,y)$, plus the total length of the path from the parent of $y$ to LCA$(x,y)$ (resp. the total number of edges on both paths). In particular, the leaf distance between the leaves of a cherry is zero. 

\subsection{Time Complexity}\label{sub:time_complexity}

Designing algorithms with the best possible time complexity was not the main objective of this work. However, for completeness, we provide worst-case upper bounds on the running time of our heuristics. The omitted proofs can be found in Appendix~\ref{app:time_complex}. We start by stating a general upper bound for the whole CPH framework in the function of the time required by the \textsf{PickNext} routine.

\begin{lemma}\label{lem:time_rand}
The running time of the heuristics in the CPH framework is $\cO(|\cT|^2 |X|+cost(\textsf{PickNext}))$, where $cost(\textsf{PickNext})$ is the total time required to choose reducible pairs over all iterations. In particular, \rand{} takes $\cO(|\cT|^2 |X|)$ time.
\end{lemma}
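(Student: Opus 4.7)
The plan is to decompose the running time of Algorithm~\ref{alg:CPH} into three pieces: the number of iterations of the while loop at lines~\ref{line:startwhile}--\ref{line:endwhile}, the per-iteration work performed outside the call to \textsf{PickNext}, and the cost of \textsf{CompleteSeq}. First, I would bound the number of iterations by a leaf-count potential argument: \textsf{PickNext} returns a pair $(x,y)$ that is reducible in at least one tree of $\cT_S$, and reducing a cherry $(x,y)$ in a tree deletes leaf $x$ from that tree. Hence each iteration strictly decreases $\sum_{T\in\cT_S}|\leaves(T)|$, a quantity that is at most $|\cT|\cdot|X|$ initially, so the loop executes $\cO(|\cT|\cdot|X|)$ times.

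Second, storing parent pointers at every node, checking whether $(x,y)$ is a cherry in a given tree and performing the local edge-contraction described in Section~\ref{sec:preliminaries} costs $\cO(1)$ per tree; the per-iteration cost of line~\ref{line:endwhile} is therefore $\cO(|\cT|)$, for a cumulative contribution of $\cO(|\cT|^2\cdot|X|)$. Third, by inspection of Algorithm~\ref{alg:CompletePartialSequence}, \textsf{CompleteSeq} runs in time linear in $|S|=\cO(|\cT|\cdot|X|)$. Summing these, the framework overhead is $\cO(|\cT|^2|X|)$, and adding the total \textsf{PickNext} budget yields the claimed bound $\cO(|\cT|^2|X|+cost(\textsf{PickNext}))$.

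For \rand{}, I would maintain a dynamic collection $\mathcal{C}$ of ordered leaf pairs that are cherries in at least one current tree, stored in an array together with a companion hash map providing $\cO(1)$ insertion, deletion (via swap-with-last), and lookup, and with each entry carrying a reference count equal to the number of trees in which the pair is currently a cherry. An initial scan of the input trees populates $\mathcal{C}$ in $\cO(|\cT|\cdot|X|)$ time. Each reduction changes at most one cherry per tree, namely the cherry destroyed at $p(x)$ and the one possibly born at the grandparent of $x$, so updating reference counts across all trees costs $\cO(|\cT|)$ per iteration and $\cO(|\cT|^2|X|)$ over the whole run. Uniform random sampling from $\mathcal{C}$ is then $\cO(1)$ per call, giving $cost(\textsf{PickNext})=\cO(|\cT|^2|X|)$, so the overall bound collapses to $\cO(|\cT|^2|X|)$.

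The main obstacle will be getting the data structure for $\mathcal{C}$ right: the same pair may be a cherry in several trees at once, so it must only leave $\mathcal{C}$ when its reference count hits zero, and one must still sample uniformly in constant time from the distinct pairs in $\mathcal{C}$ rather than from (tree, cherry) occurrences. The reference-count plus swap-with-last array handles both requirements cleanly, so this step is expected to be bookkeeping rather than a genuine technical hurdle.
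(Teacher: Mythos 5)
Your proposal is correct and follows essentially the same route as the paper's proof: bound the number of while-loop iterations by $\cO(|\cT|\,|X|)$ (your leaf-count potential is equivalent to the paper's observation that each tree is fully reduced by at most $|X|-1$ pairs), charge $\cO(1)$ per tree per iteration for reduction and cherry-set maintenance, add the linear cost of \textsf{CompleteSeq}, and note that \rand{} samples a reducible pair in $\cO(1)$ time. Your reference-counted array-plus-hashtable structure merely makes explicit the constant-time uniform sampling that the paper asserts without elaboration, so it is a welcome implementation detail rather than a different argument.
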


\begin{proof}
An upper bound for the sequence length is $(|X|-1)|\cT|$ as each tree can individually be fully reduced using at most $|X|-1$ pairs. Hence, the while loop of Algorithm~\ref{alg:CPH} is executed at most $(|X|-1)|\cT|$ times. Moreover, reducing the pair and updating the set of reducible pairs after one iteration takes $O(1)$ time per tree. Combining this with the fact that \textsf{CompleteSeq} takes $\cO(|S|)=\cO(|X||\cT|)$ time, we obtain the stated time complexity.
Since choosing a random reducible pair takes $\cO(1)$ time at each iteration, \rand{} takes trivially $\cO(|\cT|^2 |X|)$ time.
\end{proof}

Note that, by Lemma~\ref{lem:CPS-retic}, the number of reticulations $r(N)$ of the network reconstructed from the output CPS is bounded by $(|X|-1)|\cT|-|X|+1$ and thus the time complexity of \rand{} is also $\cO(r(N)|\cT|)$.

Let us now focus on the time complexity of the machine-learned heuristics \ML{} and \trivML{}.
At any moment during the execution of the heuristics, we maintain a data structure that stores all the current cherries in $\cT$ and allows constant-time insertions, deletions, and access to the cherries and their features. 
A possible implementation of this data structure consists of a hashtable $\cherries$ paired with a list \cherrylist{} of the pairs currently stored in \cherries.
We will use $\cherrylist{}$ to iterate over the current cherries of $\cT$, and $\cherries{}$ to check whether a certain pair is currently a cherry of $\cT$ and to access its features.

Note that the total number of cherries inserted in $\cherries$ over all the iterations is bounded by the total size of the trees $||\cT||$ because up to two cherries can be created for each internal node over the whole execution. 
We will assume that we have constant-time access to the leaves of each tree: specifically, given $T\in\cT$ and $x\in X$, we can check in constant time whether $x$ is currently a leaf of $T$\footnote{This can be obtained maintaining a list of leaves of each tree and a hashtable with the leaves as keys: the value of a key $x$ is a pointer to the position of $x$ in the list.}.

\paragraph{Initialisation}
The cherries of $\cT$ can be identified and features 1-3 can be initially computed in $\cO(||\cT||)$ time by traversing all trees bottom-up. Features 4-5 can be computed in $\cO(\min\{|\cT|\cdot ||\cT||,|\cT|\cdot|X|^2\})$ time by checking, for each $T\in\cT$ and each cherry $(x,y)$ of $\cT$, whether both $x$ and $y$ appear in $T$.  
Features $6_{d,t}$ to $12_{d,t}$ can also be initially computed with a traversal of $\cT$ made efficient by preprocessing each tree in linear time to allow constant-time LCA queries~\cite{DBLP:journals/siamcomp/HarelT84} and by storing the depth (both topological and with the branch lengths) of each node. We also store the topological and branch length depth of each tree and their maximum value over $\cT$.  Altogether this gives the following lemma.

\begin{lemma}\label{lem:initialization}
Initialising all features for a tree set $\cT$ of total size $||\cT||$ over a set of taxa $X$ requires  $\cO(\min\{|\cT|\cdot ||\cT||,|\cT|\cdot|X|^2\})$ time and $\cO(||\cT||)$ space.
\end{lemma}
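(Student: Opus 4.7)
The plan is to handle the four feature groups separately and show each fits within the stated budget, leaning on three auxiliary data structures that can themselves be built in $\cO(||\cT||)$ time and space: (i) for each tree $T\in\cT$, a leaf-presence hashtable supporting $\cO(1)$ queries of the form ``is $x$ currently a leaf of $T$?''; (ii) for each tree $T$, the Harel--Tarjan~\cite{DBLP:journals/siamcomp/HarelT84} LCA preprocessing plus an array storing the topological and branch-length depths of every node and the global maxima across $\cT$; and (iii) the hashtable $\cherries$ together with the list $\cherrylist$ to be populated during the initialisation itself.

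First I would populate $\cherries$ and compute features 1--3 with a single bottom-up traversal of every $T\in\cT$. At each internal node whose children are leaves $x$ and $y$, the ordered cherries $(x,y)$ and $(y,x)$ are inserted into $\cherries$ (if not already there) and their counters incremented. Feature 2 requires peeking at the sibling of $p(x)$ to see which cherry would be created if $(x,y)$ were picked; this is $\cO(1)$ per node. Feature 3 follows from feature 2 together with a running global cherry count. The total work is proportional to the number of nodes traversed, so $\cO(||\cT||)$ time and space.

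Next I would handle features 4--5 and 6--12 together, since they have the same combinatorial structure: for every cherry $(x,y)\in\cherrylist$ and every tree $T\in\cT$ with both $x,y\in L(T)$, compute a constant amount of information. For features 4--5 this is just incrementing counters after the leaf-presence check. For features $6_{d,t}$--$12_{d,t}$ this is a single LCA query plus a constant number of depth lookups in $T$, all in $\cO(1)$ after preprocessing. The aggregate cost is $\cO(|\cT|\cdot|\cherrylist|)$. The key step is to bound $|\cherrylist|$ two different ways: on the one hand, every distinct cherry that is ever inserted arises from some internal node in some tree, giving $|\cherrylist|=\cO(||\cT||)$; on the other, cherries are ordered pairs of taxa, so trivially $|\cherrylist|=\cO(|X|^2)$. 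Taking the minimum yields the advertised $\cO(\min\{|\cT|\cdot ||\cT||,\;|\cT|\cdot|X|^2\})$ time bound, which dominates the other steps.

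Finally, for the space bound I would tally each structure: the LCA preprocessing and depth arrays take $\cO(||\cT||)$ across all trees; the leaf-presence hashtables collectively store one entry per (tree, leaf) pair, again $\cO(||\cT||)$; and $\cherries$ plus $\cherrylist$ store $\cO(1)$ feature values per distinct cherry, which is $\cO(||\cT||)$ by the first bound above. The main obstacle I anticipate is justifying the double bound on $|\cherrylist|$ cleanly and verifying that the running sums maintained for features 4--12 (e.g.\ the ``average over trees containing $(x,y)$'' quantities) can be updated in $\cO(1)$ per (cherry, tree) visit by storing numerator/denominator counters rather than recomputing averages from scratch; everything else is a direct accounting exercise.
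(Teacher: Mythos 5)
Your proposal is correct and takes essentially the same route as the paper: features 1--3 via a single bottom-up traversal in $\cO(||\cT||)$ time, features 4--5 (and, in your unified treatment, $6_{d,t}$--$12_{d,t}$) via constant-time work per (cherry, tree) pair after Harel--Tarjan LCA preprocessing and stored depths, with the number of distinct cherries bounded both by $\cO(||\cT||)$ and by $\cO(|X|^2)$ to yield the stated minimum. Your explicit double bound on $|\cherrylist|$ and the numerator/denominator bookkeeping for the averages are just slightly more detailed renderings of the accounting the paper leaves implicit.
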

The next lemma provides an upper bound on the time complexity of updating the distance-independent features. 

\begin{restatable}{lemma}{lemfeaturesA}\label{lem:feaures1-5}
Updating features 1-5 for a set $\cT$ of $|\cT|$ trees of total size $||\cT||$ over a set of taxa $X$ requires $\cO(|\cT|(||\cT||+|X|^2))$ total time and $\cO(||\cT||)$ space.
\end{restatable}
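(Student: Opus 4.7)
The plan is to maintain, for each cherry currently appearing in some tree of $\cT$, a small set of explicit counters from which all five features can be read in constant time. For each $(x,y)\in\cherries{}$ I store: $c_1(x,y)$, the number of trees in which $(x,y)$ is a cherry; $c_5(x,y)$, the number of trees that still contain both leaves $x$ and $y$; and $c_2(x,y)$, the number of trees in which picking $(x,y)$ would create a new cherry. I also maintain a global counter $C=\sum_{(a,b)\in\cherries{}} c_1(a,b)$ of total cherry instances in $\cT$. Then features 1--5 become the constant-time expressions $c_1/|\cT|$, $c_2$, $C/(C-c_1+c_2)$, $c_1/c_5$ and $c_5/|\cT|$ respectively, so the proof reduces to bounding the cost of keeping the counters synchronised with $\cT$ throughout the run.

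For features 1, 2 and 3 I will argue that each reduction triggers only $\cO(1)$ counter updates per affected tree. Reducing $(x,y)$ in tree $T$ removes the cherry $(x,y)$ from $T$, produces at most one new cherry (involving $y$ and the former sibling of $p(p(x))$), and can flip the ``picking creates a new cherry'' status only for a constant-size neighbourhood of cherries around the contracted node. Charging each such update to the reduction that caused it, and noting that the total number of reductions over the whole run is $\cO(||\cT||)$, the cumulative update cost for features 1, 2, 3 and for $C$ is $\cO(||\cT||)$.

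Feature 5 (and, via $c_5$, feature 4) is where the remaining work lives, and it splits into two sources. First, whenever a new cherry $(y,w)$ is inserted into \cherries{}, we must initialise $c_5(y,w)$ by counting the trees that currently contain both leaves $y$ and $w$, which takes $\cO(|\cT|)$ per insertion. Since the total number of cherry insertions over the whole execution is $\cO(||\cT||)$ (each reduction creates at most one new cherry and the initial cherries total $\cO(||\cT||)$), this contributes $\cO(|\cT|\cdot||\cT||)$ in aggregate. Second, whenever a leaf $x$ is deleted from a tree, every current cherry in \cherries{} involving $x$ must have its $c_5$-counter decremented; using a per-leaf index into \cherries{}, this costs $\cO(|X|)$ per deletion, and summing over at most $|X|-1$ deletions per tree and $|\cT|$ trees contributes $\cO(|\cT|\cdot|X|^2)$. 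Combining everything gives the claimed $\cO(|\cT|(||\cT||+|X|^2))$ total time; the space bound is $\cO(||\cT||)$ because at most $\cO(||\cT||)$ distinct cherries ever appear, each carrying $\cO(1)$ integer counters, and the per-leaf indexes have the same total size.

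The main obstacle will be the $c_2$ bookkeeping: a naive recomputation of feature 2 after each reduction would scan all cherries of the modified tree and blow the bound up by a factor of $|X|$. The saving must come from arguing that reducing $(x,y)$ in $T$ can affect the ``new-cherry-created'' status of only a constant number of cherries, all lying in a small neighbourhood of the contracted node, so that the amortised cost per reduction remains $\cO(1)$.
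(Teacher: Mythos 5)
Your overall architecture matches the paper's proof on the parts that dominate the bound: features $1$, $4$, $5$ are handled essentially identically (constant-time counter updates keyed by a hashtable of cherries; an $\cO(|\cT|)$ scan to initialise the ``both leaves present'' count of each newly created cherry, charged against the $\cO(||\cT||)$ cherry insertions to get $\cO(|\cT|\cdot||\cT||)$; and an $\cO(|X|)$-per-leaf-deletion update of feature~5, giving $\cO(|\cT|\cdot|X|^2)$ --- the paper scans the leaves of the tree in which the reduction occurred, you use a per-leaf index into \cherries{}, which is an equivalent variant). The space argument also coincides.

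The genuine gap is in your treatment of features $2$ and $3$. Feature~2 counts the \emph{distinct} new cherries of $\cT$ created by picking $(x,y)$ --- the paper's proof maintains this via deduplicating auxiliary structures $\textsf{new\_cherries}_{(y,z)}$, rebuilt and discarded at each iteration in $\cO(|\cT|)$ time --- whereas your $c_2(x,y)$ counts the number of \emph{trees} in which picking $(x,y)$ creates a new cherry. These differ whenever the same new cherry would arise in several trees, so ``feature $2 = c_2$'' is false. Your feature-3 formula $C/(C-c_1+c_2)$ inherits this and has further problems even under an instance-count reading: reducing $(x,y)$ in a tree removes \emph{both} ordered instances $(x,y)$ and $(y,x)$ and can create \emph{two} ordered new cherries, so the correction term is not $-c_1+c_2$. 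Repairing this would require, per cherry $(a,b)$, a persistent multiset of its candidate new cherries across trees, which is exactly the bookkeeping you were trying to avoid; the paper sidesteps the whole issue by recomputing these small structures from scratch each iteration instead of maintaining them incrementally. Finally, you explicitly leave unproven your central amortisation claim that a reduction flips the ``picking creates a new cherry'' status of only $\cO(1)$ cherries per tree. The claim is in fact true (only a cherry whose parent is the sibling of the contracted node $p(x)$ can flip, since after contraction its grandparent's other child becomes the leaf $y$), but as written your proof both rests on an admitted conjecture and feeds it into formulas that compute the wrong quantities, so it does not establish the stated lemma without substantive repair.
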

 Since searching for trivial cherries at each iteration of the randomised heuristic \trivialrand{} can be done with the same procedure we use for updating feature $4$ in the machine-learned heuristics, which in particular requires $\cO(|\cT|\cdot||\cT||)$ time, we have the following corollary.

\begin{corollary}
The time complexity of \trivialrand{} is $\cO(|\cT|\cdot||\cT||)=\cO(|\cT|^2\cdot |X|)$.
\end{corollary}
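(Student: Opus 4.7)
The plan is to apply Lemma~\ref{lem:time_rand} and bound $cost(\textsf{PickNext})$ for \trivialrand{}. Recall that the generic bound from that lemma is $\cO(|\cT|^2 |X| + cost(\textsf{PickNext}))$, where $cost(\textsf{PickNext})$ denotes the total cost of all calls to \textsf{PickNext} over the $\cO(|\cT|\cdot |X|)$ iterations of the main while loop. Thus the task reduces to giving a suitable upper bound on the total cost of identifying and selecting a trivial pair (or, failing that, a uniformly random reducible pair) at each iteration.

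The key observation is that a reducible pair $(x,y)$ of $\cT_S$ is trivial precisely when feature 4 of Table~\ref{tab:cherry_features} evaluates to $1$: by definition, feature 4 is the ratio of trees in $\cT_S$ containing both $x$ and $y$ in which $(x,y)$ is a cherry. Hence, the procedure used to maintain feature 4 inside the machine-learned heuristics can be reused verbatim inside \textsf{PickNext} to keep track, at each iteration, of the subset of reducible pairs that are trivial. Once this subset is maintained, picking a random trivial pair (or, if the subset is empty, a uniformly random reducible pair) takes only $\cO(1)$ per iteration, contributing $\cO(|\cT|\cdot|X|)$ across the whole execution.

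By the proof of Lemma~\ref{lem:feaures1-5} (deferred to the appendix), the total cost of maintaining feature 4 across all iterations is $\cO(|\cT|\cdot||\cT||)$. Consequently $cost(\textsf{PickNext})=\cO(|\cT|\cdot||\cT||)$, and plugging this into Lemma~\ref{lem:time_rand} gives a running time of
\[
\cO\bigl(|\cT|^2 |X| + |\cT|\cdot ||\cT||\bigr) = \cO\bigl(|\cT|\cdot ||\cT||\bigr),
\]
where the last equality follows because $|\cT|^2 |X| = |\cT|\cdot(|\cT|\cdot|X|)$ and $||\cT||=\Theta(|\cT|\cdot|X|)$ in the worst case. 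The second equality of the statement, $\cO(|\cT|\cdot||\cT||)=\cO(|\cT|^2\cdot|X|)$, is then immediate from the trivial upper bound $||\cT||=\cO(|\cT|\cdot|X|)$, since each of the $|\cT|$ input trees has at most $|X|$ leaves (and hence $\cO(|X|)$ nodes). There is no real obstacle here: the entire proof is just an application of Lemma~\ref{lem:time_rand} together with the identification of the trivial-pair check with the maintenance of feature 4, which is already handled in the appendix.
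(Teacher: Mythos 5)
Your proof is correct and follows essentially the same route as the paper, which likewise combines Lemma~\ref{lem:time_rand} with the observation that trivial-cherry detection can reuse the feature-4 maintenance procedure of the machine-learned heuristics, costing $\cO(|\cT|\cdot||\cT||)$ in total by (the proof of) Lemma~\ref{lem:feaures1-5}. Your closing identification $||\cT||=\Theta(|\cT|\cdot|X|)$ holds only when the trees are (near-)exhaustive, but the paper's own statement equates the two bounds under the same implicit worst-case reading, so this is not a gap.
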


The total time required for updating the distance-dependent features raises the time complexity of \ML{} and \trivML{} to quadratic in the input size. However, the extensive analysis reported in Appendix~\ref{app:time_complex} shows that this is only due to the single feature $6_d$, and without such a feature, the machine-learned heuristics would be asymptotically as fast as the randomised ones. Since Table~\ref{tab:feature_importances} in Appendix~\ref{app:Ml_models_heatmaps} shows that this feature is not particularly important, in future work it could be worth investigating whether disregarding it leads to equally good results in shorter time.
\begin{restatable}{lemma}{lemfeaturesB}\label{lem:ML_running_time}
    The time complexity of \ML{} and \trivML{} is $\cO(||\cT||^2)$.
\end{restatable}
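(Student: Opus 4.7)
The plan is to bound four separate sources of work incurred by \ML{} and \trivML{}---the generic CPH overhead, the one-shot initialisation of the features, the cumulative cost of updating the features during the execution, and the calls to the random-forest classifier---and argue that their sum fits within $\cO(||\cT||^2)$. The CPH overhead is handled by Lemma~\ref{lem:time_rand}, which I would if needed sharpen by noting that each while-loop iteration removes at least one leaf from some tree and therefore the loop runs $\cO(||\cT||)$ rather than $\cO(|\cT|\cdot|X|)$ times; coupled with the $\cO(|\cT|)$ per-iteration cost this gives $\cO(|\cT|\cdot||\cT||)$. Lemma~\ref{lem:initialization} takes care of the initial feature computation, and Lemma~\ref{lem:feaures1-5} covers the updates of the distance-independent features 1--5; it then only remains to analyse features $6_{d,t}$--$12_{d,t}$ and the prediction cost.

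For the random-forest predictions I would use that each one is $\cO(1)$ and that the total number of (cherry, iteration) pairs at which a prediction is performed is bounded by the number of cherries ever created, which is $\cO(||\cT||)$. For features $7_{d,t}$--$12_{d,t}$ and $6_t$ I would argue that they are \emph{local}: when a pair $(x,y)$ is reduced in a tree $T$, only the cherries of $T$ incident to the edited region are affected, and their feature values can be refreshed in $\cO(1)$ each using the data structures of Section~\ref{sub:time_complexity} (cached root-to-node depths, both topological and weighted, and constant-time LCA queries). These updates amortise to $\cO(||\cT||)$ total work over the whole execution, since across all iterations only $\cO(||\cT||)$ cherry slots are ever touched.

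The hard part---and the actual source of the quadratic bound---is feature $6_d$. Its value at every current cherry is a ratio whose denominator is the \emph{global} maximum branch-length depth over $\cT_S$. A single reduction can shrink the depth of some tree enough to change this global maximum, which then forces a refresh of $6_d$ at every currently stored cherry---up to $\cO(||\cT||)$ cherries per iteration over $\cO(||\cT||)$ iterations, contributing $\cO(||\cT||^2)$. Summing all four contributions, and verifying that the trivial-pair detection used by \trivML{} is subsumed by the update of feature~4 already accounted for in Lemma~\ref{lem:feaures1-5}, gives the claimed $\cO(||\cT||^2)$ bound. The main obstacle will be writing the amortised analysis of feature $6_d$ cleanly (in particular, handling the case where multiple consecutive iterations change the global maximum) and keeping track of the constants hidden by the data-structure bookkeeping.
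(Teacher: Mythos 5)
Your overall decomposition matches the paper's own proof exactly: generic CPH overhead via Lemma~\ref{lem:time_rand}, initialisation via Lemma~\ref{lem:initialization}, features 1--5 via Lemma~\ref{lem:feaures1-5}, and the distance-dependent features with $6_d$ as the sole quadratic bottleneck. Your diagnosis of $6_d$ is also the paper's: its denominator is the \emph{global} maximum branch-length depth, which, branch lengths being real-valued, can change at every single iteration, forcing a refresh of all $\cO(||\cT||)$ currently stored cherries per iteration and hence $\cO(||\cT||^2)$ in total.

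There is, however, a genuine gap in your treatment of the remaining features: the claim that $6_t$ and $7_{d,t}$--$12_{d,t}$ are \emph{local} and amortise to $\cO(||\cT||)$ total work is false. Feature $6_t$ has the same global denominator as $6_d$ (the maximum \emph{topological} depth over all trees), so a change in that maximum invalidates the stored value at \emph{every} current cherry; what rescues it is not locality but integrality: the maximum topological depth is a nonincreasing integer bounded by $|X|$, so it changes only $\cO(|X|)$ times, and the paper charges $\cO(|\cT||X|^2)$ for these global refreshes. Similarly, features $7$--$12$ are averages of ratios whose denominators are per-tree depths: when reducing $(x,y)$ changes the depth of a tree $T$ (the deleted node $p(x)$ may have realised it), the stored averages become stale for every cherry whose two leaves both occur in $T$ --- not merely cherries adjacent to the edited edge --- and the paper refreshes them with $\cO(1)$-per-cherry update formulas of the form of Equation~\ref{eq:avg_update}, at a cost of up to $\cO(||\cT||)$ \emph{per iteration}, i.e.\ quadratic overall, not $\cO(||\cT||)$ total. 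Likewise, when $x$ disappears from $T$, the contribution of $T$ must be subtracted from features $8$--$12$ of all pairs $(x,w)$ over all leaves $w$ of $T$, which the paper charges at $\cO(|X|^2)$ per tree over the whole execution. All of these corrected counts still fit inside $\cO(||\cT||^2)$, so your final bound stands, but the amortisation argument as written would not survive scrutiny. Your classifier-call count has the same flavour of problem: predicted probabilities go stale whenever features change, so bounding the number of predictions by the $\cO(||\cT||)$ cherries ever created is unjustified; the safe accounting --- re-predicting each current cherry once per iteration at $\cO(1)$ each --- is again $\cO(||\cT||^2)$, which suffices (the paper itself leaves this cost implicit).
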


\subsection{Obtaining Training Data}\label{sub:training}
The high-level idea to obtain training data is to first generate a phylogenetic network $N$; then to extract the set $\cT$ of all the exhaustive trees displayed in $N$; and finally, to iteratively choose a random reducible pair $(x,y)$ of $N$, to reduce it in $\cT$ as well as in~$N$, and to label the remaining cherries of $\cT$ with one of the four classes defined in Section~\ref{sec:ML} until the network is fully reduced.

We generate two different kinds of binary orchard networks, normal and not normal, with branch lengths and up to 9 reticulations using the LGT (lateral gene transfer) network generator of~\cite{pons2019generation}, imposing normality constraints when generating the normal networks.
For each such network $N$, we then generate the set $\cT$ consisting of all the exhaustive trees displayed in $N$. 

If $N$ is normal, $N$ is an optimal network for $\cT$~\cite[Theorem 3.1]{willson2010regular}. 
This is not necessarily true for any LGT-generated network, but even in this case, we expect $N$ to be reasonably close to optimal, because we remove redundant reticulations when we generate it and because the trees in $\cT$ cover all the edges of $N$. In particular, for LGT networks $r(N)$ provides an upper bound estimate on the minimum possible number of reticulations of any network displaying~$\cT$, and we will use it as a reference value for assessing the quality of our results on synthetic LGT-generated data.

\section{Experiments}\label{sec:experiments}

The code of all our heuristics and for generating data is written in Python and is available at \texttt{https://github.com/estherjulien/learn2cherrypick}. All experiments ran on an Intel Xeon Gold 6130 CPU @ 2.1 GHz with 96 GB RAM. We conducted experiments on both synthetic and real data, comparing the performance of \rand{}, \trivialrand{}, \ML{} and \trivML{}, using threshold $\tau=0$. 
Similar to the training data, we generated two synthetic datasets by first growing a binary orchard network $N$ using~\cite{pons2019generation}, and then extracting $\cT$ as a subset of the exhaustive trees displayed in $N$. We provide details on each dataset in Section~\ref{sub:heu_results}. 

We start by analysing the usefulness of tree expansion, the heuristic rule described in Section~\ref{sub:relabelling}. We synthetically generated 112 instances for each tree set size $|\cT|\in\{5,10,20,50,100\}$ (560 in total), all consisting of trees with 20 leaves each, and grouped them by $|\cT|$; we then ran \trivialrand{} 200 times (both with and without tree expansion) on each instance, selected the best output for each of them, and finally took the average of these results over each group of instances. The results are in Figure~\ref{fig:trivial}, showing that the use of tree expansion brought the output reticulation number down by at least 16\% (for small instances) and up to 40\% for the larger instances. We consistently chose to use this rule in all the heuristics that detect trivial cherries, namely, \trivialrand{}, \trivML{},
\ML{} (although \ML{} does not explicitly favour trivial cherries, it does check whether a selected cherry is trivial using feature number 2), and the non-learned heuristic that will be introduced in Section~\ref{sec:new_heuristic}.

\begin{figure}[t]
    \centering
   \includegraphics[width=0.8
   \columnwidth]{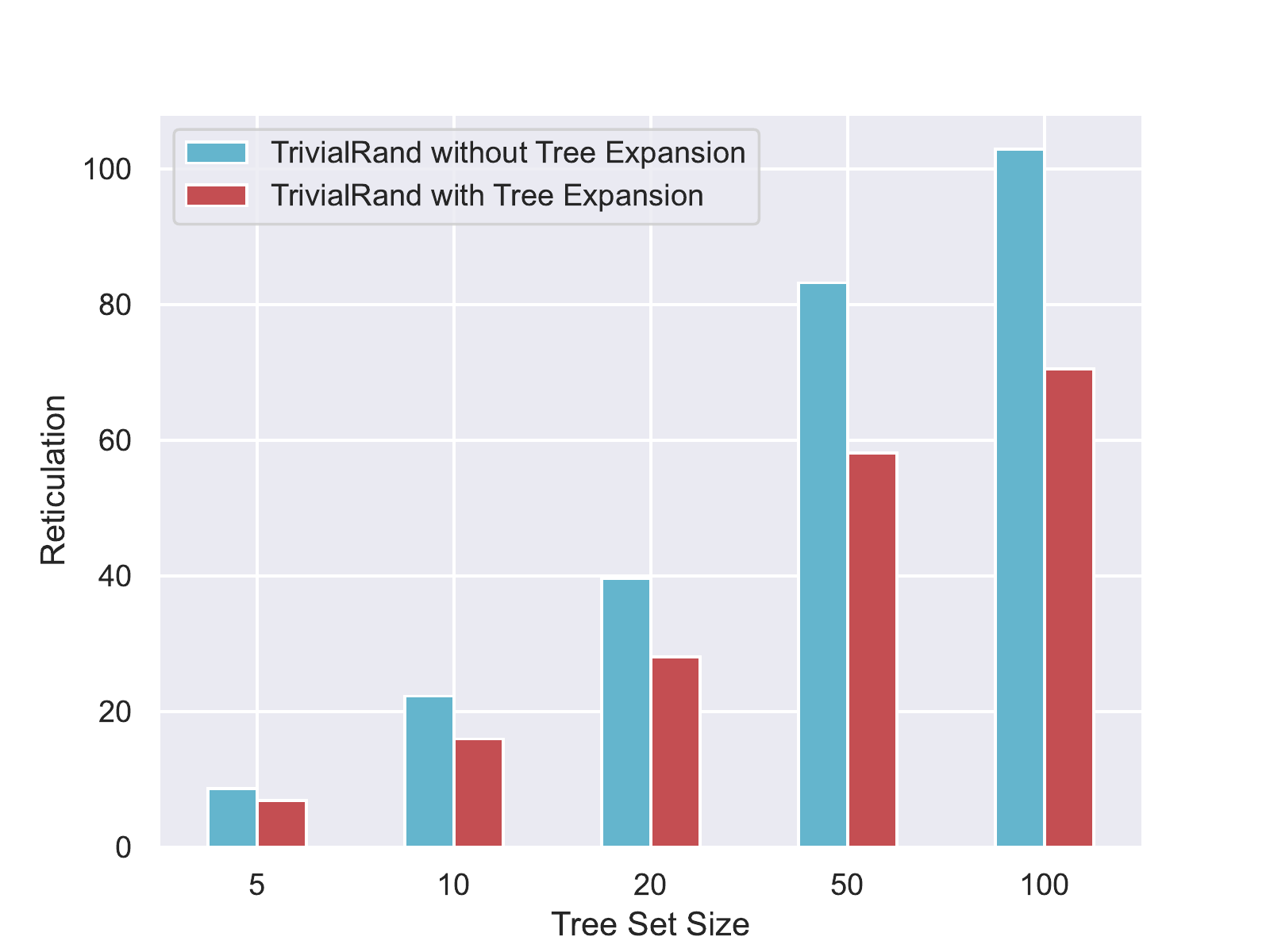}
    \caption{Number of reticulations output by \trivialrand{} with and without using tree expansion. The height of the bars is the average reticulation number over each group, obtained by selecting the best of 200 runs for each instance.}
    \label{fig:trivial}
\end{figure}

\subsection{Prediction Model}
The random forest is implemented with Python's \texttt{scikit-learn} \cite{scikit-learn} package using default settings. We evaluated the performance of our trained random forest models on different datasets in a holdout procedure: namely, we removed 10\% of the data from each training dataset, trained the models on the remaining 90\% and used the holdout 10\% for testing. The accuracy was assessed by assigning to each test data point the class with the highest predicted probability and comparing it with the true class. Before training the models, we balanced each dataset so that each class had the same number of representatives.

Each training dataset differed in terms of the number $M$ of networks used for generating it and the number of leaves of the networks. For each dataset, the number $L$ of leaves of each generated network was uniformly sampled from $[2, \max L]$, 
where $\max L$ is the maximum number of leaves per network. We constructed LGT networks using the LGT generator of~\cite{pons2019generation}. This generator has three parameters: $n$ for the number of steps, $\alpha$  for the probability of lateral gene transfer events, and $\beta$ for regulating the size of the biconnected components of the network (called \emph{blobs}). The combination of these parameters determines the level (maximum number of reticulations per blob), the number of
reticulations, and the number of leaves of the output network. In our experiments, $\alpha$ was uniformly sampled from $[0.1, 0.5]$ and $\beta=1$ (see \cite{pons2019generation} for more details).

To generate normal networks we used the same generator with the same parameters, but before adding a reticulation we check if it respects the normality constraints and only add it if it does. Each generated network gave rise to a number of data points: the total number of data points per dataset is shown in Table \ref{tab:ml_description} in Appendix~\ref{app:random_forests}.
Each row of Table \ref{tab:ml_description} corresponds to a dataset on which the random forest can be trained, obtaining as many ML models. We tested all the models on all the synthetically generated instances: we show these results in Figures~\ref{fig:ml_heur_perf_norm},~\ref{fig:ml_heur_perf_LGT} and~\ref{fig:ml_heur_perf_ZODS} in Appendix~\ref{app:Ml_models_heatmaps}. In Section~\ref{sub:heu_results} we will report the results obtained for the best-performing model for each type of instance. 

Among the advantages of using a random forest as a prediction model, there is the ability of computing feature importance, shown in Table \ref{tab:feature_importances} in Appendix \ref{app:random_forests}. Some of the most useful features for a cherry $(x,y)$  appear to be `Trivial' (the ratio of the trees containing both leaves $x$ and $y$ in which $(x,y)$ is a cherry) and `Cherry in tree' (the ratio of trees that contain $(x,y)$). 
This was not unexpected, as these features are well-suited to identify trivial cherries. 

`Leaf distance' (t,d), `LCA distance' (t) and `Depth $x/y$' (t) are also important features. The rationale behind these features was to try to identify reticulated cherries. This was also the idea for the feature `Before/after', but this has, surprisingly, a very low importance score. 
In future work, we plan to conduct a thorough analysis of whether some of the seemingly least important features can be removed without affecting the quality of the results.
 
\subsection{Experimental Results}\label{sub:heu_results}
We assessed the performance of our heuristics on instances of four types: 
normal, LGT, ZODS (binary non-orchard networks), and real data. Normal, LGT and ZODS data are synthetically generated. We generated the normal instances much as we did for the training data: we first grew a normal network using the LGT generator and then extracted all the exhaustive trees displayed in the network. We generated normal data for different combinations of the following parameters: $L \in \{20, 50, 100\}$ (number of leaves per tree) and $R \in \{5, 6, 7\}$ (reticulation number of the original network). Note that, for normal instances, $|\cT| = 2^R$. For every combination of the parameters $L$ and $R$ we generated 48 instances: by \emph{instance group} we indicate the set of instances generated for one specific parameter pair.

For the LGT instances, we grew the networks using the LGT generator, but unlike for the normal instances we then extracted only a subset of the exhaustive trees from each of them, up to a certain amount $|\cT| \in \{20, 50, 100\}$. The other parameters for LGT instances are the number of leaves $L \in \{20, 50, 100\}$ and the number of reticulations $R \in \{10, 20, 30\}$. 
For a fixed pair $(L,|\cT|)$, we generated 16 instances for each possible value of $R$, and analogously, for a fixed pair $(L,R)$ we generated 16 instances for each value of $|\cT|$. The 48 instances generated for a fixed pair of values constitute a LGT instance group.

We generated non-orchard binary networks using the ZODS generator~\cite{zhang2018bayesian}. This generator has two user-defined parameters: $\lambda$, which regulates the speciation rate, and $\nu$, which regulates the hybridization rate. Following~\cite{janssen2021comparing} we set $\lambda = 1$ and we sampled $\nu\in[0.0001, 0.4]$ uniformly at random. Like for the LGT instances, we generated an instance group of size $48$ for each pair of values $(L,|\cT|)$ and $(L,R)$, with $L \in \{20, 50, 100\}$, $|\cT| \in \{20, 50, 100\}$, $R \in \{10, 20, 30\}$.

Finally, the real-world dataset consists of gene trees on homologous gene sets found in bacterial and archaeal genomes, was originally constructed in \cite{beiko2011telling} and made binary in \cite{van2019practical}. We extracted a subset of instances (Table~\ref{tab:real_data}) 
from the binary dataset, for every combination of parameters $L\in\{20, 50, 100\}$ and $|\mathcal{T}|\in\{10,20,50,100\}$.

\begin{table}
    \centering
        \caption{Number of real data instances for each group (combination of parameters $L$ and $|\cT|$).}
        \begin{tabular}{llccccll}
        \toprule
            {\scriptsize$L$} & & \multicolumn{4}{c}{{\scriptsize$|\cT|$}} & & {\scriptsize Tot. Trees}\\ \hhline{-~----~-}
        &  & {\scriptsize10}  &  {\scriptsize20}  &  {\scriptsize50}  &  {\scriptsize100} &  &\\
        \midrule
        {\scriptsize 20}  &  & {\scriptsize 50} &   {\scriptsize 50} &   {\scriptsize 50} &  {\scriptsize 50} & & {\scriptsize 1684}\\
        {\scriptsize 50}  &  & {\scriptsize 20} &   {\scriptsize 20} &   {\scriptsize 20} &  {\scriptsize 20} & & {\scriptsize 290}\\
        {\scriptsize 100} &  &  {\scriptsize 5} &    {\scriptsize 5} &    {\scriptsize 1} &    {\scriptsize 0} & & {\scriptsize 53}\\
        \bottomrule
        \end{tabular}

    \label{tab:real_data}
\end{table}

For the synthetically generated datasets, we evaluated the performance of each heuristic in terms of the output number of reticulations, comparing it with the number of reticulations of the network $N$ from which we extracted $\cT$. 
For the normal instances, $N$ is the optimal network~\cite[Theorem 3.1]{willson2010regular}; this is not true, in general, for the LGT and ZODS datasets, but even in these cases, $r(N)$ clearly provides an estimate (from above) 
of the optimal value, and thus we used it as a reference value for our experimental evaluation.

For real data, in the absence of the natural estimate on the optimal number of reticulations provided by the starting network, we evaluated the performance of the heuristics comparing our results with the ones given by the exact algorithms from~\cite{van2019practical} (\Treechild) and from~\cite{hybroscale} (\Hybro), using the same datasets that were used to test the two methods in \cite{van2019practical}. These datasets consist of rather small instances ($|\cT|\leq 8$); for larger instances, we run \trivialrand{} 1000 times for each instance group, selected the best result for each group, and used it as a reference value (Figure~\ref{fig:real_instances}).

We now describe in detail the results we obtained for each type of data and each of the algorithms we tested.

\subsubsection{Experiments on Normal Data}\label{exp:normal}
For the experiments in this section we used he ML model trained on 1000 normal networks with at most 100 leaves per network (see Figure~\ref{fig:ml_heur_perf_norm} in Appendix~\ref{app:Ml_models_heatmaps}). We ran the machine-learned heuristics once for each instance and then averaged the results within each instance group (recall that one instance group consists of the sets of all the exhaustive trees of 48 normal networks having the same fixed number of leaves and reticulations). The randomised heuristics \rand{} and \trivialrand{} were run $\min\{x(I), 1000\}$ times for each instance $I$, where $x(I)$ is the number of runs that can be executed in the same time as one run of \ML{} on the same instance. We omitted the results for \textsf{LowPair} because they were at least 44\% worse on average than the worst-performing heuristic we report.

\begin{figure}[t]
    \centering
   \includegraphics[width=\columnwidth]{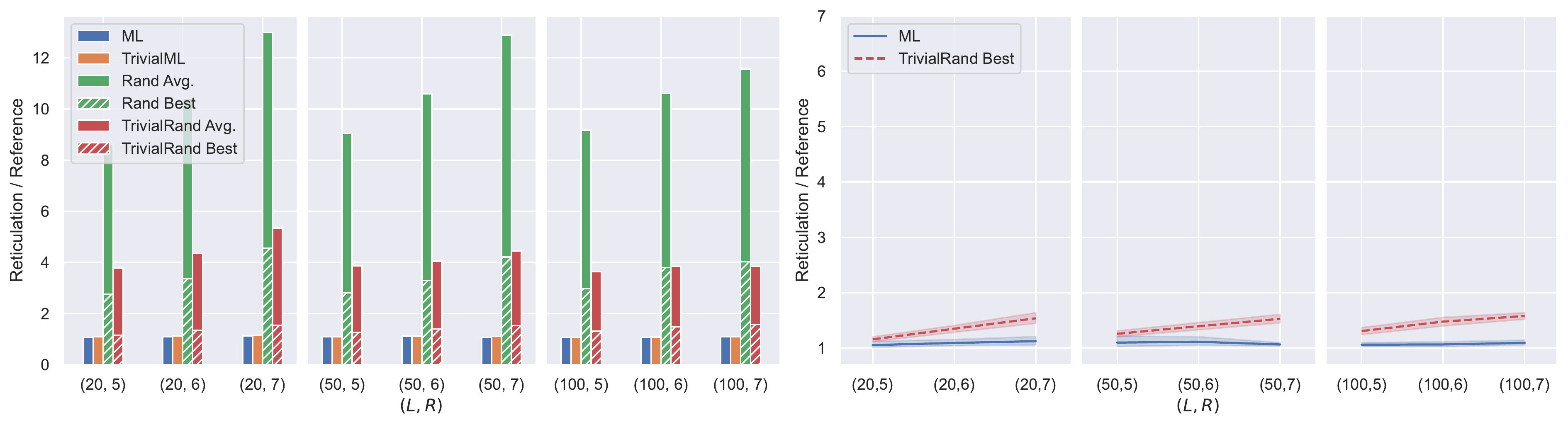}
    \caption{Experimental results for normal data. Each point on the horizontal axis corresponds to one instance group. In the left graph, the height of each bar gives the average of the results over all instances of the group, scaled by the optimum value for the group. The right graph compares the average output of \ML{} within each instance group and the average of the best output given by \trivialrand{} for each instance of a group. The shaded areas represent 95\% confidence intervals.}
    \label{fig:results_normal}
\end{figure}

In Figure~\ref{fig:results_normal} we summarise the results. 
Solid bars represent the ratio between the average reported reticulation number and the optimal value, for each instance group and for each of the four heuristics. 
Dashed bars represent the ratio between the average (over the instances within each group) of the best result among the $\min\{x(I), 1000\}$ runs for each instance $I$ and the optimum. 

The machine-learned heuristics \ML{} and \trivML{} seem to perform very similarly, both leading to solutions close to optimum. The average performance of 
\trivialrand{} is around 4 times worse than the machine-learned heuristics; in contrast, if we only consider the best solution among the multiple runs for each instance, they are quite good, having only up to 49\% more reticulations than the optimal solution, but they are still at least 4\% worse (29\% worse on average) than the machine-learned heuristics' solutions: see the right graph of igure~\ref{fig:results_normal}.

The left graph of Figure~\ref{fig:results_normal} shows that the performance of the randomised heuristics seems to be negatively impacted by the number of reticulations of the optimal solution, 
while we do not observe a clear trend for the machine-learned heuristics, whose performance is very close to optimum for all the considered instance groups. Indeed, the number of existing phylogenetic networks with a certain number of leaves grows exponentially in the number of reticulations, thus making it less probable to reconstruct a ``good" network with random choices. This is consistent with the existing exact methods being FPT in the number of reticulations~\cite{DBLP:journals/siamcomp/WhiddenBZ13,van2019practical}.

The fully randomised heuristic \rand{} always performed much worse than all the others, indicating that identifying the trivial cherries has a great impact on the effectiveness of the algorithms (recall that \ML{} implicitly identifies trivial cherries).

\subsubsection{Experiments on LGT Data}\label{exp:LGT} For the experiments on LGT data we used the ML model trained on 1000 LGT networks with at most 100 leaves per network (see Figure~\ref{fig:ml_heur_perf_LGT} in Appendix~\ref{app:Ml_models_heatmaps}). The setting of the experiments is the same as for the normal data (we run the randomised heuristics multiple times and the machine-learned heuristics only once for each instance), with two important differences. 
\begin{figure}[t]
    \centering
   \includegraphics[width=\columnwidth]{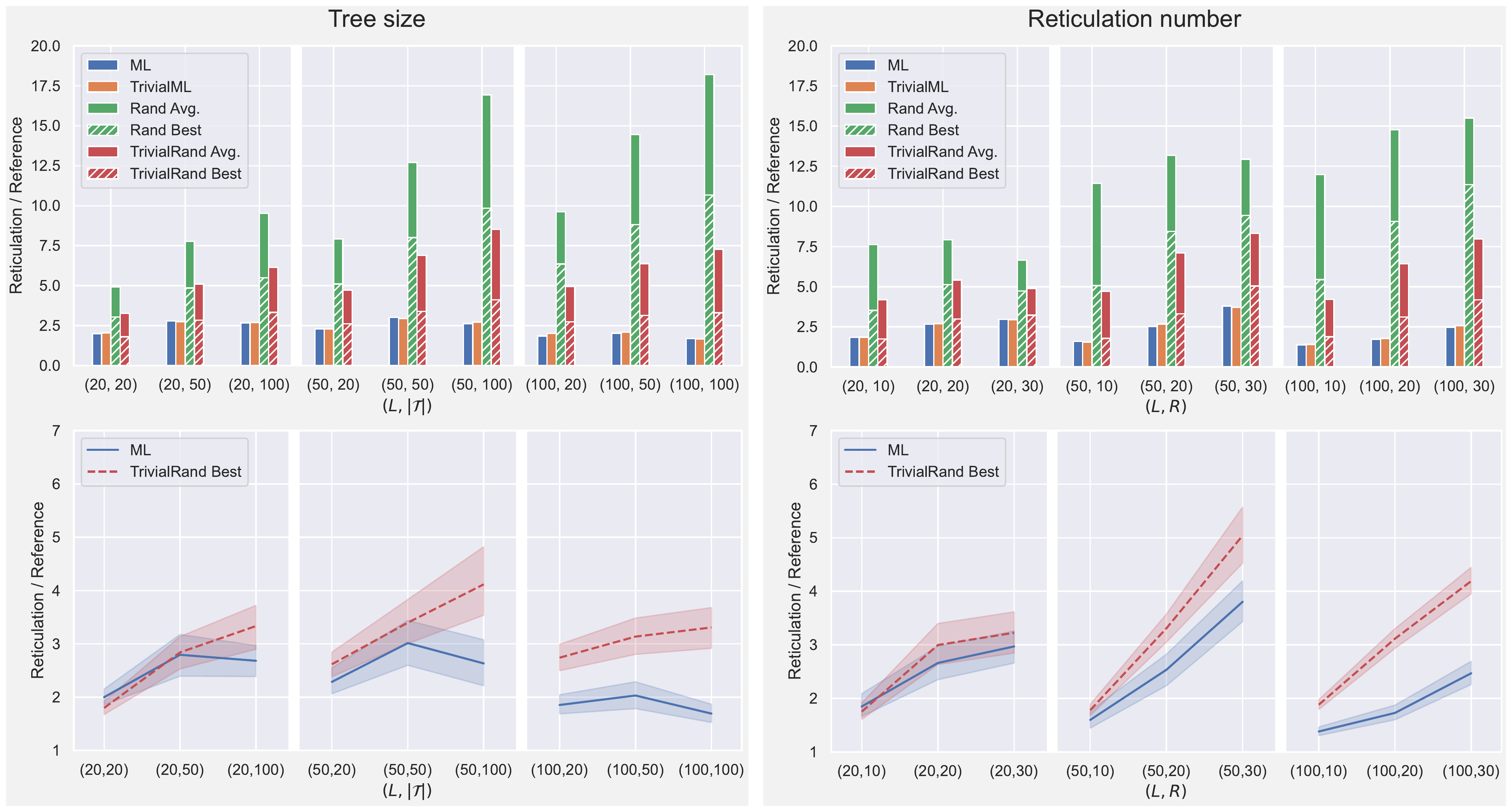}
    \caption{Experimental results for LGT data. Each point on the horizontal axis corresponds to one instance group. For the graphs on the left, there is one group for each fixed pair $(L,|\cT|)$ consisting of 16 instances coming from LGT networks for each value of $R\in\{10,20,30\}$. For the graphs on the right, there is one group for each fixed pair $(L,R)$ consisting of 16 instances coming from LGT networks for each value of $|\cT|\in\{20,50,100\}$. In the top graphs, the height of each bar gives the average of the results over all instances of the group, each scaled by the number of reticulations of the generating network. The bottom graphs compare the average output of \ML{} within each instance group and the average of the best output given by \trivialrand{} for each instance group. The shaded areas represent 95\% confidence intervals.}
    \label{fig:results_LGT}
\end{figure}
First, for LGT data we only take proper subsets of the exhaustive trees displayed by the generating  networks, and thus we have two kinds of instance groups: one where in each group the number of trees extracted from a network and the number of leaves of the networks are fixed, but the trees come from networks with different numbers of reticulations; and one where the number of reticulations of the generating networks and their number of leaves  are fixed, but 
the number of trees extracted from a network varies. 

The second important difference is that the reference value we use for LGT networks is not necessarily the optimum, but it is just an upper bound given by the number of reticulations of the generating networks which we expect to be reasonably close to the optimum (see Section~\ref{sub:training}).

The results for the LGT datasets are shown in Figure~\ref{fig:results_LGT}. Comparing these results with those of Figure~\ref{fig:results_normal}, it is evident that the LGT instances were more difficult than the normal ones for all the tested heuristics: this could be due to the fact that the normal instances consisted of all the exhaustive trees of the generating networks, while the LGT instances only have a subset of them and thus carry less information.

The machine-learned heuristics performed substantially better (up to 80\% on average) than the best randomised heuristic \trivialrand{} in all instance groups but the ones with the smallest values for parameters $R,|\cT|$ and $L$, for which the performances are essentially overlapping. On the contrary, the advantage of the machine-learned methods is more pronounced when the parameters are set to the highest values. This is because the larger the parameters, the more the possible different networks that embed $\cT$, thus the less likely for the randomised methods to find a good solution.

From the graphs on the right of Figure~\ref{fig:results_LGT}, it seems that the number of reticulations has a negative impact on both machine-learned and randomised heuristics, the effect being more pronounced for the randomised ones. The effect of the number of trees $|\cT|$ on the quality of the solutions is not as clear (Figure~\ref{fig:results_LGT}, left). However, we can still see that the trend of \ML{} and \trivialrand{} is the same: the ``difficult" instance groups are so for both heuristics, even if the degradation in the quality of the solutions for such instance groups is less marked for \ML{} than for \trivialrand{}.

\subsubsection{Experiments on ZODS Data}\label{exp:ZODS}
For the experiments on ZODS data we used the ML model trained on 1000 LGT networks with at most 100 leaves per network (see Figure~\ref{fig:ml_heur_perf_ZODS} in Appendix~\ref{app:Ml_models_heatmaps}). The setting of the experiments is the same as for the LGT data, and the results are shown in Figure~\ref{fig:results_ZODS}.
\begin{figure}[t]
    \centering
    \includegraphics[width=\columnwidth]{
    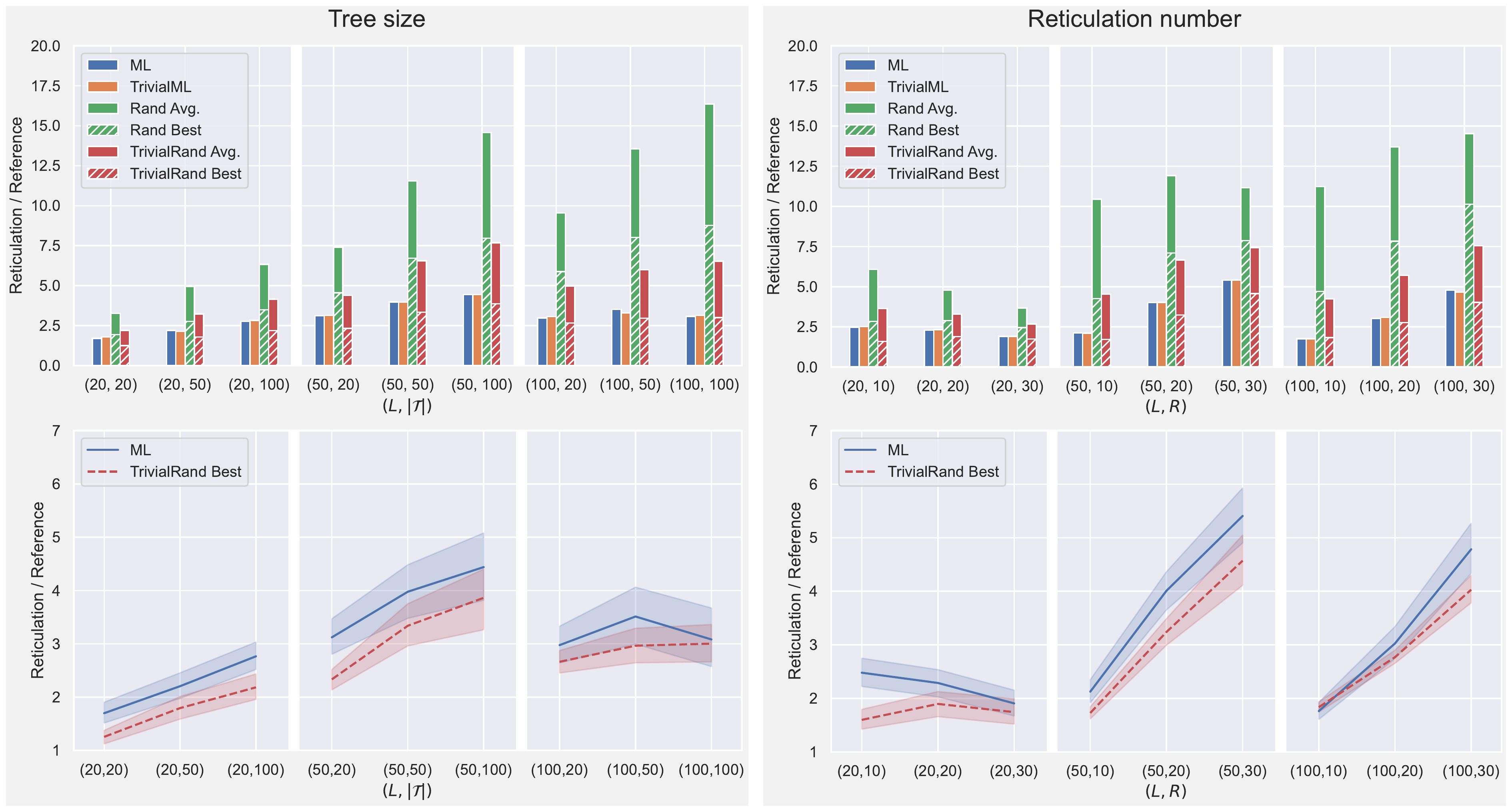}
    \caption{Experimental results for ZODS data. Each point on the horizontal axis corresponds to one instance group. For the graphs on the left, there is one group for each fixed pair $(L,|\cT|)$ consisting of 16 instances coming from ZODS networks for each value of $R\in\{10,20,30\}$. For the graphs on the right, there is one group for each fixed pair $(L,R)$ consisting of 16 instances coming from ZODS networks for each value of $|\cT|\in\{20,50,100\}$. In the top graphs, the height of each bar gives the average of the results it represents over all instances of the group, each scaled by the number of reticulations of the network the instance originated from. The bottom graphs compare the average output of \ML{} within each instance group and the average of the best output given by \trivialrand{} for each group instance. The shaded areas represent 95\% confidence intervals.}
    \label{fig:results_ZODS}
\end{figure}

At first glance, the performance of the randomised heuristics seems to be better for ZODS data than for LGT data (compare figures~\ref{fig:results_LGT} and~\ref{fig:results_ZODS}), which sounds counterintuitive. 
Recall, however, that all the graphs show the ratio between the number of reticulations returned by our methods and a reference value, i.e., the number of reticulations of the generating network: while we expect this reference to be reasonably close to the optimum for LGT networks, this is not the case for ZODS networks. 
In fact, a closer look to ZODS networks shows that they have a large number of redundant reticulations which could be removed without changing the set of trees they display, and thus their reticulation number is in general quite larger than the optimum. This is an inherent effect of the ZODS generator not having any constraints on the reticulations that can be introduced, and it is more marked on networks with a small number of leaves.

Having a reference value significantly larger than the optimum makes the ratios shown in Figure~\ref{fig:results_ZODS} small (close to 1, especially for \trivialrand{} on small instances) without implying that the results for the ZODS data are better than the ones for the LGT data.
The graphs of Figures~\ref{fig:results_LGT} and~\ref{fig:results_ZODS} are thus not directly comparable.

The reference value for the experiments on ZODS data not being realistically close to the optimum, however, does not invalidate their significance. Indeed,
the scope of such experiments was just to compare the performance of the machine-learned heuristics on data entirely different from those they were trained on with the performance of the randomised heuristics, which should not depend on the type of network that was used to generate the input.

As expected and in contrast with normal and LGT data, the results show that the machine-learned heuristics perform worse than the randomised ones on ZODS data, consistent with the ML methods being trained on a completely different class of networks. 

\subsubsection{Experiments on Real Data}\label{exp:real}
We conducted two sets of experiments on real data, using the ML model trained on the dataset trained on $1000$ LGT networks with at most 100 leaves each.
For sufficiently small instances, we compared the results of our heuristics with the results of two existing tools for reconstructing networks from binary trees: \Treechild~\cite{van2019practical} and \Hybro~\cite{hybroscale}. \Hybro{} is an exact method performing an exhaustive search on the networks displaying the input trees, therefore it can only handle reasonably small instances in terms of the number of input trees. \Treechild{} is a fixed-parameter (in the number of reticulations of the output) exact algorithm that reconstructs the best \emph{tree-child} network, a restricted class of phylogenetic networks, and due to its fast-growing computation time cannot handle large instances either.

\begin{figure}[t]
    \centering
    \includegraphics[width=\columnwidth]{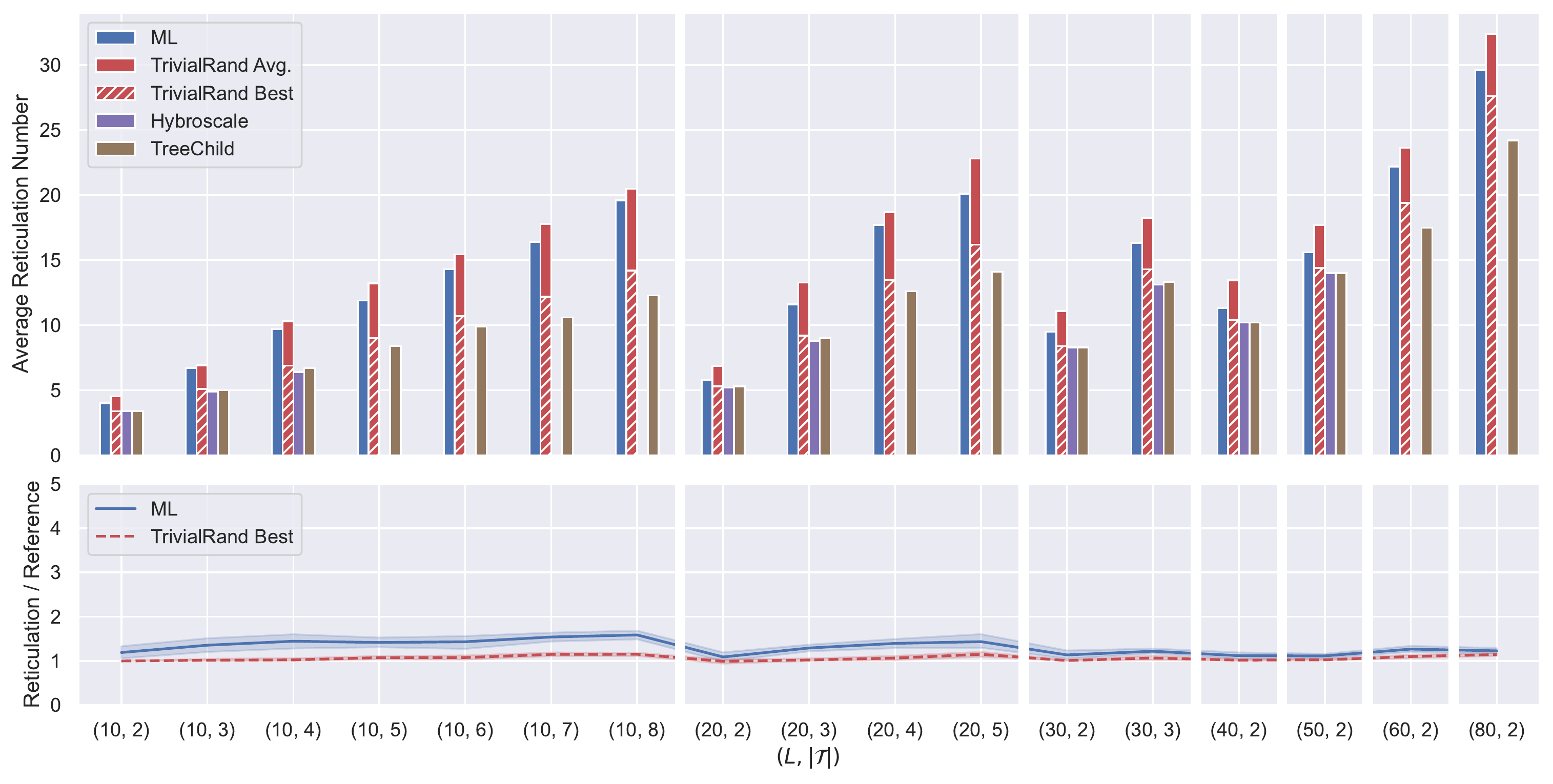}
    \caption{Comparison of \ML{}, \trivialrand{}, \Hybro{}, and \Treechild{} on real data. Each point on the horizontal axis corresponds to one instance group, consisting of 10 instances for a fixed pair $(L,|\cT|)$. In the top graph, the height of each bar gives the average, over all instances of the group, of the number of reticulations returned by the method. The bottom graphs compare the average output of \ML{} within each instance group and the average of the best output given by \trivialrand{} within the group. The shaded areas represent 95\% confidence intervals.} 
    \label{fig:real_instances}
\end{figure}

We tested \ML{} and \trivialrand{} against \Hybro{} and \Treechild{} using the same dataset used in~\cite{van2019practical}, in turn taken from~\cite{beiko2011telling}. The dataset consists of ten instances for each possible combination of the parameters  $L\in\{10,20,30,40,50,60,80,100,150\}$ and $|\cT|\in[2,8]$. In Figure~\ref{fig:real_instances} we show results only for the instance groups for which \Hybro{} or \Treechild{} could output a solution within 1 hour, consistent with the experiments in~\cite{van2019practical}. As a consequence of \Hybro{} and \Treechild{} being exact methods (\Treechild{} only for a restricted class of networks), they performed better than both \ML{} and \trivialrand{} on all instances they could solve, although the best results of \trivialrand{} are often close (no worse than 15\%) and sometimes match the optimal value. 

The main advantage of our heuristics is that they can handle much larger instances than the exact methods.
In the conference version of this paper~\cite{DBLP:conf/wabi/BernardiniIJS22} we showed the results of our heuristics on large real instances, using a ML model trained on 10 networks with at most 100 leaves each. These results demonstrated that consistently with the simulated data, the machine-learned heuristics gave significantly better results than the randomised ones for the largest instances. When we first repeated the experiments with the new models trained on 1000 networks with $\textsf{max}L=100$, however, we did not obtain similar results: instead, the results of the randomised heuristics were better or only marginally worse than the machine-learned ones on almost all the instance groups, including the largest.

Puzzled by these results, we conducted an experiment on the impact of the training set on real data. The results  are reported in Figure~\ref{fig:real_instances_big}, and show that the choice of the networks on which we train our model has a big impact on the quality of the results for the real datasets. This is in contrast with what we observed for the synthetic datasets, for which only the class of the training networks was important, not the specific instances of the networks themselves. According to what was noted in~\cite{van2019practical}, this is most likely due to the fact that the real phylogenetic data have substantially more structure than random synthetic datasets, and the randomly generated training networks do not always reflect this structure. By chance, the networks we used for training the model we used in~\cite{DBLP:conf/wabi/BernardiniIJS22} were similar to real phylogenetic networks, unlike the 1000 networks in the training set of this paper.

\begin{figure}[t]
    \centering
    \includegraphics[width=.85\columnwidth]{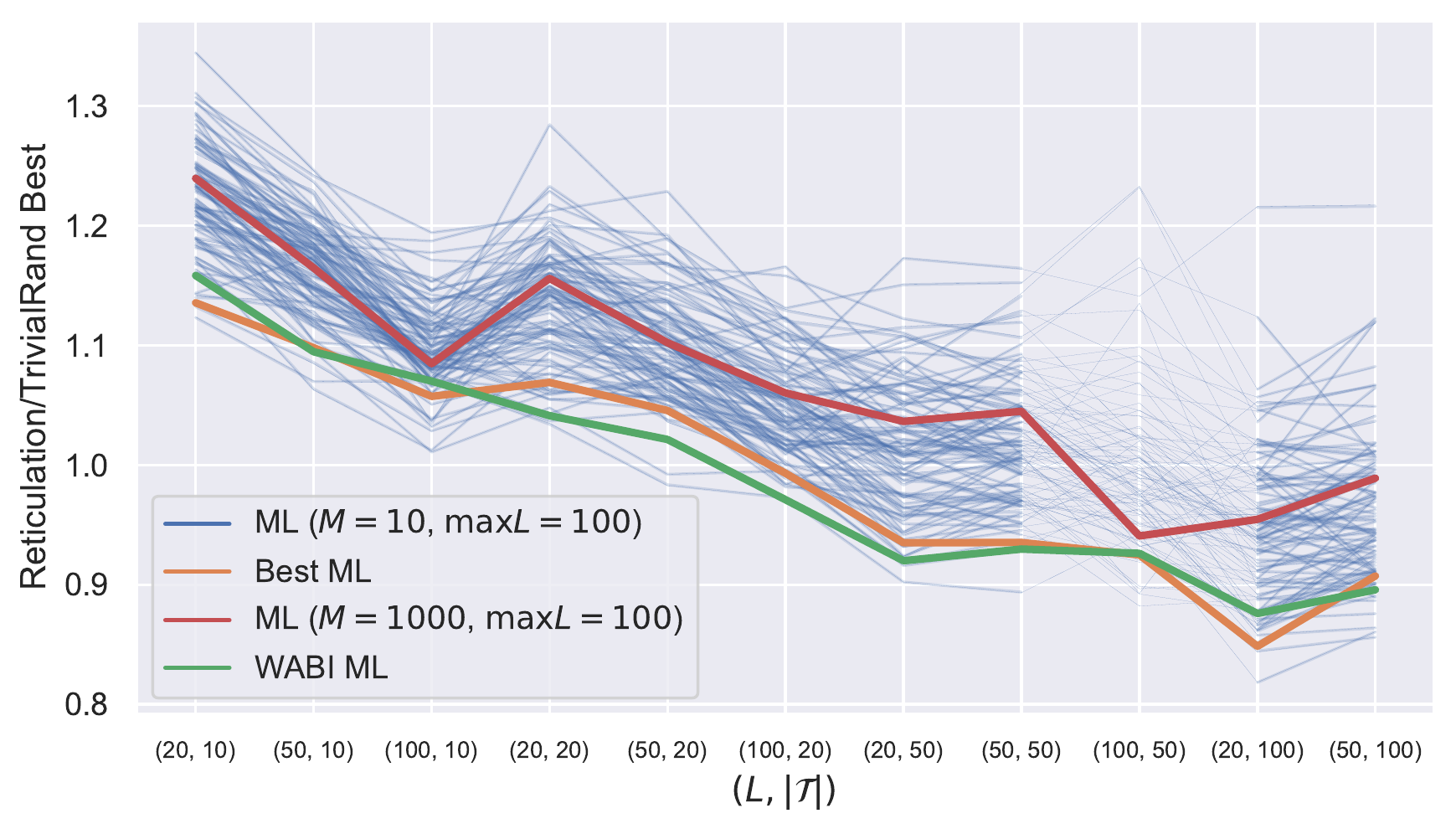}
    \caption{Ratio between the performance of \ML{} and the best value output by \trivialrand{} for different instance groups and different training sets. \trivialrand{} is executed $\min\{x(I),1000\}$ times for each instance $I$, $x(I)$ being the number of runs that could be completed in the same time as one run of \ML{} on $I$. The results are then averaged within each group. Each blue line represents the results obtained training the model with a different set of 10 randomly generated LGT networks with at most 100 leaves each. The green line corresponds to the training set used in~\cite{DBLP:conf/wabi/BernardiniIJS22}; the orange line represents one of the best-performing sets; the red line corresponds to the training set we used for the experiments on LGT and ZODS data in this paper, consisting of 1000 randomly generated LGT networks.}
    \label{fig:real_instances_big}
\end{figure}

\subsubsection{Experiments on Scalability}\label{exp:scalability} We conducted experiments to study how the running time of our heuristics scales with increasing instance size for all datasets. In Figure~\ref{fig:runtimes} we report the average of the running times of \ML{} for the instances within each instance group with a 95\% confidence interval, for an increasing number of reticulations (synthetic datasets) or number of trees (real dataset). The datasets and the instance groups are those described in the previous sections. 
Note that we did not report the running times of the randomised heuristics because they are meant to be executed multiple times on each instance, and in all the experiments we bounded the number of executions precisely using the time required for one run of \ML{}.

We also compared the running time of our heuristics with the running times of the exact methods \Treechild{} and \Hybro{}. The results are shown in Figure~\ref{fig:real_runtime} and are consistent with the execution times of the exact methods growing exponentially, while the running time of our heuristics grows polynomially. Note that networks with more reticulations are reduced by longer CPS and thus the running time increases with the number of reticulations.

\begin{figure}[t]
    \centering
        \subfloat[Normal]{\includegraphics[height=4cm]{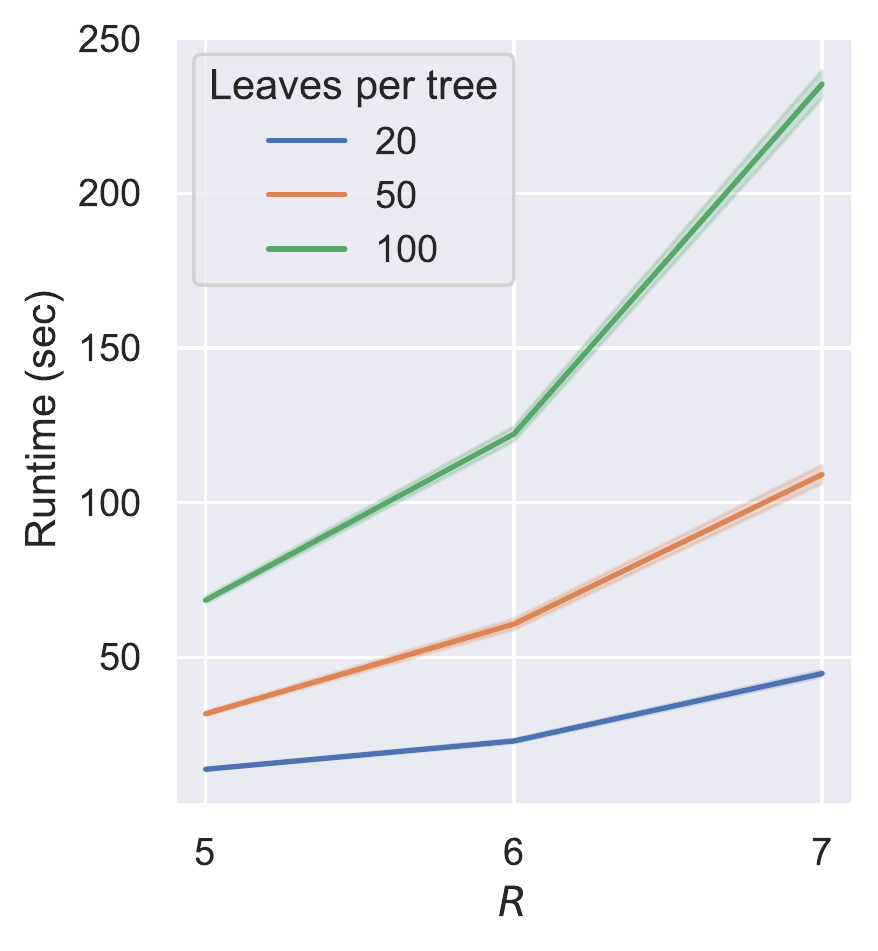}
        }
        \subfloat[LGT]
        {\includegraphics[height=4cm]{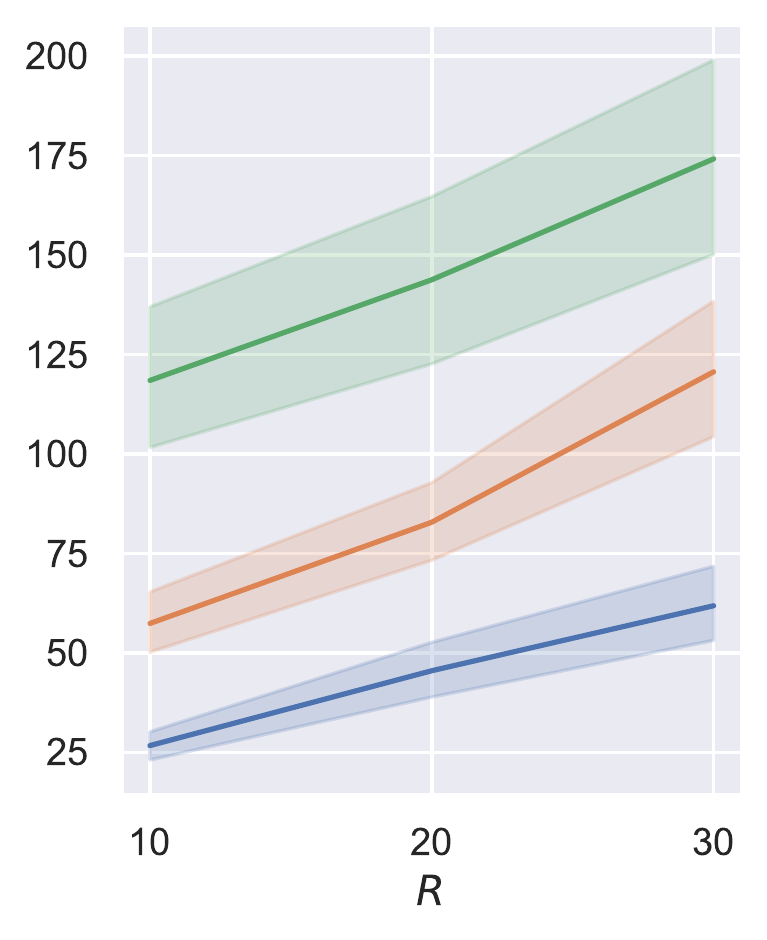}
        }
        \subfloat[ZODS]
        {\includegraphics[height=4cm]{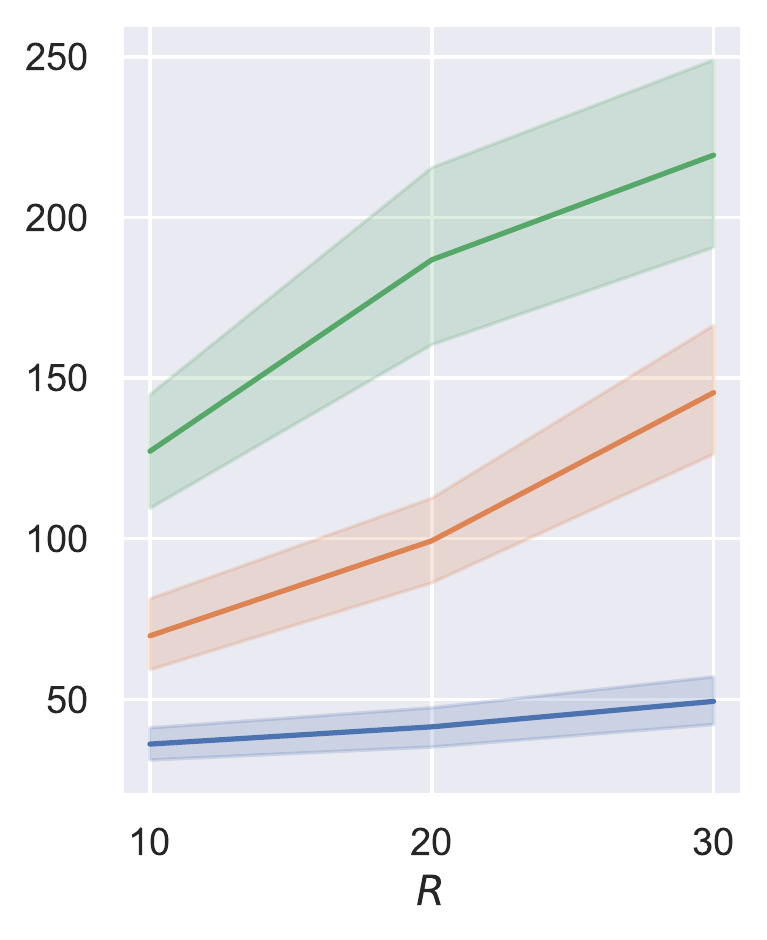}
        }
        \subfloat[Real]
        {\includegraphics[height=4cm]{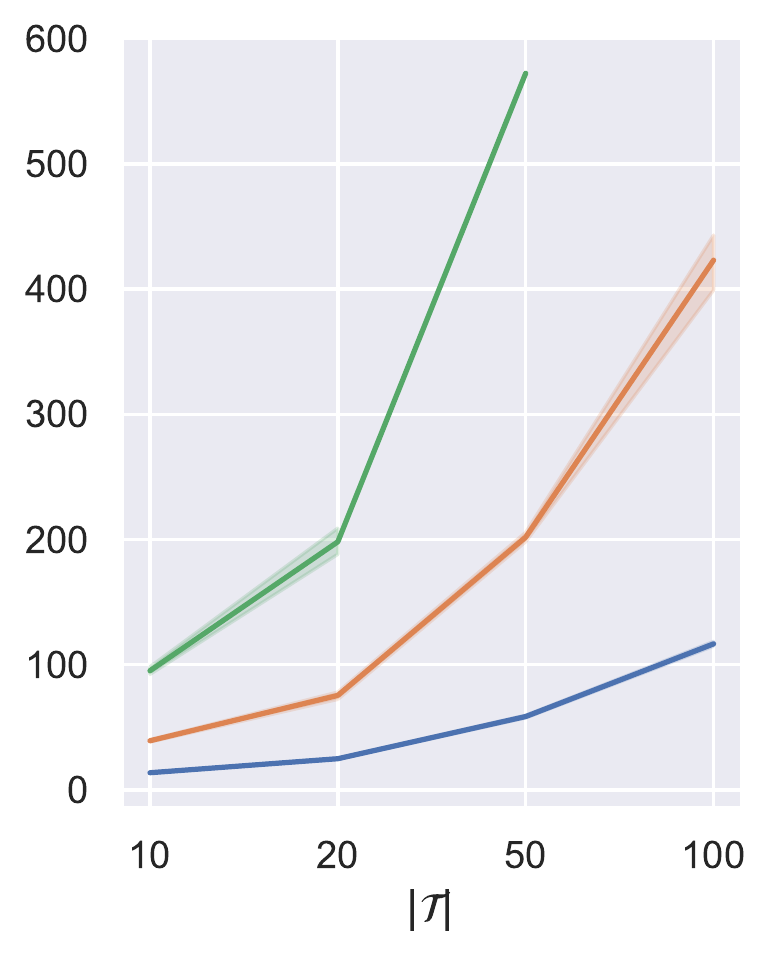}
        }
    \caption{The running time (in seconds) of \ML{} for the instance groups described in Sections~\ref{exp:normal}, \ref{exp:LGT}, \ref{exp:ZODS}, \ref{exp:real}. The solid lines represent the average of the running times for the instances within each instance group. The shaded areas represent 95\% confidence intervals.} 
    \label{fig:runtimes}
\end{figure}

\begin{figure}[t]
    \centering
    \includegraphics[width=\columnwidth]{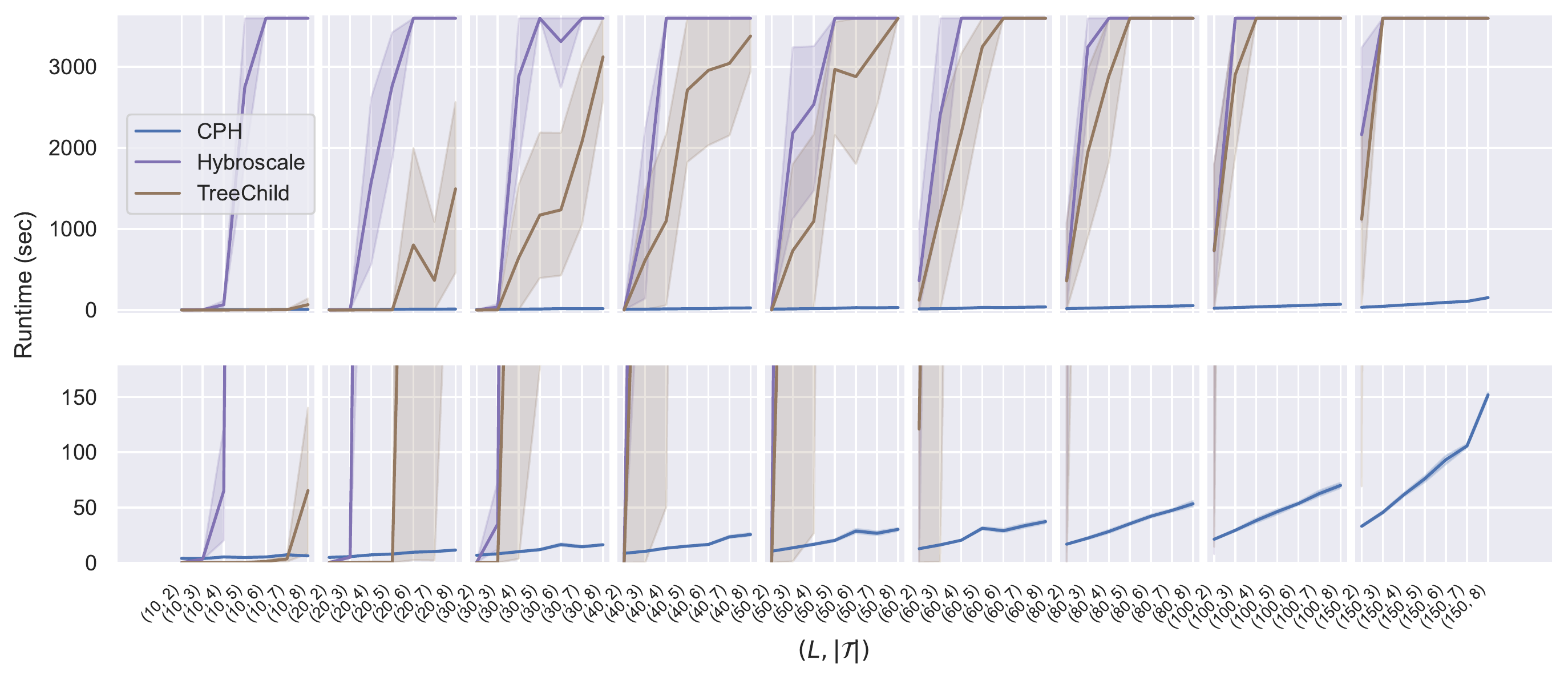}
    \caption{The running time of \ML{} on the real dataset described in Section~\ref{exp:real} compared with the running time of the exact methods \Hybro{} and \Treechild{} on the same dataset. The solid lines represent the average running times within each instance group. The shaded areas represent 95\% confidence intervals.}
    \label{fig:real_runtime}
\end{figure}

\subsubsection{Experiments on Non-Exhaustive Input Trees}\label{exp:missing_leaves} The instances on which we tested our methods so far all consisted of a set of exhaustive trees, that is, each input tree had the same set of leaves which coincided with the set of leaves of the network. However, this is not a requirement of our heuristics, which are able to produce feasible solutions also when the leaf sets of the input trees are different, that is when their leaves are proper subsets of the leaves of the optimal networks that display them. 

To test their performance on this kind of data, we generated 18 LGT instance groups starting from the instances we used in Section~\ref{exp:LGT} and removing a certain percentage $p$ of leaves from each tree in each instance uniformly at random. Specifically, we generated an instance group for each value of $p\in\{5,10,15,20,25,50\}$ starting from the LGT instance groups with $L=100$ leaves and $R\in\{10,20,30\}$ reticulations. Since the performances of the two machine-learned heuristics were essentially overlapping for all of the other experiments, and since \trivialrand{} performed consistently better than the other randomised heuristics, we limited this test to \ML{} and \trivialrand{}. The results are shown in Figure~\ref{fig:missing_leaves}.

In accordance with intuition, the performance of both methods decreases with an increasing percentage of removed leaves, as the trees become progressively less informative. However, the degradation in the quality of the solutions is faster for \ML{} than for \trivialrand{}, consistent with the fact that \ML{} was trained on exhaustive trees only: when the difference between the training data and the input data becomes too large, the behaviour of the machine-learned heuristic becomes unpredictable.
We demand the design of algorithms better suited for trees with missing leaves for future work.

\begin{figure}[t]
    \centering
    \includegraphics[width=.85\columnwidth]{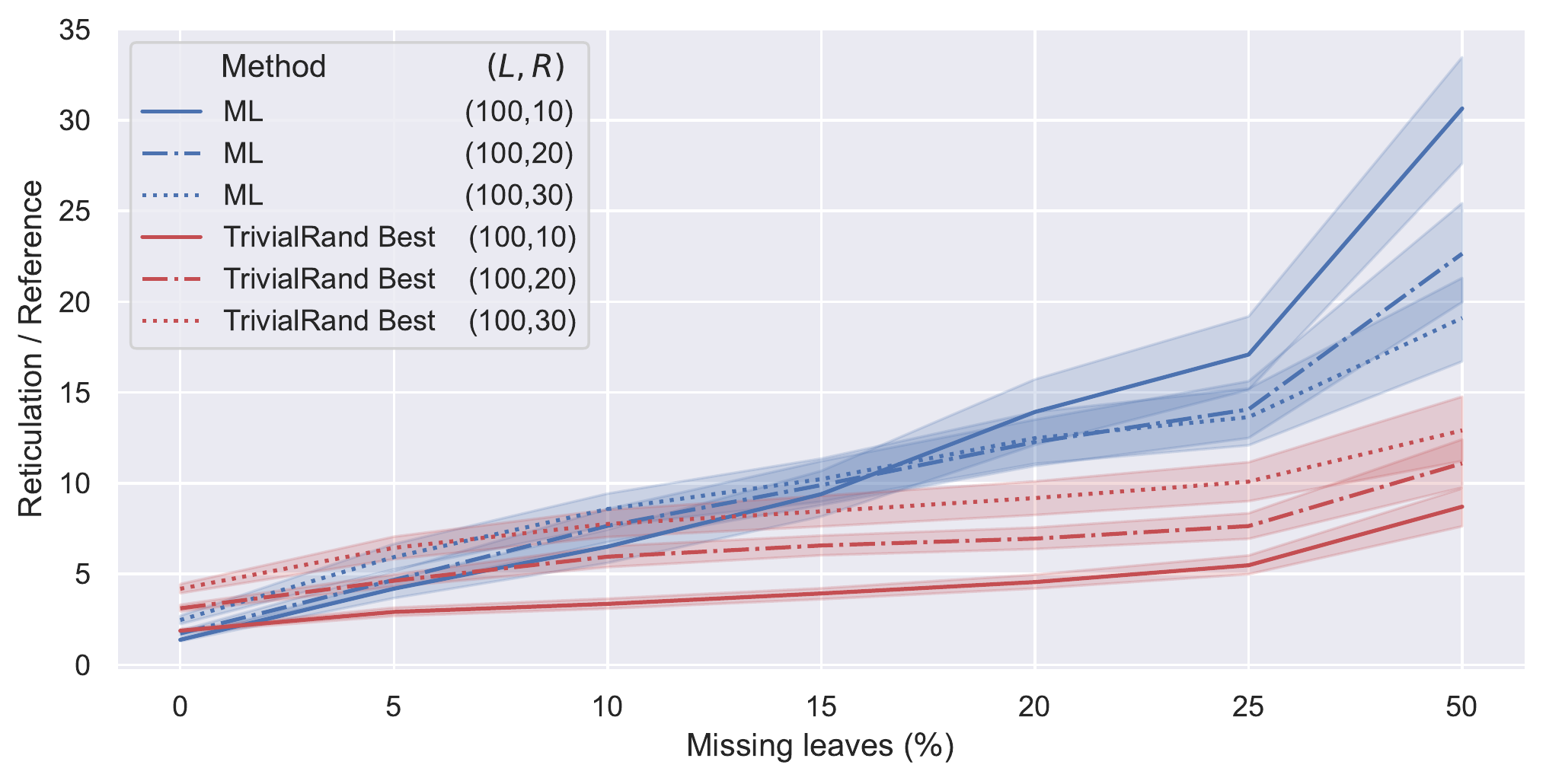}
    \caption{Ratio between the number of reticulations outputted by \ML{} and \trivialrand{} Best and the reference value for an increasing percentage of removed leaves on LGT data. Each point on the horizontal axis corresponds to a certain percentage of leaves removed from each tree; each line represents the average, within the instances of a group $(L,R)$ with a certain percentage of removed leaves, of the output reticulation number divided by the reference value. The shaded areas represent 95\% confidence intervals.}
    \label{fig:missing_leaves}
\end{figure}

\subsubsection{Effect of the Threshold on ML.}\label{exp:threshold}
We tested the effectiveness of adding a threshold $\tau>0$ to \ML{} on the same datasets of Sections~\ref{exp:normal}, \ref{exp:LGT} and~\ref{exp:ZODS} (normal, LGT and ZODS). Recall that each instance group consists of 48 instances.
We ran \ML{} ten times for each threshold
$\tau\in\{0,0.1,0.3,0.5,0.7\}$ on each instance, took the lowest output reticulation number and averaged these results within each instance group. 

The results are shown in Figure~\ref{fig:threshold}. For all types of data, a threshold $\tau\leq 0.3$ is beneficial, intuitively indicating that when the probability of a pair being reducible is small it gives no meaningful indication, and thus random choices among these pairs are more suited. The seemingly best value for the threshold, though, is different for different types of instances. 
The normal instances seem to benefit from quite high values of $\tau$, the best among the tested values being $\tau=0.7$. While the optimal $\tau$ value for normal instances could be even higher, we know from Figure~\ref{fig:results_normal} that it must be $\tau<1$, as the random strategies are less effective than the one based on machine learning for normal data. For the LGT and the ZODS instances, the best threshold seems to be around $\tau=0.3$, while very high values ($\tau=0.7$) are counterproductive. This is especially true for the LGT instances, consistent with the randomised heuristics being less effective for them than for the other types of data (see Figure~\ref{fig:results_LGT}).

These experiments should be seen as an indication that introducing some randomness may improve the performance of the ML heuristics, at the price of running them multiple times. 
We defer a more thorough analysis to future work.
\begin{figure}[h]
    \centering
        \subfloat[Normal]
        {
        \includegraphics[width=0.32\columnwidth]{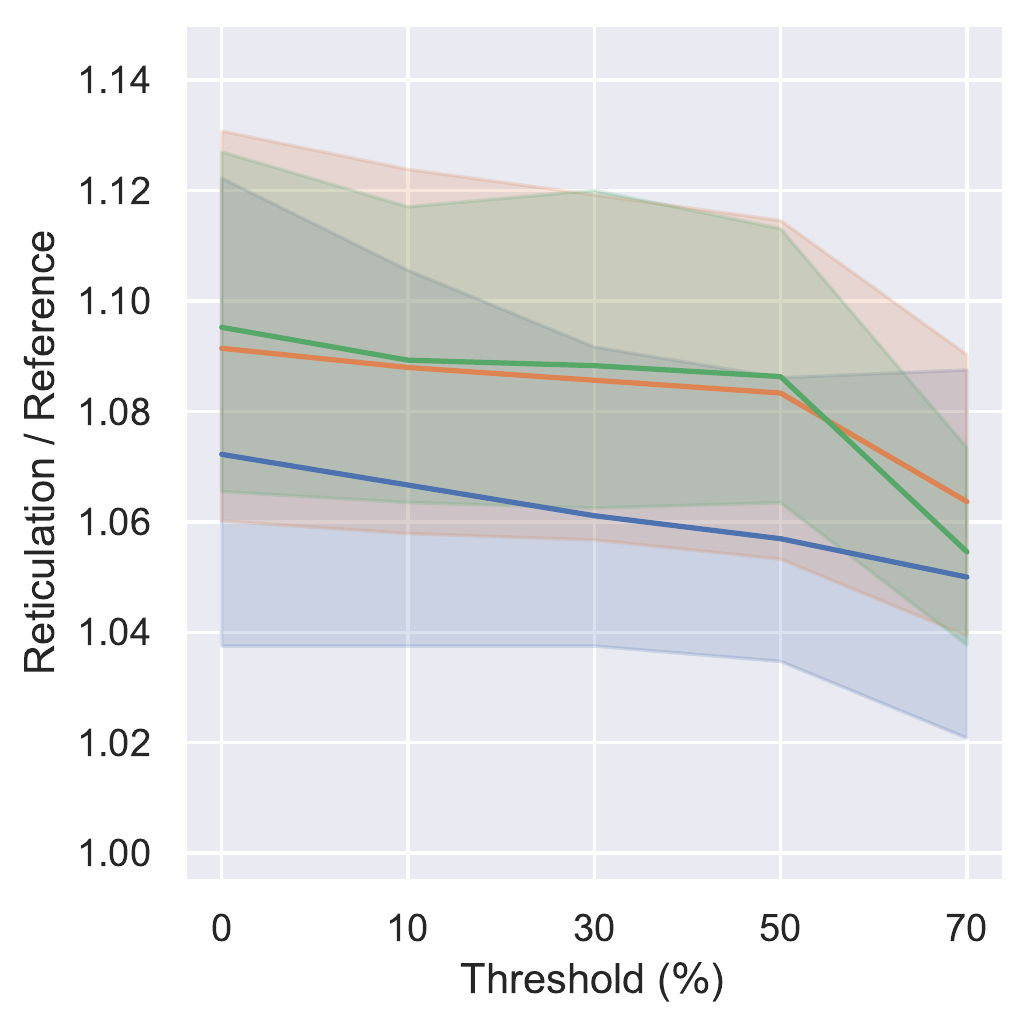}
        }
        \subfloat[LGT]
        {
         \includegraphics[width=0.32\columnwidth]{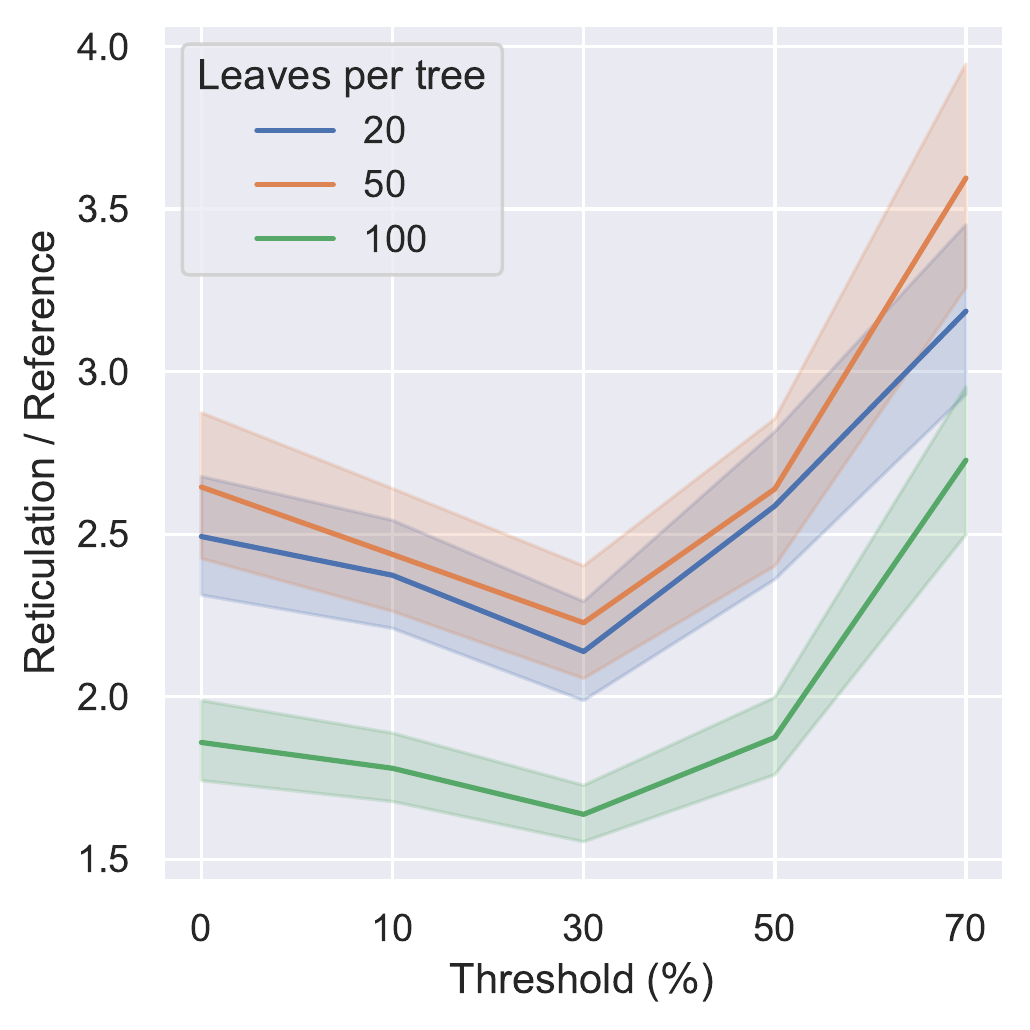}
        }
        \subfloat[ZODS]
        {
         \includegraphics[width=0.32\columnwidth]{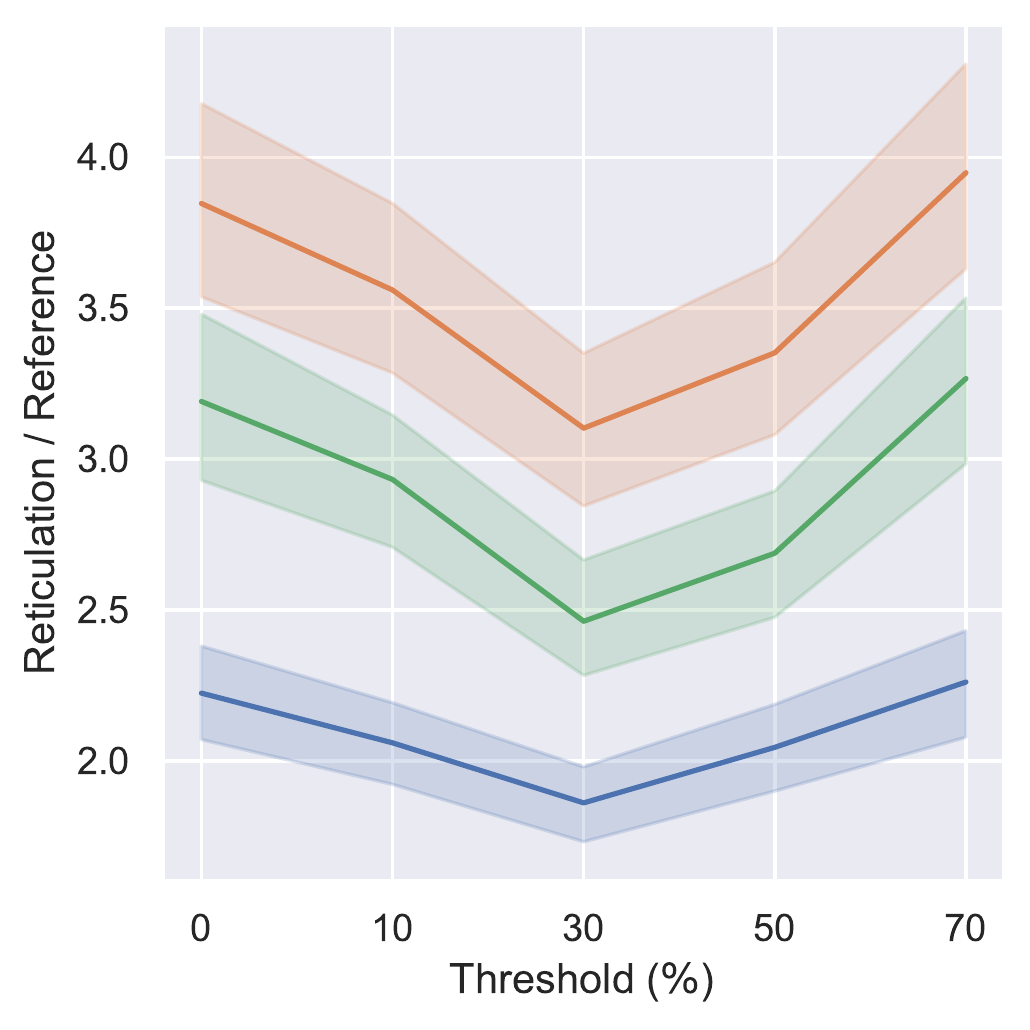}
        }
    \caption{The reticulation number when running \ML{} with different thresholds on the instance groups of Sections~\ref{exp:normal}, \ref{exp:LGT} and~\ref{exp:ZODS}. Each instance was run 10 times, and the lowest reticulation value of these runs was selected. The shaded areas represent 95\% confidence intervals.}
    \label{fig:threshold}
\end{figure}
\subsection{A Non-Learned Heuristic Based on Important Features}\label{sec:new_heuristic}
In this section we propose \textsf{FeatImp}, yet another heuristic in the CPH framework. Although \textsf{FeatImp} does not rely on a machine learning model, we defined the rules to choose a cherry on the basis of the features that were found to be the most relevant according to the model we used for \ML{} and \trivML{}.

To identify the most suitable rules, we trained a classification tree using the same features and training data as the ones used for the \textsf{ML} heuristic (see figure~\ref{fig:classificationTree} in Appendix~\ref{app:random_forests}). We then selected the most relevant features used in such tree and used them to define the function \textsf{PickNext} listed by Algorithm~\ref{alg:FeatImp}: namely, the features $4$, $8_t$, $11_d$ and $12_t$ of Table~\ref{tab:cherry_features} (the ratio of trees having both leaves $x$ and $y$ in which $(x,y)$ is reducible, the average of the topological leaf distance between $x$ and $y$ scaled by the depth of the trees, the average of the ratios $d(x,\textsf{LCA}(x,y))/d(y,\textsf{LCA}(x,y))$ and the average of the topological distance from $x$ to the root over the topological distance from $y$ to the root, respectively).

To compute and update these quantities we proceed as described in Section~\ref{sub:time_complexity} and Appendix~\ref{app:time_complex}. The general idea of the function  \textsf{PickNext} used in \textsf{FeatImp} is to mimic the first splits of the classification tree by progressively discarding the candidate reducible pairs that are not among the top $\alpha\%$ scoring for each of the considered features, for some input parameter $\alpha$.

\begin{algorithm}[H]
\small
   \caption{Function \textsf{PickNext} used in \textsf{FeatImp}}  \label{alg:FeatImp}
\begin{algorithmic}[1]
\Statex \textbf{INPUT:} A set $\cT$ of phylogenetic trees and a parameter $\alpha\in(0,100)$.
 \Statex \textbf{OUTPUT:} Next cherry to pick $(x, y)$ 
    \If{there exists a trivial cherry}
         \State Select a trivial cherry $(x, y)$ uniformly at random;
    \Else
        \State $C \gets$ all reducible pairs of $\cT$;
        \State $C \gets$ the $\alpha \%$ cherries of $C$ with the highest value for feature $4$;
        \State $C \gets$ the $\alpha \%$ cherries of $C$ with the highest value for feature $8_t$;
        \State $C \gets$ the $\alpha \%$ cherries of $C$ with the highest value for feature $11_d$;
        \State\label{lastline} $(x, y)\gets$ the pair of $C$ with the highest value for feature $12_t$;
    \EndIf
    \State \Return $(x, y)$;
  \end{algorithmic}
\end{algorithm}

We implemented \textsf{FeatImp} and test it on the same instances as sections~\ref{exp:normal}, \ref{exp:LGT} and \ref{exp:ZODS} with $\alpha=20$. The results are shown in Figure~\ref{fig:feat_imp_heur}. 
As expected, \textsf{FeatImp} works consistently worse than \ML{} on all the tested datasets, and it also performs worse than \trivialrand{} on most instance groups. However, it is on average 12\% better than \trivialrand{} on the LGT instance group having 50 leaves and 30 reticulations and on all the LGT instance groups with 100 leaves, which are the most difficult for the randomised heuristics, as already noticed in Section~\ref{exp:LGT}. The results it provides for such difficult instances are only on average 20\% worse than those of \ML, with the advantage of not having to train a model to apply the heuristic.

These experiments are not intended to be exhaustive, but should rather be seen as an indication that machine learning can be used as a guide to design smarter non-learned heuristics. Possible improvements of \textsf{FeatImp} include using different values of $\alpha$ for different features, introducing some randomness in Line~\ref{lastline}, that is, instead of choosing the single top scoring pair to choose one among the top $\alpha\%$ at random, or to use fewer/more features.

\begin{figure}[t]
    \centering
        \subfloat[Normal]
        {
         \includegraphics[width=0.32\columnwidth]{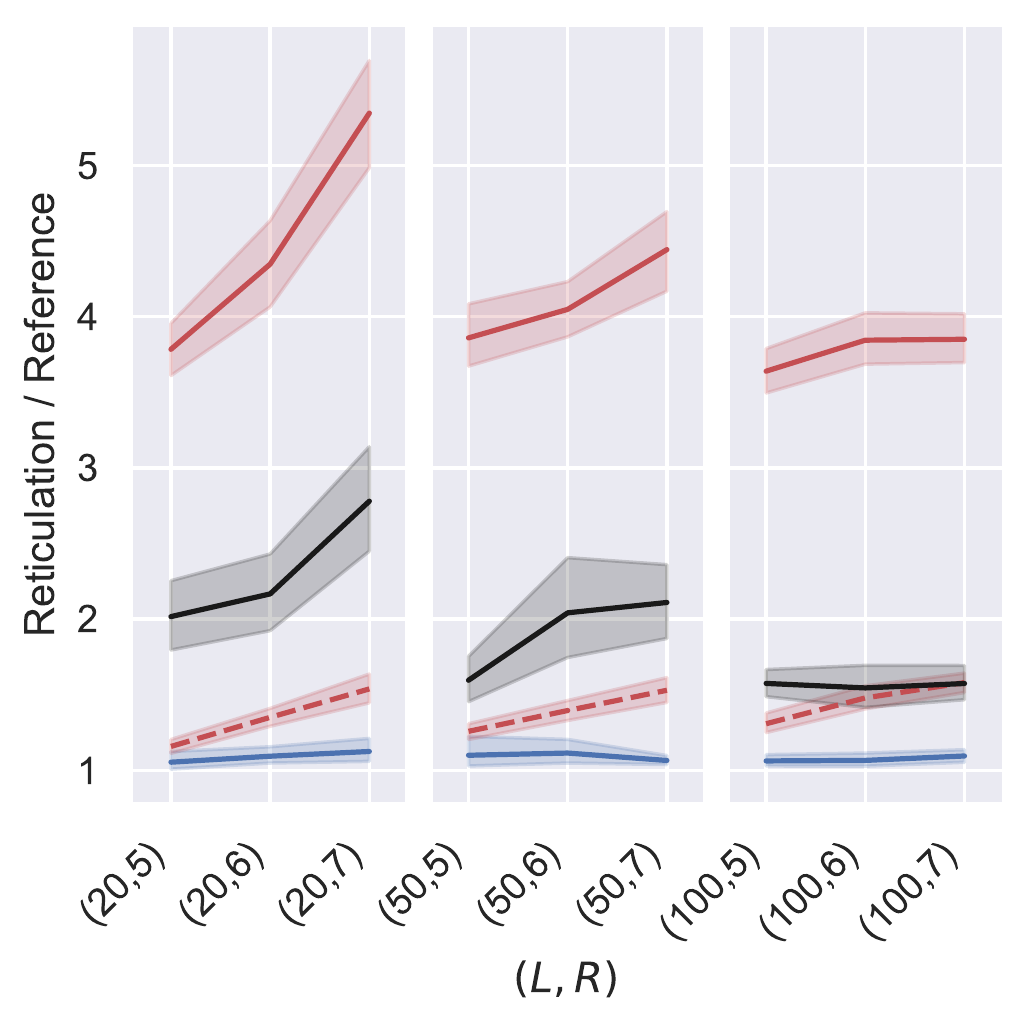}
        }
        \subfloat[LGT]
        {
        \includegraphics[width=0.32\columnwidth]{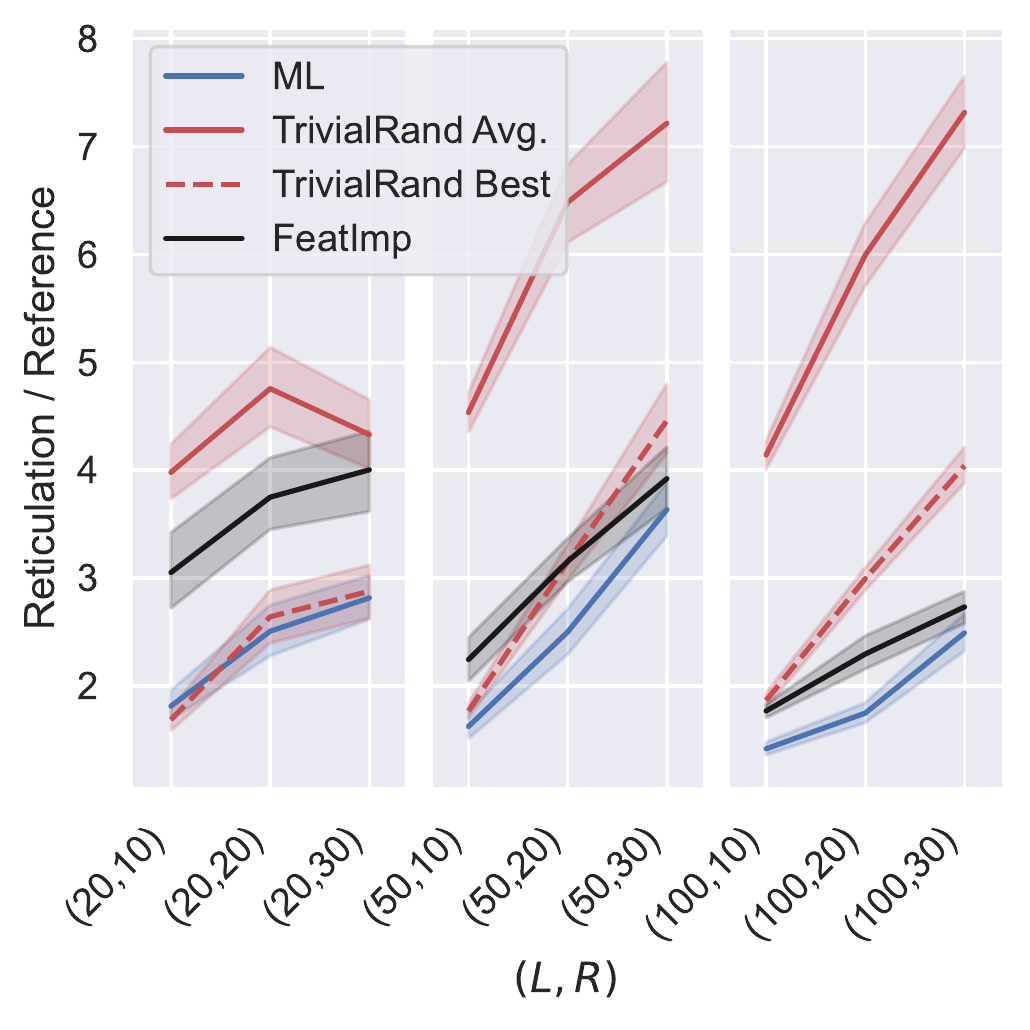}
        }
        \subfloat[ZODS]{
        \includegraphics[width=0.32\columnwidth]{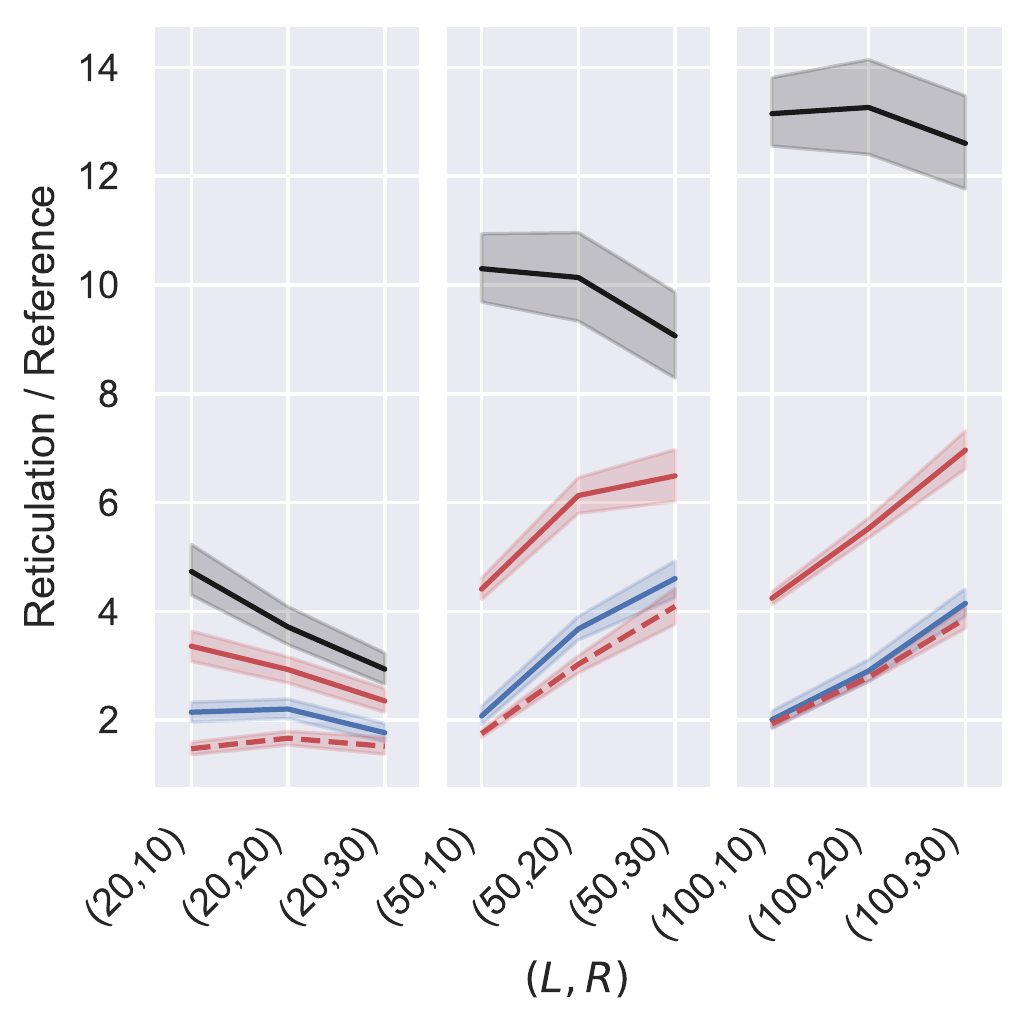}
        }
    \caption{Comparison of the results of \featimp{}, \ML{} and \trivialrand{} on the instance groups described in Sections~\ref{exp:normal}, \ref{exp:LGT} and~\ref{exp:real}.
    Each point on the horizontal axis corresponds to an instance group; each line represents the average, within the instance group, of the output reticulation number divided by the reference value. The shaded areas represent 95\% confidence intervals.}
    \label{fig:feat_imp_heur}
\end{figure}

\section{Conclusions}\label{sec:discussion}
Our contributions are twofold: first, we presented the first methods that allow reconstructing a phylogenetic network from a large set of large binary phylogenetic trees. Second, we show the promise and the limitation of the use of machine learning in this context.
Our experimental studies indicate that machine-learned strategies, consistent with intuition, are very effective when the training data have a structure similar enough to the test data. In this case, the results we obtained with machine learning were the best among all the tested methods, and the advantage is particularly evident in the most difficult instances. Furthermore, preliminary experiments indicate that the performance of the machine-learned methods can even be improved by introducing appropriate thresholds, in fact mediating between random choices and predictions. However, when the training data do not sufficiently reflect the structure of the test data, repeated runs of the fast randomised heuristics lead to better results.
The non-learned cherry-picking heuristic we designed based on the most relevant features of the input (identified using machine learning) shows yet another interesting direction.

 Our results suggest many interesting directions for future work. First of all, we have seen that machine learning is an extremely promising tool for this problem since it can identify cherries and reticulated cherries of a network, from displayed trees, with very high accuracy. It would be interesting to prove a relationship between the machine-learned models' accuracy and the produced networks' quality. In addition, do there exist algorithms that exploit the high accuracy of the machine-learned models even better? Could other machine learning methods than random forests, or more training data, lead to even better results? Our methods are applicable to trees with missing leaves but perform well only if the percentage of missing leaves is small. Can modified sets of features be defined that are more suitable for input trees with many missing leaves? Moreover, we have seen that combining randomness with machine learning can lead to better results than either individual approach. However, we considered only one strategy to achieve this. What are the best strategies for combining randomness with machine learning for this, and other, problems? From a practical point of view, it is important to investigate whether our methods can be extended to deal with nonbinary input trees and to develop efficient implementations: in fact, we point out that our current implementations are in Python and not optimised for speed. Faster implementations could make machine-learned heuristics with nonzero thresholds even more effective. Finally, can the machine-learning-based approach be adapted to other problems in the phylogenetic networks research field?

 \appendix
\section{Time Complexity}\label{app:time_complex}

\lemfeaturesA*
\begin{proof}
Let $F_{(x,y)}^i$ denote the current value of the $i$-th feature for a cherry $(x,y)$. When reducing a cherry $(x,y)$ in a tree $T$ (thus deleting $x$ and $p(x)=p(y)$ and then adding a direct edge from $p(p(y))$ to $y$), we check whether the other child of $p(p(y))$ is a leaf $z$ or not. If not, no new cherry is created in $T$, thus the features 1-4 remain unaffected for all the cherries of $\cT$. 
Otherwise, $(z,y)$ and $(y,z)$ are new cherries of $T$ and we can distinguish two cases.
\begin{enumerate}
    \item $(z,y)$ and $(y,z)$ are already cherries of $\cT$. Then, $F^1_{(y,z)}$ and $F^1_{(z,y)}$ are increased by $\frac{1}{|\cT|}$; $F^4_{(y,z)}$ and $F^4_{(z,y)}$ are increased by $\frac{1}{|\cT^{y,z}|}$, where  $|\cT^{y,z}|$ is the number of trees that contain both $y$ and $z$ and is equal to $|\cT|F^5_{(y,z)}$.
    To update features 2 and 3 we use two auxiliary data structures $\textsf{new\_cherries}_{(y,z)}$ and $\textsf{new\_cherries}_{(z,y)}$ to collect the distinct cherries that would originate after picking $(y,z)$ and $(z,y)$ in each tree, respectively. These structures must allow efficient insertions, membership queries, and iteration over the elements\footnote{For example, hashtables paired with lists.}, and can be deleted before picking the next cherry in $\cT$. 
    If the other child of $p(p(z))$ is a leaf $w$, we add $(z,w)$ and $(w,z)$ to $\textsf{new\_cherries}_{(y,z)}$ and $(y,w)$ and $(w,y)$ to $\textsf{new\_cherries}_{(z,y)}$ (unless they are already present). 
    \item $(z,y)$ and $(y,z)$ are new cherries of $\cT$. Then we insert them into $\cherries$. We initially set $F^1_{(y,z)}=F^1_{(z,y)}=\frac{1}{|\cT|}$, and for features 2-3 we create the same data structures as the previous case. To compute $F^5_{(y,z)}=F^5_{(z,y)}$ we first compute $|\cT^{y,z}|$ by checking whether $y$ and $z$ are both leaves of $T$ for each $T\in\cT$. Then we set $F^5_{(y,z)}=F^5_{(z,y)}=\frac{|\cT^{y,z}|}{|\cT|}$ and  $F^4_{(y,z)}=F^4_{(z,y)}=\frac{1}{|\cT^{y,z}|}$.%
\end{enumerate}
 Once we have reduced $(x,y)$ in all trees, we count the elements of each of the auxiliary data structures $\textsf{new\_cherries}$ and update features 2-3 of the corresponding cherries accordingly. Since picking a cherry can create up to two new cherries in each tree, and for each new cherry we add up to two elements to an auxiliary data structure, 
 this step requires $\cO(|\cT|)$ time for each iteration.

Feature 5 must be updated for all the cherries corresponding to the unordered pairs $\{x,w\}$ with $w\neq y$. 
To do so, when we reduce $(x,y)$ in a tree $T$ we go over its leaves: for each leaf $w\neq y$ we decrease $F^5_{(x,w)}$ and $F^5_{(w,x)}$ by $\frac{1}{|\cT|}$ (if $(x,w)$ and $(w,x)$ are currently cherries of $\cT$).
This requires $\cO(|X|^2)$ total time per tree over all the iterations, because we scan the leaves of a tree only when we reduce a cherry in that tree.
Computing feature 5 when new cherries of $\cT$ are created (case 2) requires constant time per tree per cherry. The total number of cherries created in $\cT$ over all the iterations cannot exceed $2||\cT||$, thus the total time required to update feature 5 is $\cO(|\cT|(||\cT||+|X|^2))$. We arrived at the following result.
\end{proof}
 
\lemfeaturesB*
\begin{proof}
Recall that during the initialization phase, we store the depth of each node, both topological and with respect to the branch lengths, and we preprocess each tree to allow constant-time LCA queries. Note that reducing cherries in the trees does not affect the height of the nodes nor their ancestry relations, thus it suffices to preprocess the tree set only once at the beginning of the algorithm.

When we reduce a cherry $(x,y)$ in a tree $T$, this may affect the depth of $T$ as a consequence of the internal node $p(x)$ being deleted. We thus visit $T$ to update its depth (both topological and with the branch lengths), and after updating the depth of all trees, we update the maximum value over the whole set $\cT$ accordingly. In order to describe how to update the features $6_{d,t}-12_{d,t}$ we denote by $\textsf{old\_depth}^t(T)$ the topological depth of $T$ before reducing $(x,y)$, $\textsf{new\_depth}^t(T)$ its depth after reducing $(x,y)$, and use analogous notation for the distances $\textsf{old\_dist}^t$ and $\textsf{new\_dist}^t$ between two nodes of a tree and for the depth, the max depth, and distances with the branch lengths.

Whenever the value of the maximum topological depth changes, we update the value of feature $6_t$ for all the current cherries $(z,w)$ as $F^{6_{t}}_{(z,w)}=\frac{F^{6_{t}}_{(z,w)}\cdot\textsf{old\_max\_depth}^t}{\textsf{new\_max\_depth}^t}$. Since the maximum topological depth can change $\cO(|X|)$ times over all the iterations, and the total number of cherries at any moment is $\cO(|\cT||X|)$, these updates require $\cO(|\cT||X|^2)$ total time. We do the same for feature $6_d$, but since the maximum branch-length depth can change once per iteration in the worst case, this requires $\cO(||\cT||^2)$ time overall. 

Features $8_{d,t}-12_{d,t}$ must be then updated to remove the  contribution of $T$ for the cherries $(x,w)$ and $(w,x)$ for each leaf $w\neq x\neq y$ of $T$, because $x$ and $w$ will no longer appear together in $T$. 
These updates require $\cO(1)$ time per leaf and can be done as follows. 
We set 
\begin{equation}\label{eq:avg_update}
F^{8_{t}}_{(x,w)}=\frac{F^{8_{t}}_{(x,w)}\cdot|\cT^{x,w}|-\frac{\textsf{old\_dist}^{t}(x,w)}{\textsf{old\_depth}^t(T)}}{|\cT^{x,w}|-1}\end{equation}
and use analogous formulas to update $F^{8_{d}}_{(x,w)}$ and features $9_{d,t}-12_{d,t}$ for $(x,w)$ and $(w,x)$.

We finally need to further update all the features $6_{d,t}-12_{d,t}$ for all the cherries of a tree $T$ in which $(x,y)$ has been reduced and whose depth has changed, including the newly created ones. This can be done in $\cO(1)$ time per cherry per tree with opportune formulas of the form of Equation~\ref{eq:avg_update}. We have obtained the stated bound.
\end{proof}

\section{Random Forest Models}\label{app:random_forests}
\renewcommand{\arraystretch}{0.75}

\begin{table}[h]
\scriptsize
\caption{Feature importances of random forest trained on the biggest dataset ($M=1000$ and $\max L=100$) based on normal (a) and LGT (b) network data. Higher importance indicates that a feature has more effect on the trained model. The values sum up to one. The descriptions of the features are given in Table \ref{tab:cherry_features}.}\label{tab:feature_importances}
\begin{subtable}{.5\linewidth}
  \centering
    \caption{Normal}
\begin{tabular}{lcr}
\toprule
      Features &  &  Importance \\
\midrule
 Leaf distance & ($t$) &       0.190 \\
       Trivial &     &       0.155 \\
Cherry in tree &     &       0.143 \\
 Leaf distance & ($d$) &       0.122 \\
  LCA distance & ($t$) &       0.068 \\
     Depth $x/y$ & ($t$) &       0.050 \\
  Cherry depth & ($t$) &       0.047 \\
     Depth $x/y$ & ($d$) &       0.043 \\
  LCA distance & ($d$) &       0.028 \\
  Leaf depth $x$ & ($t$) &       0.023 \\
  Leaf depth $y$ & ($t$) &       0.023 \\
  Cherry depth & ($d$) &       0.020 \\
  Leaf depth $x$ & ($d$) &       0.020 \\
  Leaf depth $y$ & ($d$) &       0.020 \\
  Before/after &     &       0.015 \\
    Tree depth & ($d$) &       0.012 \\
    Tree depth & ($t$) &       0.011 \\
  New cherries &     &       0.006 \\
Leaves in tree &     &       0.004 \\
\bottomrule
\end{tabular}
    \end{subtable}%
    \begin{subtable}{.5\linewidth}
      \centering
        \caption{LGT}
\begin{tabular}{lcr}
\toprule
      Features &  &  Importance \\
\midrule
       Trivial &     &       0.184 \\
 Leaf distance & ($t$) &       0.162 \\
Cherry in tree &    &       0.146 \\
 Leaf distance & ($d$) &       0.114 \\
     Depth $x/y$ & ($t$) &       0.058 \\
  LCA distance & ($t$) &       0.056 \\
  Cherry depth & ($t$) &       0.045 \\
     Depth $x/y$ & ($d$) &       0.038 \\
  LCA distance & ($d$) &       0.032 \\
  Leaf depth $y$ & ($t$) &       0.024 \\
  Leaf depth $x$ & ($t$) &       0.023 \\
  Cherry depth & ($d$) &       0.023 \\
  Leaf depth $y$ & ($d$) &       0.022 \\
  Leaf depth $x$ & ($d$) &       0.022 \\
  Before/after &    &       0.016 \\
    Tree depth & ($d$) &       0.013 \\
    Tree depth & ($t$) &       0.011 \\
  New cherries &     &       0.006 \\
Leaves in tree &     &       0.003 \\
\bottomrule
\end{tabular}
    \end{subtable} 
\end{table}

\begin{table}
\scriptsize
\caption{Trained random forest models on different datasets for different combinations of $\max L$ (maximum number of leaves per network) and $M$ (number of networks). Each row in the table represents one model. For each model, the testing accuracy is given under ``Accuracy'', and the total number of data points retrieved from all $M$ networks is given under ``Num. data''. Each dataset is split for training and testing ($90\%-10\%$). The training duration for the random forest is given in column ``Training'' and the time needed to generate the training data is given in column ``Data gen.'', in hours per core (we used 16 cores in total).}\label{tab:ml_description}
\begin{subtable}{.8\linewidth}
  \centering
    \caption{Normal}
\begin{tabular}{lrcrcc}
\toprule
 $\max L$   &  $M$    & Accuracy &  Num. data & Training (min) & Data gen. (hour/core) \\
 \midrule
20  & 5    &      1.0 &        840 &               00:00 &                       00:00:12 \\
    & 10   &    0.994 &       1,804 &               00:00 &                       00:00:22 \\
    & 100  &    0.998 &      17,388 &               00:03 &                       00:04:19 \\
    & 500  &    0.994 &      73,168 &               00:16 &                       00:15:18 \\
    & 1000 &    0.993 &     151,308 &               00:42 &                       00:29:49 \\
50  & 5    &    0.994 &       3,580 &               00:00 &                       00:01:21 \\
    & 10   &    0.997 &       7,860 &               00:01 &                       00:02:22 \\
    & 100  &    0.996 &      53,988 &               00:11 &                       00:18:07 \\
    & 500  &    0.997 &     268,552 &               01:04 &                       01:31:18 \\
    & 1000 &    0.998 &     535,624 &               04:01 &                       02:56:21 \\
100 & 5    &      1.0 &       4,944 &               00:00 &                       00:01:13 \\
    & 10   &    0.999 &      12,444 &               00:01 &                       00:04:05 \\
    & 100  &    0.999 &     128,824 &               00:25 &                       00:41:54 \\
    & 500  &    0.999 &     676,768 &               04:21 &                       04:15:49 \\
    & 1000 &    0.999 &    1,362,220 &               12:10 &                       08:08:58 \\
\bottomrule
\end{tabular}
    \end{subtable}
    \begin{subtable}{.8\linewidth}
      \centering
        \caption{LGT}
\begin{tabular}{lrcrcc}
\toprule
 $\max L$   &  $M$    & Accuracy &  Num. data & Training (min) & Data gen. (hour/core) \\
\midrule
20  & 5    &    0.974 &        768 &               00:01 &                       00:00:19 \\
    & 10   &    0.994 &       1,548 &               00:02 &                       00:00:41 \\
    & 100  &    0.976 &      12,244 &               00:09 &                       00:04:20 \\
    & 500  &    0.975 &      58,900 &               00:24 &                       00:19:13 \\
    & 1000 &    0.975 &     118,104 &               00:27 &                       00:35:38 \\
50  & 5    &    0.997 &       2,952 &               00:01 &                       00:00:43 \\
    & 10   &    0.995 &       3,796 &               00:03 &                       00:01:01 \\
    & 100  &    0.995 &      44,116 &               00:23 &                       00:14:01 \\
    & 500  &    0.994 &     219,472 &               01:39 &                       01:06:45 \\
    & 1000 &    0.994 &     421,204 &               02:45 &                       02:10:45 \\
100 & 5    &    0.996 &       5,080 &               00:06 &                       00:01:23 \\
    & 10   &    0.996 &       7,540 &               00:05 &                       00:01:58 \\
    & 100  &    0.998 &     114,900 &               00:31 &                       00:34:25 \\
    & 500  &    0.998 &     605,652 &               04:44 &                       02:54:15 \\
    & 1000 &    0.998 &    1,175,628 &               10:23 &                       05:31:13 \\
\bottomrule
\end{tabular}
    \end{subtable} 
\end{table}

\begin{sidewaysfigure}
\centering
       \subfloat[Normal ML. Test accuracy = 0.815]
       {
        \includegraphics[height=7cm]{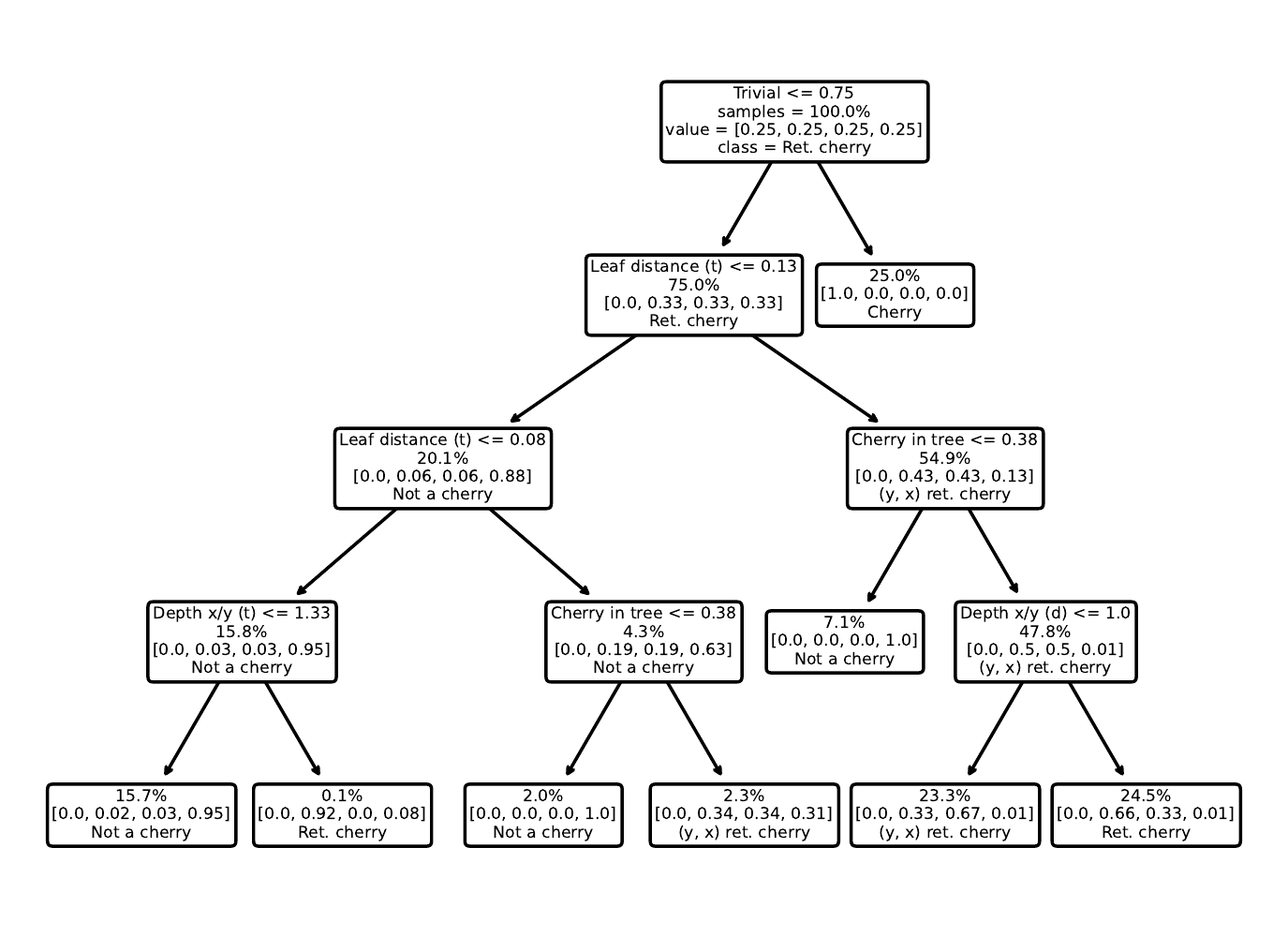}
       } 
       \\
        \subfloat[LGT ML. Test accuracy = 0.802]
        {
         \includegraphics[height=7cm]{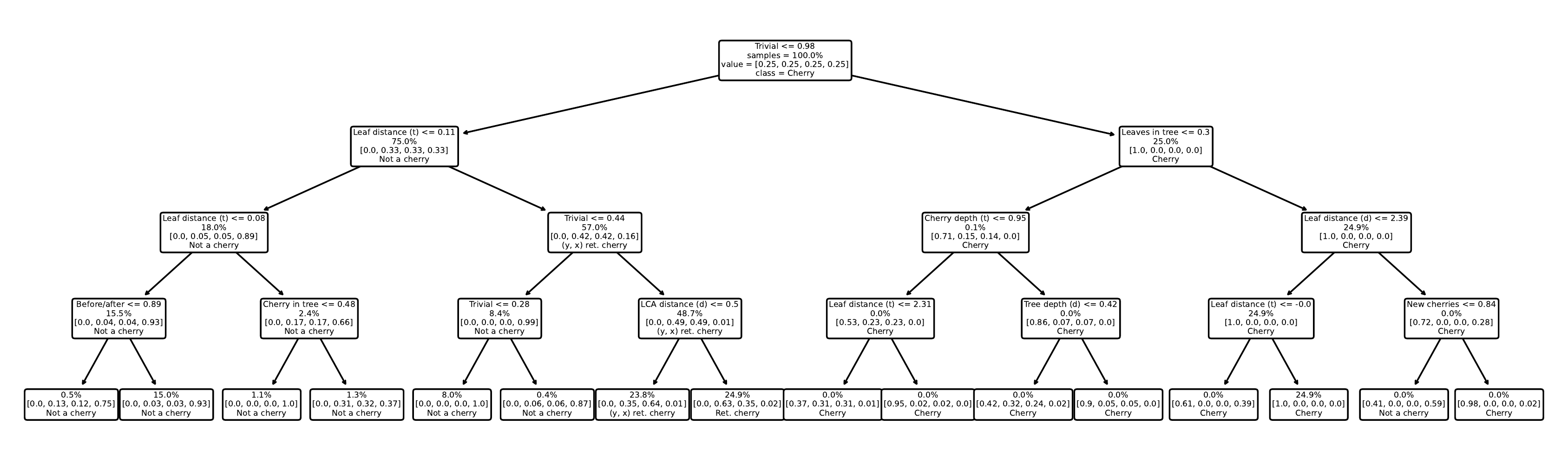}
        }
    \caption{Classification tree with depth 4 of (a) the normal data set and (b) the LGT data set. For each node in the trees, except for the terminal ones, the first line is the feature condition. If this condition is met by a data point, it traverses to the left child node, otherwise to the right one. In the terminal nodes this line is omitted as there is no condition given. In each node, as also indicated with labels in the root node, the second line `samples' is the proportional number of samples that follow the YES/NO conditions from the root to the parent of that node during the training process. The `value' list gives the proportion of data points in each class, compared to the sample of that node. The last line indicates the most dominant class of that node. If a data point reaches a terminal node, the observation will be classified as the indicated class.}
\label{fig:classificationTree}
 \end{sidewaysfigure}
\newpage
\section{Heuristic Performance of ML Models}\label{app:Ml_models_heatmaps}
\begin{figure}[h]
    \centering
        \subfloat[Normal ML]
        {
        \includegraphics[width=.5\columnwidth]{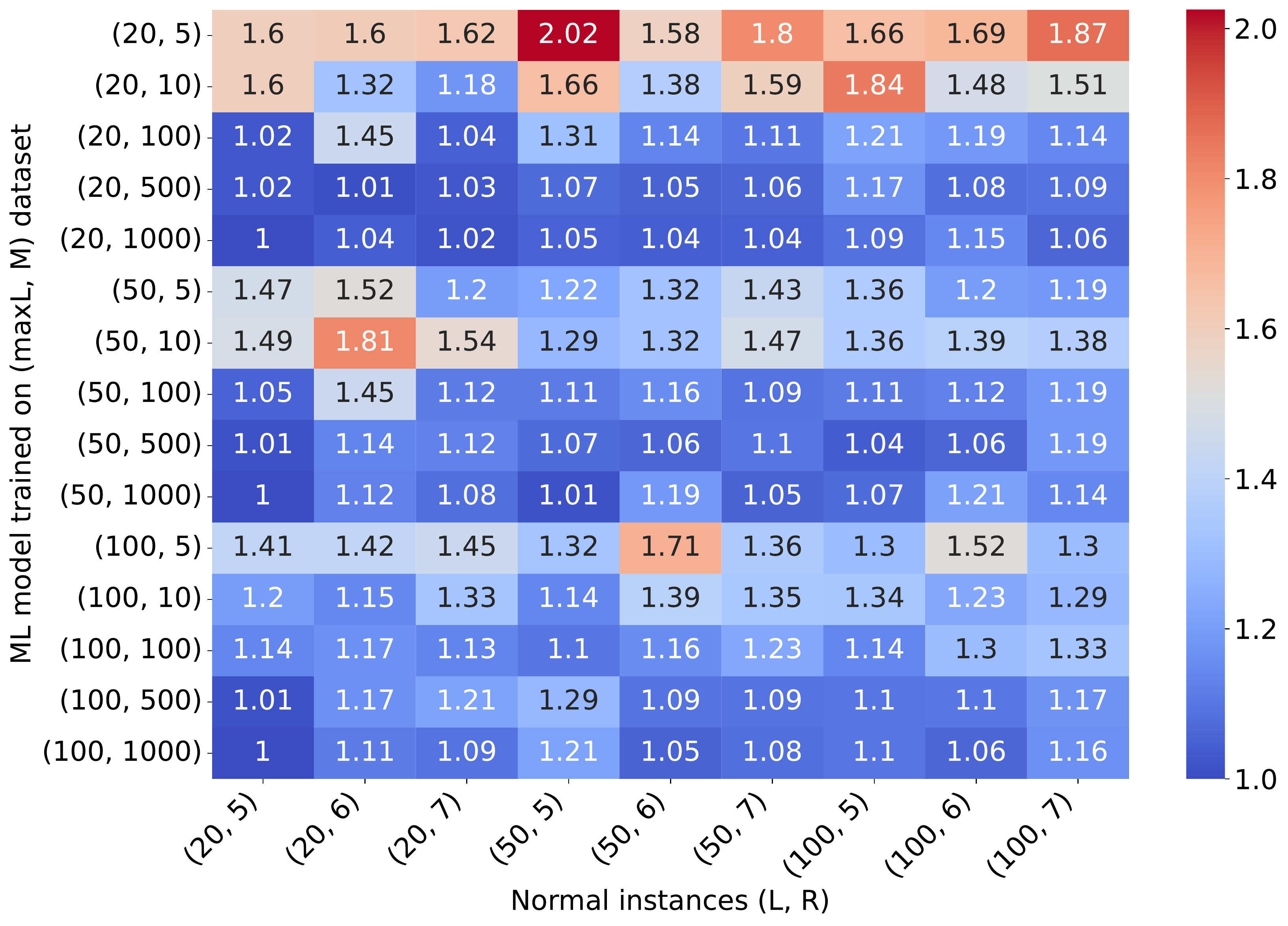}
        }
        \subfloat[LGT ML]
        {
        \includegraphics[width=.5\columnwidth]{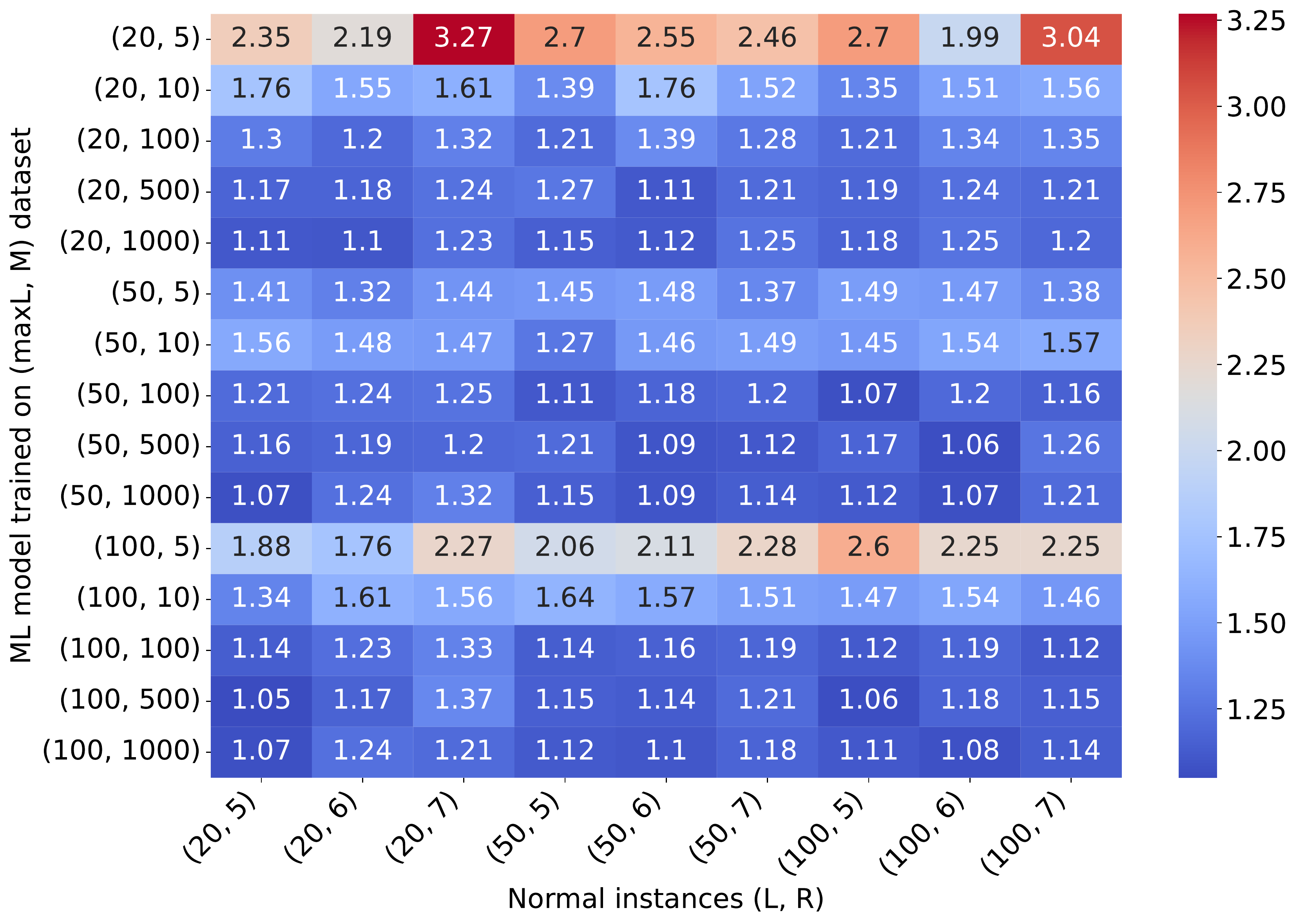}
        }
    \caption{Results for \ML{} on normal instances with the random forest model trained on each of the datasets given in Table~\ref{tab:ml_description}, where \textbf{(a)} gives the results when the ML model is trained on normal data, and \textbf{(b)} gives the results when the model is trained on LGT data. For each training dataset, identified by the parameter pair $(\max L, M)$, the value shown in the heatmap is the average, within each instance group, of the reticulation number found by \ML{} divided by the reference value. We used a group of 16 instances for each combination of parameters $L \in \{20, 50, 100\}$ and $R \in \{5, 6, 7\}$.}\label{fig:ml_heur_perf_norm} 
\end{figure}

\begin{figure}[h]
    \centering
      \subfloat[Normal ML]
      {
       \includegraphics[width=\columnwidth]{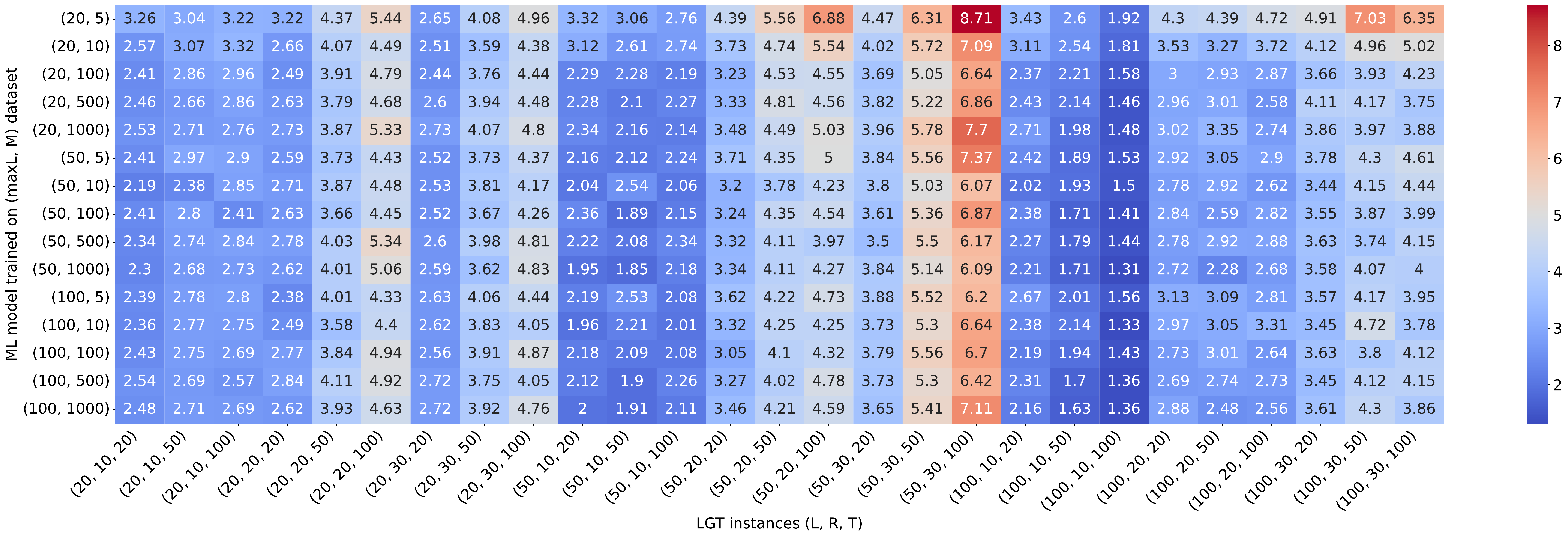}
      } \\
      \vspace{.5cm}
       \subfloat[LGT ML]
       {
        \includegraphics[width=\columnwidth]{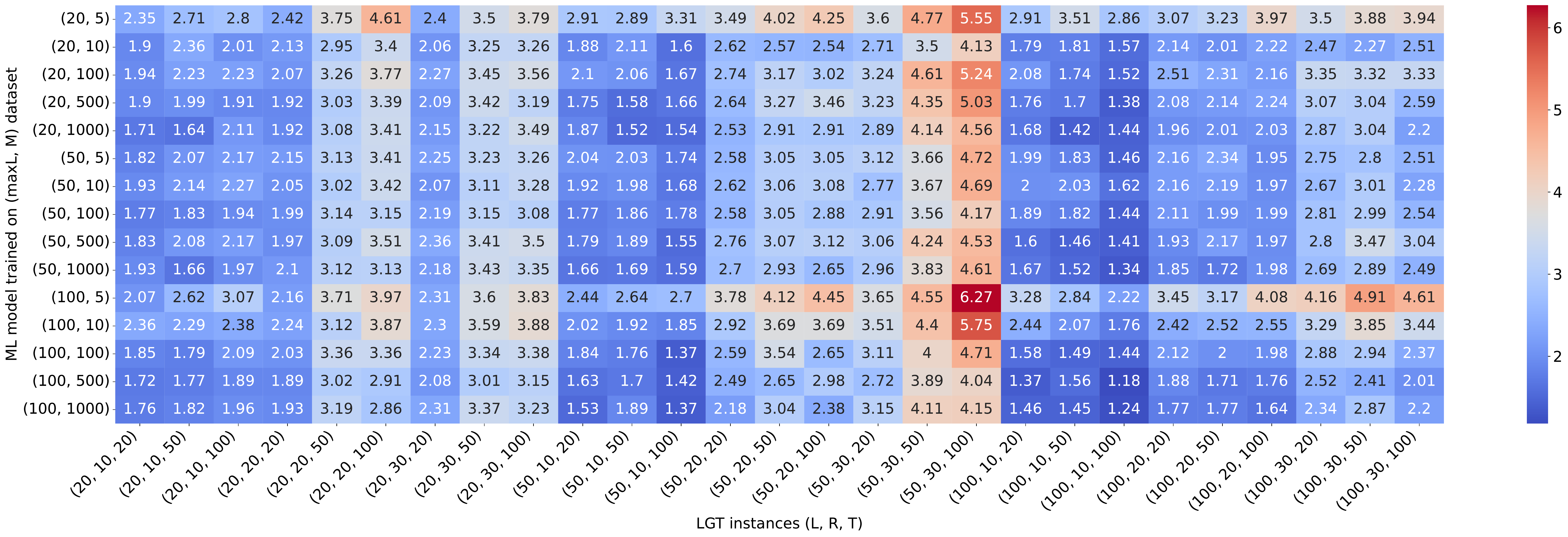}
       }
    \caption{Results for \ML{} on LGT instances for different training datasets, similar as description of Fig. \ref{fig:ml_heur_perf_norm}, with $L \in \{20, 50, 100\}$, $R \in \{10, 20, 30\}$ and $|\cT| \in \{20, 50, 100\}$.}\label{fig:ml_heur_perf_LGT}
\end{figure}

\begin{figure}[h]
    \centering
        \subfloat[Normal ML]
        {
        \includegraphics[width=\columnwidth]{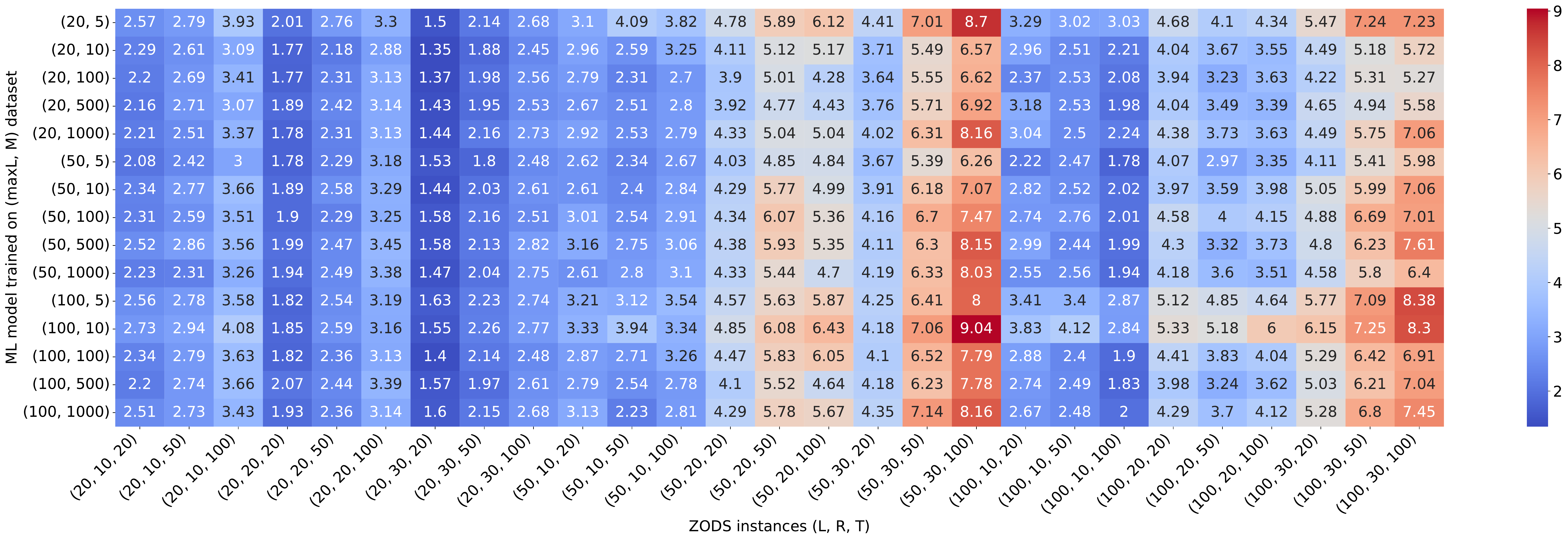}
        } \\
       \vspace{.5cm}
       \subfloat[LGT ML]
       {
       \includegraphics[width=\columnwidth]{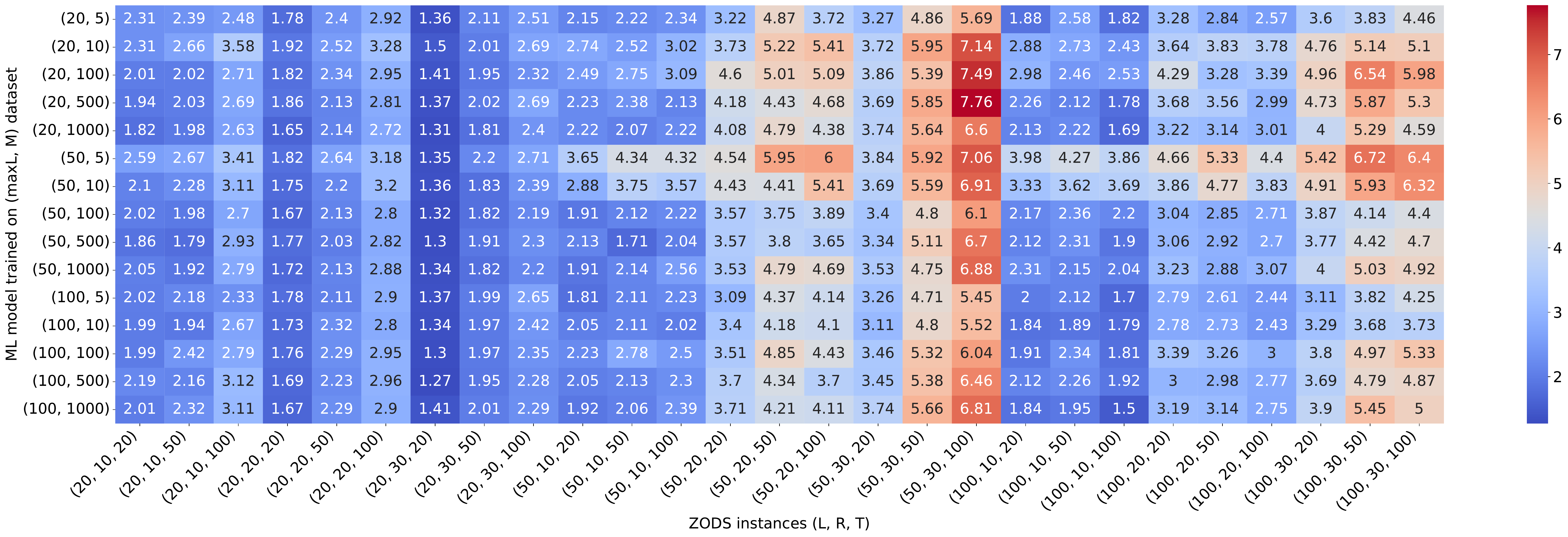}
       }
    \caption{Results for \ML{} on ZODS instances for different training datasets, similar as description of Fig. \ref{fig:ml_heur_perf_norm}, with $L \in \{20, 50, 100\}$, $R \in \{10, 20, 30\}$ and $|\cT| \in \{20, 50, 100\}$.}\label{fig:ml_heur_perf_ZODS}
\end{figure}
\clearpage

\bibliographystyle{plain}
\bibliography{bibliography.bib}
\end{document}